\theoremstyle{}
\newtheorem{theorem}{Theorem}
\newtheorem{lemma}{Lemma}
\newtheorem{definition}{Definition}
\newtheorem{corollary}{Corollary}
\newtheorem{example}{Example}
\newtheorem{remark}{Remark}
\newtheorem{construction}{Construction}
\newtheorem{proposition}{Proposition}
\newcommand{\tabcaption}{\def\@captype{table}\caption}
\begin{document}
\title{A Novel Transformation Approach of Shared-link Coded Caching Schemes for Multiaccess  Networks
\author{Minquan Cheng, Kai Wan,~\IEEEmembership{Member,~IEEE,} Dequan Liang,
Mingming Zhang,
and~Giuseppe Caire,~\IEEEmembership{Fellow,~IEEE}
\thanks{M. Cheng, D. Liang, and M. Zhang  are with Guangxi Key Lab of Multi-source Information Mining $\&$ Security, Guangxi Normal University,
Guilin 541004, China  (e-mail: chengqinshi@hotmail.com, dequan.liang@hotmail.com, ztw$\_$07@foxmail.com).}
\thanks{K. Wan and G. Caire are with the Electrical Engineering and Computer Science Department, Technische Universit\"{a}t Berlin,
10587 Berlin, Germany (e-mail: kai.wan@tu-berlin.de, caire@tu-berlin.de).  The work of K.~Wan and G.~Caire was partially funded by the European Research Council under the ERC Advanced Grant N. 789190, CARENET.}
}
}

\date{}
\maketitle

\begin{abstract}
This paper considers the multiaccess  coded caching systems formulated by Hachem et al., including a central server containing $N$ files connected to $K$ cache-less users through  an  error-free shared link, and $K$ cache-nodes, each equipped with a cache memory size of $M$ files. Each user has access to $L$ neighbouring cache-nodes with a cyclic wrap-around topology.
The coded caching scheme proposed by Hachem et al. suffers from the case that $L$ does not divide $K$, where the needed number of transmissions (a.k.a. load) is at most four times the load expression for the case where $L$ divides $K$.
  Our main contribution is to propose a novel {\it transformation} approach to smartly extend  the schemes satisfying some conditions for the well known shared-link caching systems to the multiaccess caching systems. Then we can get many coded caching schemes with different subpacketizations for multiaccess coded caching system. These resulting schemes have the maximum local caching gain (i.e., the cached contents stored at any $L$ neighbouring cache-nodes are different such that the number of retrieval packets by each user from the connected cache-nodes is maximal) and the same coded caching gain as the original schemes. Applying the transformation approach to the well-known  shared-link coded caching scheme proposed by Maddah-Ali and Niesen, we obtain a new multiaccess coded caching scheme that achieves the same load as the scheme of Hachem et al. but for any system parameters. Under the constraint of the cache placement used in this new multiaccess coded caching scheme, our delivery strategy is approximately optimal when $K$ is sufficiently large. Finally, we also show that the transmission load of the proposed scheme can be further reduced by compressing the multicast message.
\end{abstract}

\begin{IEEEkeywords}
Coded caching, multiaccess networks, placement delivery array.
\end{IEEEkeywords}
\section{Introduction}
\IEEEPARstart{W}{ireless} networks are increasingly under stress because of the ever-increasing large number of wireless devices consuming high-quality multimedia content (e.g., on-demand video streaming).  The high temporal variability of network traffic results in congestions during the peak traffic times and underutilization of the network during off-peak traffic times. Caching can effectively shift traffic from peak to off-peak times \cite{BBD}, by storing fractions of popular content at users' local memories  during the peak traffic times, such that   users can be partly served from their local caches, thereby reducing the network traffic.

Coded caching was originally proposed by Maddah-Ali and Niesen (MN) in \cite{MN} for the shared-link broadcast network, where a single server with access to a library containing $N$ equal-length files  is connected to $K$ users through a shared link. Each of the $K$ users  has a cache memory which can store up to $M$ files. A coded caching scheme operates in two phases. In the placement phase, the server populates the users' caches without knowledge of future demands.  In the delivery phase, each user demands one file.  According to users' demands and caches, the server broadcasts  coded messages through the shared link to all users such that each user's demand is satisfied. The goal is minimize the worst-case number of transmissions  normalized by the file size (referred to as {\it worst-case load} or just {\it load}) in the delivery phase among all possible demands.
The MN coded caching scheme   uses a combinatorial design in the placement phase, such that in the delivery phase each message broadcasted by the server can simultaneously satisfy multiple users' demands.  The achieved load of the MN scheme is
\begin{align}
R_{\text{MN}}=\frac{K(1- M/N)}{ KM/N+1}, \ \forall M=\frac{N t}{K}: t\in \left\{0, 1,\ldots,  K  \right\}.  \label{eq:MN}
\end{align}
When $N \geq K$, the MN scheme was showed to be optimal under uncoded placement (i.e., each user directly copies some bits of files in its cache)~\cite{WTP,YMA} and to be generally order optimal within a factor of $2$~\cite{yufactor2TIT2018}.

A main drawback of the MN scheme is its high subpacketization. The first coded caching scheme with reduced subpacketization was   proposed in~\cite{SJTLD} with a grouping strategy.
The authors in \cite{YCTC} proposed a class of caching schemes based on the concept of placement delivery array (PDA), and showed that the MN scheme is   a special PDA, referred to as MN PDA.
Based on the concept of PDA, various caching schemes were proposed, e.g.,~\cite{YCTC,CJWY,CJTY,CJYT,MW,ZCJ,YTCC,SZG}, in order to reduce the subpacketization of the MN scheme.

\subsection{Multiaccess  caching}
\begin{figure}
\centerline{\includegraphics[scale=0.8]{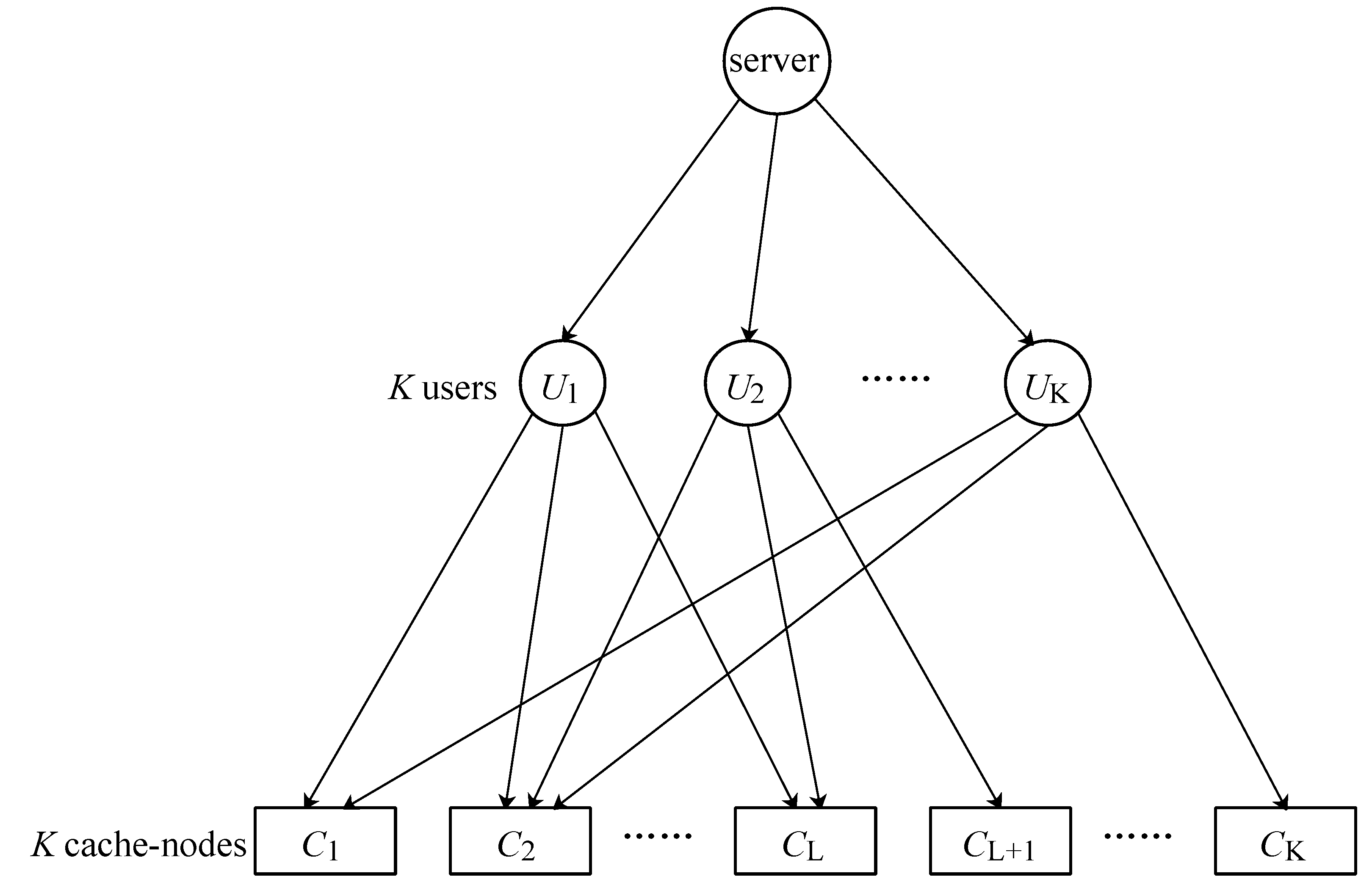}}
\caption{\small The $(K,L,M,N)$    multiaccess  coded caching system.}
\label{multi-access-system}
\end{figure}

Caching at the wireless edge nodes is a promising way to boost the
spatial and spectral  efficiency, for the sake of  reducing communication cost of wireless systems~\cite{FemtoCaching,LiuWirelesscaching}. Edge caches enable  storing Internet-based content, including web objects, videos and software updates.
 Compared to the end-user-caches, which are heavily limited by the storage size  of each user's device (e.g., a library size is usually much larger than the storage size of a mobile device) and only useful to  one user, the caches at edge devices such as local helpers, mobile edge
computing (MEC) servers and hotspots,  can be made much larger and can be accessed my multiple users.
 In this paper, we consider an ideal coded edge caching model with a line multiaccess topology, referred to as multiaccess  caching model (illustrated in Fig.~\ref{multi-access-system}), which was  originally introduced in   \cite{JHNS}.
 Different from  the shared-link caching system, in the   multiaccess  caching system, there are $K$ cache-nodes each of storage $M$ files, and $K$ cache-less users each of which can   access   $L$ cache-nodes in a cyclic wrap-around fashion, while also receiving the broadcast transmission directly from the server.
As assumed in   \cite{JHNS},   the   cache-nodes in the considered model are accessed at no load cost; that is, we count only the broadcasted load of the server.
This can be justified by the fact that the access to the  cache-nodes can be  offloaded  on a different network (e.g.,  a wifi offloading  for hotspots).


A multiaccess  coded caching scheme was proposed in \cite{JHNS}, which deals with the following two cases separately.
\begin{itemize}
\item When $L$ divides  $K$ (i.e., $L|K$), the authors in \cite{JHNS} proposed   a scheme that divides  the $K$ users into $L$ groups, where the MN scheme is then used in  each user group. This results in a total load equal to
\begin{align}
R_{\text{HKD}}=\frac{K(1-LM/N)}{ KM/N+1}, \ \forall M=\frac{N t}{K}: t\in \left\{0, 1,\ldots, \left\lfloor \frac{K}{L}\right\rfloor \right\}.  \label{eq:R_HKD}
\end{align}
 In this case, the multiplicative gap between the    resulting scheme and the MN scheme (with the memory ratio $\frac{LM}{N}$) is at most $L$.\footnote{\label{foot:gap with MN}
 Notice that if each user has an individual cache of size $LM$, then the MN scheme   would achieve the load
$
\frac{K(1 - LM/N)}{KLM/N + 1} .
$
Instead, if each user has an individual cache of size $ M$, then the MN scheme  would achieve the load in~\eqref{eq:MN}.
Therefore, the multiaccess  case model here where each user accesses to $L$ neighbouring cache-nodes,
  yields an intermediate performance in between the two extremes. }
 \item  When $L$ does not divide $K$, the authors showed that a load which is three times larger than the one in~\eqref{eq:R_HKD} can be achieved.
\end{itemize}
The authors in \cite{RK} proposed a new scheme based on erasure-correcting code which has lower load than the scheme in \cite{JHNS} when $K<\frac{KLM}{N}+L$. In addition, when $L \geq  \frac{K}{2}$, by extending the converse bound in \cite{WTP}, a converse bound under uncoded cache placement was also proposed in \cite{RK}. Under  the constraint of $N \geq K$, $L \geq  \frac{K}{2}$, and uncoded cache placement, the   scheme   in \cite{RK} was shown to be order optimal within a factor of $2$. For some special parameters of the multiaccess  coded caching problem, some improved schemes with reduced   load compared to the scheme in \cite{RK} were proposed in~\cite{SR,MARB,SPE,Structureindexcoding}.
The authors in~\cite{OG}  extended the linear topology multiaccess  coded caching scheme~\cite{JHNS} to two-dimensional cellular networks, where the mobile users move on a two-dimensional grid and can access the nearest cache-nodes   inside the grid.
\subsection{Contribution and paper organization}
This paper focuses on the multiaccess  caching system  in \cite{JHNS}.   We show that any PDA satisfying two conditions  defined in Section \ref{sec-proof of Theorem 1} (e.g., the MN PDA and the PDAs in \cite{YCTC,SZG,CJYT,YTCC}) can be used to construct a multiaccess coded caching scheme by a novel three-step transformation. The resulting scheme has the maximum local caching gain (i.e., the cached contents stored at any $L$ neighbouring cache-nodes are different such that each user can totally retrieve $LM$ files from the connected cache-nodes) and the same coded caching gain as the shared-link coded caching scheme realized by the original PDA.
 Interestingly, the load of the resulting multiaccess coded caching scheme obtained by using the MN PDA achieves the same load as in~\eqref{eq:R_HKD} for any system parameters, i.e., without constraint that $L$ divides $K$ in \cite{JHNS}. Under the constraint of the cache placement used in this  multiaccess coded caching scheme, we show that the delivery scheme is approximately optimal when $K$ is sufficiently large. Furthermore we can reduce the   load of our resulting schemes in delivery phase by further compressing the  multicast messages.

The rest of this paper is organized as follows. The multiaccess  coded caching model and some preliminary results  are introduced in Section \ref{sec-pre}. The main results of this paper are listed in
Section \ref{sec-main-result}. Section \ref{sec-proof of Theorem 1} provides the detailed construction of the proposed caching scheme. An further improved transformation approach is described in Section \ref{sec-further}. Finally, we conclude the paper in Section \ref{sec-conclusion}, and some proofs are provided in the Appendices.
\subsection*{Notations}
\begin{itemize}
\item Bold capital letter, bold lower case letter and curlicue  font  will be used to denote array, vector and set respectively.
We assume that all the sets are increasing ordered and $|\cdot|$ is used to represent the cardinality of a set or
the length of a vector;
\item For any positive integers $a$, $b$, $t$ with $a<b$ and $t\leq b $, and any nonnegative set $\mathcal{V}$,
\begin{itemize}
\item   let $[a:b]=\{a,a+1,\ldots,b\}$, especially $[1:b]$ be shorten by $[b]$, and
${[b]\choose t}=\{\mathcal{V}\ |\   \mathcal{V}\subseteq [b], |\mathcal{V}|=t\}$, i.e., ${[b]\choose t}$ is the collection of all $t$-sized subsets of $[b]$;
\item   Mod $(b,a)$ represents the modulo operation on positive integer $b$ with positive integer divisor $a$.  In this paper we let $\text{Mod}(b,a)\in \{1,\ldots,a \}$ (i.e., we let $ \text{Mod}(b,a)=a$ if $a$ divides $b$).
 $\mathcal{V}[h]$
represent the $h^{th}$ smallest element of $\mathcal{V}$, where $h\in[|\mathcal{V}|]$;
\item
Mod$(\mathcal{V},a)=\{\text{Mod}(\mathcal{V}[h],a)\ :\ h\in  [|\mathcal{V}|]\}$;
\item    $\mathcal{V}+a=\{\mathcal{V}[h]+a\ :\ h\in  [|\mathcal{V}|]\}$.
\end{itemize}
\item  the array  $[a;b]$ is written in a Matlab form, representing $\begin{bmatrix}
 a  \\
 b
\end{bmatrix} $.
\end{itemize}

\section{System Model  and  Related Works}\label{sec-pre}
In this section, we first introduce the shared-link caching system in~\cite{MN}, and the MN caching scheme from the viewpoint of placement delivery array (PDA) \cite{YCTC}. 
Then the multiaccess  caching model and previously known results are introduced.
\subsection{The original caching model}
\label{sub:ori model}
In the shared-link coded caching system~\cite{MN}, a server containing $N$ files with equal length in $\mathcal{W}=\{W_{1}, W_{2}, \ldots, $ $W_{N}\}$ connects through an error-free shared link to $K$ users in  $\{U_1,U_2,\ldots,U_K\}$ with $K\leq N$, and every user has a cache  which can store up to $M$ files for $0\leq M \leq N$.
An $F$-division $(K,M,N)$ coded caching scheme contains two phases.
\begin{itemize}
  \item Placement phase: Each file is divided into $F$ packets with equal size,\footnote{\label{foot:uncoded}In this paper, we only consider the uncoded cache placement.}
   and then  each user $U_k$ where $k\in [K]$ caches some packets  of each file, which is limited by its cache size $M$. Let $\mathcal{Z}_{U_k}$ denote the cache contents at user $U_k$, which is assumed to be known to the server. Notice that the placement phase is done without knowledge of later requests.
  \item Delivery phase: Each user randomly requests one file from the server. The requested file by user $U_k$ is represented by $W_{d_{U_k}}$, and the request vector by all users is denoted by $\mathbf{d}=(d_{U_1},d_{U_2},\ldots,d_{U_K})$. According to the cached contents and requests of all the users, the server transmits a broadcast message including
   $S_{{\bf d}}$   packets  to all users, such that each user's request can be satisfied.
\end{itemize}

In such system, the number of worst-case transmitted files (a.k.a. load) for all possible requests is expected to be as small as possible, which is defined as
\begin{align}
R=\max_{\mathbf{d}\in[N]^K}  \frac{ S_{\mathbf{d}}}{F}  . \label{eq:def of load}
\end{align}

For the shared-link coded caching problem,
 the authors in \cite{YCTC} proposed a class of solutions based on the concept of placement delivery array (PDA) which is defined as follows.
\begin{definition}\rm(\cite{YCTC})
\label{def-PDA}
For  positive integers $K,F, Z$ and $S$, an $F\times K$ array  $\mathbf{P}=(p_{j,k})_{j\in[F] ,k\in[K]}$, composed of a specific symbol $``*"$  and $S$ positive integers from $[S]$, is called a $(K,F,Z,S)$  PDA  if it satisfies the following conditions:
\begin{enumerate}
  \item [C$1$.] The symbol $``*"$ appears $Z$ times in each column;
  \item [C$2$.] Each integer occurs at least once in the array;
  \item [C$3$.] For any two distinct entries $p_{j_1,k_1}$ and $p_{j_2,k_2}$,    $p_{j_1,k_1}=p_{j_2,k_2}=s$ is an integer  only if
  \begin{enumerate}
     \item [a.] $j_1\ne j_2$, $k_1\ne k_2$, i.e., they lie in distinct rows and distinct columns; and
     \item [b.] $p_{j_1,k_2}=p_{j_2,k_1}=*$, i.e., the corresponding $2\times 2$  subarray formed by rows $j_1,j_2$ and columns $k_1,k_2$ must be of the following form
  \begin{align*}
    \left(\begin{array}{cc}
      s & *\\
      * & s
    \end{array}\right)~\textrm{or}~
    \left(\begin{array}{cc}
      * & s\\
      s & *
    \end{array}\right).
  \end{align*}
   \end{enumerate}
\end{enumerate}
\end{definition}
 In a $(K,F,Z,S)$ PDA $\mathbf{P}$, each column represents one user's cached contents, i.e., if $p_{j,k}=*$, then user $U_k$ has cached the $j^{th}$ packet of all the files. If $p_{j,k}=s$ is an integer, it means that the $j^{th}$ packet of all the files is not stored by user $U_k$. Then the XOR of the requested packets indicated by $s$ is broadcasted by the server at time slot $s$. The property C2 of Definition \ref{def-PDA} implies that the number of signals transmitted by the server is exactly $S$. So the   load is $R=\frac{S}{F}$. Finally the property C3 of Definition \ref{def-PDA} guarantees that each user can get the requested packet, since it has cached all the other packets in the signal except its requested one.
  Hence, the following lemma was proved by  Yan et al. in \cite{YCTC}.
\begin{lemma}\rm(\cite{YCTC})
\label{le-Fundamental}Using Algorithm \ref{alg:PDA}, an $F$-division caching scheme for the $(K,M,N)$ caching system can be realized by a $(K,F,Z,S)$ PDA  with $\frac{M}{N}=\frac{Z}{F}$. Each user can decode his requested file correctly for any request ${\bf d}$ at the rate $R=\frac{S}{F}$.
\hfill $\square$
\end{lemma}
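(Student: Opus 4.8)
The plan is to make Algorithm~\ref{alg:PDA} explicit and then check the two claims separately: that the induced placement obeys $M/N=Z/F$, and that the delivery phase lets every user decode its file at load $S/F$. First I would set up the placement phase from the array $\mathbf{P}=(p_{j,k})$. Split each file $W_n$, $n\in[N]$, into $F$ equal-size packets $W_n=\{W_{n,j}:j\in[F]\}$, and let user $U_k$ cache exactly the packets indexed by the starred rows of its column, i.e. $\mathcal{Z}_{U_k}=\{W_{n,j}:p_{j,k}=*,\ n\in[N]\}$. By condition C1 every column carries exactly $Z$ stars, so each user stores $ZN$ of the $FN$ packets in the library, which is $MF$ packets when $M/N=Z/F$; this part is immediate.

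Next I would describe the delivery phase. For each integer $s\in[S]$ the server broadcasts the single coded packet $\bigoplus_{(j,k):\,p_{j,k}=s}W_{d_{U_k},j}$. By condition C2 each symbol in $[S]$ really occurs in $\mathbf{P}$, so there are exactly $S$ transmitted packets, each of size one file packet, regardless of the demand vector $\mathbf{d}$; hence the (worst-case) load is $R=S/F$. The packets of a requested file $W_{d_{U_k}}$ whose row index $j$ satisfies $p_{j,k}=*$ are already in $\mathcal{Z}_{U_k}$, so only the packets with $p_{j,k}$ an integer still need to be recovered from these $S$ transmissions.

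The heart of the argument — and the only place where condition C3 is used — is decodability. Fix a user $U_k$ and a missing packet $W_{d_{U_k},j}$, so $p_{j,k}=s$ for some integer $s$; this packet appears once in the slot-$s$ transmission. Every other summand there is $W_{d_{U_{k'}},j'}$ with $p_{j',k'}=s$ and $(j',k')\ne(j,k)$. Condition C3(a) forces $k'\ne k$ and $j'\ne j$, and then C3(b) forces $p_{j',k}=*$, i.e. $U_k$ has cached the $j'$-th packet of \emph{every} file, in particular $W_{d_{U_{k'}},j'}$. So $U_k$ can subtract off all interfering summands and obtain $W_{d_{U_k},j}$; repeating over every slot whose symbol occurs in column $k$ reconstructs all of $W_{d_{U_k}}$. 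I do not expect a genuine obstacle: once the dictionary ``star $=$ cached packet, integer $s=$ the $s$-th transmission'' is fixed, the argument is pure bookkeeping, the one point that must be tracked carefully being that C3(b) supplies a star located in the decoding user's column but in the row of the interfering packet, which is exactly what makes interference cancellation possible.
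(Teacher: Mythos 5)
Your proposal is correct and matches the argument the paper gives (informally, in the paragraph preceding the lemma, since the result is cited from the PDA paper): stars define the placement with C1 giving $M/N=Z/F$, C2 gives exactly $S$ transmissions hence load $S/F$, and C3 guarantees each user can cancel all interfering summands in its slot because C3(b) places a star in its own column at each interfering row. No gaps; the bookkeeping is exactly as the paper intends.
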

\begin{algorithm}[htb]
\caption{Shared-link coded caching scheme  based on the PDA in \cite{YCTC}}\label{alg:PDA}
\begin{algorithmic}[1]
\Procedure {Placement}{$\mathbf{P}$, $\mathcal{W}$}
\State Split each file $W_n\in\mathcal{W}$ into $F$ packets, i.e., $W_{n}=(W_{n,j}\ :\ j\in [F])$.
\For{$k\in [K]$}
\State $\mathcal{Z}_{U_{k}}\leftarrow\{W_{n,j}\ |\ p_{j,k}=*, \forall~n\in[N]\}$
\EndFor
\EndProcedure
\Procedure{Delivery}{$\mathbf{P}, \mathcal{W},{\bf d}$}
\For{$s=1,2,\cdots,S$}
\State  Server sends $\bigoplus_{p_{j,k}=s,j\in[F],k\in[K]}W_{d_{U_k},j}$.
\EndFor
\EndProcedure
\end{algorithmic}
\end{algorithm}

Let us briefly introduce the MN coded caching scheme in~\cite{MN} from the viewpoint of   MN PDA, where the resulting PDA is referred to as MN PDA. For any integer $t\in[K]$,  we let $F={K\choose t}$. We arrange all the subsets with size $t+1$ of $[K]$ in the lexicographic order and define $\phi(\mathcal{S})$ to be its order for any subset $\mathcal{S}$ of size $t+1$. Clearly, $\phi$ is a bijection from ${[K]\choose t+1}$ to $[{K\choose t+1}]$.
Then, an MN PDA is defined as a ${K\choose t}\times K$ array $\mathbf{P}=(p_{\mathcal{T},k})_{\mathcal{T}\in {[K]\choose t},k\in [K]}$
by
\begin{align}\label{Eqn_Def_AN}
p_{\mathcal{T},k}=\left\{\begin{array}{cc}
\phi(\mathcal{T}\cup\{k\}), & \mbox{if}~k\notin\mathcal{T}\\
*, & \mbox{otherwise}
\end{array}
\right.
\end{align}
where the rows are labelled by all the subsets $\mathcal{T}\in {[K]\choose t}$. Thus, by the definition of PDA, the achieved load of the MN PDA is as follows.
\begin{lemma}\rm(MN PDA\cite{MN})
\label{le-MN}
For any positive integers $K$ and $t$ with $t<K$, there exists a $\left(K,{K\choose t}, {K-1\choose t-1}, {K\choose t+1}\right)$ PDA, which gives a ${K\choose t}$-division $(K,M,N)$ coded caching scheme for the shared-link caching system with the memory ratio $\frac{M}{N}=\frac{t}{K}$ and   load $R=\frac{K-t}{t+1}$.
\hfill $\square$
\end{lemma}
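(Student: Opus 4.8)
The plan is to verify directly that the array $\mathbf{P}$ defined in~\eqref{Eqn_Def_AN} satisfies the three conditions C1--C3 of Definition~\ref{def-PDA} with parameters $\left(K,\binom{K}{t},\binom{K-1}{t-1},\binom{K}{t+1}\right)$, and then to invoke Lemma~\ref{le-Fundamental}. Recall that $\phi$ is a bijection from $\binom{[K]}{t+1}$ to $\left[\binom{K}{t+1}\right]$, a fact that will be used repeatedly.

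First I would check C1. In column $k$, the entry $p_{\mathcal{T},k}$ equals $*$ precisely when $k\in\mathcal{T}$; the number of $t$-subsets $\mathcal{T}$ of $[K]$ containing the fixed element $k$ is $\binom{K-1}{t-1}$, so the star count per column is $Z=\binom{K-1}{t-1}$, independent of $k$. Next, for C2, fix an integer $s$ and set $\mathcal{S}=\phi^{-1}(s)\in\binom{[K]}{t+1}$; for every $k\in\mathcal{S}$ the entry in row $\mathcal{S}\setminus\{k\}$ and column $k$ equals $\phi\big((\mathcal{S}\setminus\{k\})\cup\{k\}\big)=\phi(\mathcal{S})=s$, so each of the $\binom{K}{t+1}$ integers occurs (in fact exactly $t+1$ times), giving $S=\binom{K}{t+1}$.

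The heart of the argument is C3. Suppose $p_{\mathcal{T}_1,k_1}=p_{\mathcal{T}_2,k_2}=s$ is an integer with $(\mathcal{T}_1,k_1)\neq(\mathcal{T}_2,k_2)$. By definition this forces $k_1\notin\mathcal{T}_1$, $k_2\notin\mathcal{T}_2$, and $\phi(\mathcal{T}_1\cup\{k_1\})=\phi(\mathcal{T}_2\cup\{k_2\})=s$; injectivity of $\phi$ then yields $\mathcal{S}:=\mathcal{T}_1\cup\{k_1\}=\mathcal{T}_2\cup\{k_2\}$, a set of size $t+1$. If $k_1=k_2$ then $\mathcal{T}_1=\mathcal{S}\setminus\{k_1\}=\mathcal{S}\setminus\{k_2\}=\mathcal{T}_2$, contradicting $(\mathcal{T}_1,k_1)\neq(\mathcal{T}_2,k_2)$; and if $\mathcal{T}_1=\mathcal{T}_2$ then, since $k_1$ and $k_2$ are the unique elements of $\mathcal{S}$ lying outside this common set, $k_1=k_2$, again a contradiction. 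Hence $k_1\neq k_2$ and $\mathcal{T}_1\neq\mathcal{T}_2$, which is C3a. For C3b, note that $k_2\in\mathcal{S}=\mathcal{T}_1\cup\{k_1\}$ together with $k_2\neq k_1$ forces $k_2\in\mathcal{T}_1$, so $p_{\mathcal{T}_1,k_2}=*$; symmetrically $k_1\in\mathcal{T}_2$, so $p_{\mathcal{T}_2,k_1}=*$, establishing the required $2\times2$ pattern.

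Finally, having shown $\mathbf{P}$ is a $\left(K,\binom{K}{t},\binom{K-1}{t-1},\binom{K}{t+1}\right)$ PDA, Lemma~\ref{le-Fundamental} produces a $\binom{K}{t}$-division $(K,M,N)$ scheme with $\frac{M}{N}=\frac{Z}{F}=\binom{K-1}{t-1}\big/\binom{K}{t}=\frac{t}{K}$ and load $R=\frac{S}{F}=\binom{K}{t+1}\big/\binom{K}{t}=\frac{K-t}{t+1}$. I do not anticipate a genuine obstacle here---everything is bookkeeping with the bijection $\phi$---the one point requiring care is the C3 verification, where one must use injectivity of $\phi$ to collapse the two unions to a single $(t+1)$-set $\mathcal{S}$ and then argue that the two ``missing'' elements $k_1,k_2$ land in the opposite rows.
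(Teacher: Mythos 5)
Your proof is correct and follows exactly the route the paper intends: the paper states this lemma as an immediate consequence of the construction in~\eqref{Eqn_Def_AN} together with Definition~\ref{def-PDA} and Lemma~\ref{le-Fundamental}, and your verification of C1--C3 (in particular the C3 argument via injectivity of $\phi$ and the observation that $k_2\in\mathcal{T}_1$ and $k_1\in\mathcal{T}_2$) is precisely the standard bookkeeping being omitted. The final parameter computations $\frac{Z}{F}=\frac{t}{K}$ and $\frac{S}{F}=\frac{K-t}{t+1}$ are also correct.
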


\begin{example}\label{MN-pda}
\rm
We then illustrate the MN PDA by the following example where $N=K=4$ and  $M=2$.
By the construction of the MN PDA, we have  the following   $(4,6,3,4)$ PDA.
\begin{align}
\label{eq-PDA-6-4}
\mathbf{P}=\left(\begin{array}{cccc}
*	&	*	&	1	&	2	\\
*	&	1	&	*	&	3	\\
*	&	2	&	3	&	*	\\
1	&	*	&	*	&	4	\\
2	&	*	&	4	&	*	\\
3	&	4	&	*	&	*	
\end{array}\right).
\end{align}
Using Algorithm \ref{alg:PDA}, the detailed caching scheme is as follows.
\begin{itemize}
   \item \textbf{Placement Phase}: From Line 2 we have $W_n=(W_{n,1},W_{n,2},W_{n,3},W_{n,4},W_{n,5},W_{n,6})$ where  $n\in [4]$. Then by Lines 3-5, the users' caches are
       \begin{align*}
       \mathcal{Z}_{U_1}=\left\{W_{n,1},W_{n,2},W_{n,3}\ :\ n\in[4]\right\};\ \ \ \ \ \ \
       \mathcal{Z}_{U_2}=\left\{W_{n,1},W_{n,4},W_{n,5}\ :\ n\in[4]\right\}; \\
       \mathcal{Z}_{U_3}=\left\{W_{n,2},W_{n,4},W_{n,6}\ :\ n\in[4]\right\};\ \ \ \ \ \ \
       \mathcal{Z}_{U4}=\left\{W_{n,3},W_{n,5},W_{n,6}\ :\ n\in[4]\right\}.
       \end{align*}
   \item \textbf{Delivery Phase}: Assume that the request vector is $\mathbf{d}=(1,2,3,4)$. By the transmitting process by Lines 8-10,  the server transmits the following multicast messages with total load $R=\frac{4}{6}=\frac{2}{3}$.
   \begin{table}[!htp]
  \normalsize{
  \begin{tabular}{|c|c|}
\hline
   Time Slot& Transmitted Signnal  \\
\hline
   $1$&$W_{1,4}\oplus W_{2,2}\oplus W_{3,1}$\\ \hline
   $2$&$W_{1,5}\oplus W_{2,3}\oplus W_{4,1}$\\ \hline
  $3$& $W_{1,6}\oplus W_{3,3}\oplus W_{4,2}$\\ \hline
   $4$& $W_{2,6}\oplus W_{3,5}\oplus W_{4,4}$\\ \hline
  \end{tabular}}\centering
  \caption{Delivery steps in Example \ref{MN-pda} }\label{table1}
\end{table}
\end{itemize}
\hfill $\square$
\end{example}

\subsection{Multiaccess  coded caching model}
\label{sub:multiaccess model}
We then introduce the $(K,L,M,N)$    multiaccess  coded caching problem in \cite{JHNS} (as illustrated in Fig.~\ref{multi-access-system}), containing  a server with  a set of $N$ equal-length files  (denoted by $\mathcal{W}=\{W_{1},  \ldots,   W_{N}\}$), $K$ cache-nodes (denoted by $ C_1, \ldots,C_{K} $), and $K\leq N$ users (denoted by $ U_{1},  \ldots, U_{K} $). Each cache-node has a memory size of $M$ files where $0\leq M \leq  \frac{N}{L}$. Each user is connected to $L$ neighbouring cache-nodes in a cyclic wrap-around fashion, where user $U_k$ can access the cache-nodes $C_{k},C_{\text{Mod}(k+1,K)}, \ldots, C_{\text{Mod}(k+L-1,K)}$ for each $k\in [K]$.
Each user  is also connected via an error-free shared link to the server. In this paper, we assume that the communication bottleneck is on the shared link from the server to the users; thus we assume that each user can  retrieve   the cached contents from its connected cache-nodes without any cost.

An $F$-division $(K,L,M,N)$ multiaccess  coded caching scheme runs in two phases,
\begin{itemize}
\item Placement phase: each file is divided into $F$ packets of equal size, and then each cache-node $C_k$ where $k\in[K]$, directly caches some packets of each file, which is limited by its cache size $M$. Let $\mathcal{Z}_{C_k}$ denote the cache contents at cache-node $C_k$. The placement phase is also done without knowledge of later requests. Each user $U_{k}$ where $k \in [K]$ can retrieve   the contents cached at the $L$ neighbouring cache-nodes in a cyclic wrap-around fashion. Let $\mathcal{Z}_{U_k}$ denote the retrievable cache contents  by user $U_k$. 
\item Delivery phase: each user randomly requests one file.  According to the request vector $ \mathbf{d}=(d_{U_1},d_{U_2},\ldots,d_{U_K})$ and the cached contents in the cache-nodes, the server transmits $S_{{\bf d}}$ multicast messages to all users, such that each user's request can be satisfied.
\end{itemize}

We aim  to design a multiaccess  coded caching scheme with minimum   worst-case  load as defined in~\eqref{eq:def of load}.

Note that any uncoded cache placement (as described above), we can divide each file $W_{n}$, $n\in [N]$ into  subfiles, $W_{n}=\{W_{n,\mathcal{T}}:\mathcal{T} \subseteq [K] \}$. Subfile $W_{n,\mathcal{T}}$ represents the set of packets of $W_n$ cached by cache-nodes $C_k$  where $k\in \mathcal{T}$.

The first multiaccess  coded caching scheme was proposed in \cite{JHNS},   where the following result was proved.
\begin{lemma}[\cite{JHNS}]\rm
\label{le-JHNS}
For the   $(K,L,M,N)$    multiaccess  coded caching problem,
\begin{itemize}
\item when $L|K$, the lower convex envelope of the memory-load tradeoff points
\begin{align}
(M,R_{\text{HKD}})= \left(\frac{N t}{K}, \frac{K-t L}{t+1} \right), \ \forall t\in \left\{0,1, \ldots,  \left\lfloor
\frac{K}{L} \right\rfloor  \right\} \label{eq:division load}
\end{align}
and $(M,R_{\text{HKD}})=\left(\frac{N}{L},0 \right)$, is achievable.
\item When $L\nmid K$, the lower convex envelope of the memory-load tradeoff points
\begin{align}
(M,R_{\text{HKD}})= \left(\frac{N t}{K}, \frac{K-  t}{t+1} \right), \ \forall t\in \left\{0,1, \ldots,  \left\lfloor
\frac{K}{2L} \right\rfloor  \right\}
\end{align}
and $(M,R_{\text{HKD}})=\left(\frac{N}{L},0 \right)$, is achievable.
\end{itemize}
\hfill $\square$
 \end{lemma}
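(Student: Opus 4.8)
Throughout I only establish the corner points; the stated lower convex envelope then follows from the routine memory-sharing argument (split each file into two parts whose sizes are the desired convex weights, run the schemes for the two neighbouring corner values of $t$ on the two parts, and add the memories and loads). The two cases are handled separately: the divisible case $L\mid K$ carries essentially all of the structural content, while the case $L\nmid K$ is a coarser, lossier reduction.

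\emph{The case $L\mid K$.} Put $g=K/L$ and index cache-nodes by pairs: the node at position $iL+r$ is the $r$-th node of ``block'' $i$, for $i\in\{0,\dots,g-1\}$, $r\in\{0,\dots,L-1\}$. Split each file into $\binom{g}{t}$ MN-packets $W_{n,\mathcal A}$ with $\mathcal A\in\binom{\{0,\dots,g-1\}}{t}$, and split each such packet further into $L$ equal \emph{levels} $W_{n,\mathcal A,s}$, $s\in\{0,\dots,L-1\}$, so each file has $L\binom{g}{t}$ sub-packets. Placement: node $iL+r$ stores $W_{n,\mathcal A,r}$ for every $\mathcal A\ni i$ and every $n$; then each node stores $\binom{g-1}{t-1}$ sub-packets per file, a fraction $\binom{g-1}{t-1}/\bigl(L\binom{g}{t}\bigr)=t/K$, matching $M=Nt/K$. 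The ``full $LM$'' property is then immediate: the $L$ consecutive nodes a user accesses either form one block (giving all $L$ levels of the packets $W_{n,\mathcal A}$ with $\mathcal A\ni i$) or split into the top $L-r'$ levels of block $i$ and the bottom $r'$ levels of block $i+1$; in either case no sub-packet is retrieved twice, so each user retrieves $L\binom{g-1}{t-1}$ distinct sub-packets, i.e.\ a fraction $tL/K$, of each file.

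\emph{Delivery when $L\mid K$.} Group the users by residue modulo $L$: group $r'$ is $V_i:=U_{iL+r'}$, $i\in\{0,\dots,g-1\}$. The key observation is that, for each group $r'$ and each level $s$, the sub-packets $\{W_{n,\mathcal A,s}\}_{\mathcal A}$ are, from the viewpoint of these $g$ users, cached exactly as in an MN placement for $g$ users with parameter $t$: if $s\ge r'$ then $V_i$ retrieves $W_{n,\mathcal A,s}$ (from node $iL+s$) iff $i\in\mathcal A$, and if $s<r'$ then $V_i$ retrieves it (from node $(i+1)L+s$, indices mod $K$) iff $(i+1)\bmod g\in\mathcal A$, which is the same MN placement after the cyclic relabelling $i\mapsto(i+1)\bmod g$. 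Hence for each of the $L$ groups and each of the $L$ levels I run one MN delivery round (Lemma~\ref{le-MN}) on the corresponding $g$-user configuration with the induced demands; property C3 of the MN PDA (Definition~\ref{def-PDA}) guarantees decodability within each round, and each user, lying in a single group, collects all $L$ levels of all $\binom{g}{t}$ packets of its requested file. Each round sends $\binom{g}{t+1}$ coded symbols of size $1/\bigl(L\binom{g}{t}\bigr)$ of a file, so the total load is $L\cdot L\cdot\binom{g}{t+1}\big/\bigl(L\binom{g}{t}\bigr)=L(g-t)/(t+1)=(K-tL)/(t+1)$, as in~\eqref{eq:division load}. The endpoints $t=0$ (uncoded, load $K$) and $t=g=K/L$ (load $0$, $M=N/L$) give the remaining listed points.

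\emph{The case $L\nmid K$, and the main obstacle.} Now the cyclic arrangement of $K$ cache-nodes cannot be tiled by arcs of length $L$, so the level construction above breaks. The plan is to keep only a sub-configuration with enough slack: select $t$ pairwise non-adjacent arcs of $L$ consecutive cache-nodes whose cyclic gaps are all at least $L$ — possible precisely when $2tL\le K$, which is the origin of the restriction $t\le\lfloor K/(2L)\rfloor$ — and run an MN-type scheme whose packet coordinates live on these $t$ arcs, using the remaining cache-nodes only to make neighbouring contents differ cheaply. Since the arcs cover a $tL/K$ fraction of the nodes rather than tiling them, the local caching gain is diluted and one recovers the weaker $(K-t)/(t+1)$ in place of $(K-tL)/(t+1)$, with $(N/L,0)$ again obtained by caching the whole library over any $L$-window. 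The substantive difficulty is exactly here: producing an explicit placement on the non-tiling cycle, checking that the induced per-group, per-level configurations still satisfy PDA property C3 so the MN-style rounds decode, and bookkeeping the unequal group sizes and wrap-around. The case $L\mid K$ is, by contrast, a routine (if notation-heavy) verification once the level-splitting and residue-class grouping are set up.
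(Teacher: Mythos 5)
This lemma is imported from~\cite{JHNS}; the paper states it without proof, so there is no internal argument to compare against and your proposal has to stand on its own. Your treatment of the case $L\mid K$ does: splitting each MN-packet $W_{n,\mathcal A}$ into $L$ levels indexed by the within-block position of a cache-node, and observing that for every residue class $r'$ and every level $s$ the induced retrievable-content pattern on the $g=K/L$ users of that group is an MN placement up to a cyclic relabelling of the index sets, is exactly the structure needed to make the ``$L$ groups, each running MN'' idea of~\cite{JHNS} rigorous with a single consistent cache placement. The no-duplication claim (each of the $L$ accessed nodes contributes a distinct level) and the load accounting $L^2\binom{g}{t+1}\big/\bigl(L\binom{g}{t}\bigr)=(K-tL)/(t+1)$ are both correct, and the memory-sharing remark disposes of the convex envelope.

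The case $L\nmid K$, however, is not proved. What you offer there is a plan --- select $t$ pairwise far-apart arcs and run ``an MN-type scheme whose packet coordinates live on these $t$ arcs'' --- followed by your own acknowledgement that the substantive work (an explicit placement on the non-tiling cycle, verification that the induced configurations satisfy condition C3 of Definition~\ref{def-PDA} so the rounds decode, and the wrap-around and group-size bookkeeping) is left undone. That is a genuine gap: the second bullet asserts achievability of $(K-t)/(t+1)$ for every $t\le\lfloor K/(2L)\rfloor$, and nothing in your text pins down which sub-packets are placed where, nor why each multicast message is decodable, once the circle cannot be tiled by length-$L$ arcs. It is also not evident that your proposed mechanism matches the one in~\cite{JHNS}: there (and in the placement~\eqref{eq:our subfile division} used in this paper) the relevant combinatorial object is the family of $t$-subsets of $[K]$ with pairwise cyclic distance at least $L$, rather than a single fixed choice of $t$ arcs, and the factor separating $(K-t)/(t+1)$ from $(K-tL)/(t+1)$ has to come out of an actual counting argument, not from the heuristic that ``the local caching gain is diluted.'' To close the gap you would need to exhibit the placement and delivery for $L\nmid K$ explicitly, or reduce it to a checkable PDA-type condition in the spirit of Proposition~\ref{pro-fundament}.
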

When $L|K$, the authors in \cite{JHNS} showed that the multiplicative gap between the $R_{\text{HKD}}$ and the MN scheme with the memory ratio $\frac{LM}{N}$ is at most $L$. Based on Minimum Distance Separable (MDS) codes, the authors in \cite{RK} proposed the following improved scheme.
\begin{lemma}[\cite{RK}]
\label{lamm-2}\rm
For the   $(K,L,M,N)$    multiaccess  coded caching problem, the lower convex envelope of the memory-load tradeoff points
\begin{align}
(M,R_{\text{RK}})  =  \left(\frac{N t}{K}, \frac{(K-tL)^2}{K}  \right), \ \forall t\in \left\{0,1, \ldots,  \left\lfloor
\frac{K}{L} \right\rfloor  \right\}     \label{eq:lem2 load}
\end{align}
is achievable.
\hfill $\square$
 \end{lemma}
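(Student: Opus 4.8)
The tradeoff in the statement is the lower convex envelope of finitely many points, so by standard memory-sharing it suffices to realize, for each integer $t\in\{0,1,\ldots,\lfloor K/L\rfloor\}$, one $F$-division $(K,L,M,N)$ multiaccess scheme with $M=Nt/K$ and load $(K-tL)^2/K$; the endpoint $t=0$ (load $K$) and, when $L\mid K$, the endpoint $t=K/L$ (load $0$) are immediate. So the plan is to build such a scheme for a generic $t$; write $g:=K-tL$ for the residual parameter.

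\textbf{Placement via MDS coding.} The governing idea is to exploit the cyclic wrap-around so that every user attains the maximum local caching gain: a user sees $L$ cache-nodes, each storing a $t/K$ fraction of every file, and we want these $L$ fractions to be ``information-fresh'', i.e.\ their union should determine exactly a $tL/K$ fraction of each file. To this end, split each file $W_n$ into $F$ equal packets, pass them through an MDS (erasure-correcting) code of suitable length and dimension, and let cache-node $C_k$ store, for every file, the block of coded packets indexed by the length-$(Ft/K)$ window beginning at position $k$ (cyclically). Choosing $F$ and the MDS rate so that (i) each node stores exactly a $t/K$ fraction, and (ii) the $L$ windows seen by any user are ``MDS-disjoint'' (their union forms an information set for a $tL/K$ fraction of each file, and for nothing more), yields a legal placement in which user $U_k$ can reconstruct a $tL/K$ fraction of $W_{d_{U_k}}$ and is missing only a $g/K$ fraction, with the missing coordinates of the $K$ users related to one another by cyclic shifts.

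\textbf{Delivery.} After this placement the residual problem has the flavour of a shared-link problem with symmetric cache sizes: one natural realization partitions the $K$ users cyclically into $g$ classes of (nearly) $K/g$ users each, chooses the MDS parameters so that inside each class the retrievable/missing structure is exactly the MN pattern for $K/g$ virtual caches at cache fraction $1-g/K=tL/K$, and then runs the MN delivery (equivalently, the MN PDA of Lemma~\ref{le-MN}) inside each class. Each class then costs load $\frac{(K/g)-(K/g-1)}{(K/g-1)+1}=\frac{g}{K}$, and summing over the $g$ classes gives total load $g\cdot\frac{g}{K}=\frac{(K-tL)^2}{K}$, as claimed.

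\textbf{Main obstacle.} The crux lies entirely in making the MDS placement and the delivery fit together: one must prove that the cyclic windowed MDS placement really does give every user the full local caching gain (the union of its $L$ windows is a $tL/K$-fraction information set, no more and no less, simultaneously for all $K$ users), and that the residual demands then decode — each user must be able to cancel all interfering packets in every transmitted XOR using only what it retrieved from its $L$ accessed nodes. This is where the shifted-window (rather than symmetric-subset) nature of the cache contents has to be handled carefully, together with the rounding needed when $g\nmid K$ or when $K/L$ is not an integer; by contrast the memory-sharing/convex-hull step and the degenerate endpoints are routine.
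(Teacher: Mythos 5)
First, a point of reference: the paper never proves Lemma~\ref{lamm-2} --- it is quoted from \cite{RK} as a known achievability result and used only for comparison --- so there is no in-paper argument to measure your proposal against. Judged on its own, your reconstruction gets the architecture of the Reddy--Karamchandani scheme right (MDS-coded placement giving each user a full $tL/K$ worth of degrees of freedom, then a delivery with multicast gain $K/g$ where $g=K-tL$), and your bookkeeping is consistent: a class of $K/g$ users at effective cache fraction $tL/K$ has MN parameter $K/g-1$, costs $g/K$, and $g$ classes give $g^{2}/K$.

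That said, there are genuine gaps. (i) The placement as literally written fails: if node $C_k$ stores the length-$(Ft/K)$ window of coded symbols beginning at position $k$, then the $L$ windows a user accesses start at \emph{consecutive} positions, so their union contains only $Ft/K+L-1$ distinct symbols rather than $FtL/K$; no choice of $F$ or of the MDS rate fixes this. One must instead give the nodes pairwise disjoint blocks of a length-$Kt$ (per $K$ information symbols) MDS codeword, so that the $tL$ symbols seen by a user are linearly independent. (ii) The step you label ``Main obstacle'' is not an obstacle to be noted but the entire content of the lemma. For the single MN-type XOR inside a class $\mathcal{G}$ of $K/g$ users to be decodable, you must exhibit for each $u\in\mathcal{G}$ a $g$-dimensional set $S_u$ of coded symbols of $W_{d_u}$ lying in $\bigcap_{u'\in\mathcal{G}\setminus\{u\}}V_{u'}$ (so the others can cancel it) and complementary to $V_u$ (so $u$ can finish decoding), where $V_{u}$ denotes the span of the symbols user $u$ retrieves. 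This amounts to showing $\bigcap_{u\in\mathcal{G}}V_{u}=\{0\}$ and is \emph{not} implied by the MDS property alone once the class members' accessed node sets overlap or once $(K/g)\cdot tL>K$; it is exactly what \cite{RK} has to construct and verify, and it interacts with how the classes are laid out on the cycle. (iii) The case $g\nmid K$ is waved away, yet the claimed load $(K-tL)^{2}/K$ carries no divisibility hypothesis, so the grouping argument must be replaced or supplemented there. As it stands the proposal is a plausible plan with the decisive steps deferred, not a proof.
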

 The load in~\eqref{eq:lem2 load} was shown in \cite{RK} to be optimal under uncoded cache placement when $L=K-1$; $L=K-2$;  $L=K-3$ for $K$ is even. The authors in \cite{SR} proposed a scheme with the subpacketization $F=K(K-1)$  achieving the  following load.
\begin{lemma}[\cite{SR}]
\label{lamm-3}\rm
For the   $(K,L,M,N)$    multiaccess  coded caching problem, the lower convex envelope of the memory-load tradeoff points   $\left(\frac{N t}{K},  R_{\text{SR}}\right)$ for all
  $t   \in \left\{0,1, \ldots,  \left\lfloor
\frac{K}{L} \right\rfloor  \right\} $ and $\left(\frac{N}{L},0 \right)$, is achievable, where
\begin{itemize}
\item $R_{\text{SR}}=\frac{1}{K}$ if $K-1=tL$;
\item $R_{\text{SR}}=\sum_{h=\frac{K-tL+2}{2}}^{K-tL}\frac{2}{1+\lceil \frac{tL}{h}\rceil}$ if $K-tL$ is even;
\item $R_{\text{SR}}=\frac{1}{\lceil\frac{2tL}{K-tL+1}\rceil+1}+\sum_{h=\frac{K-tL+3}{2}}^{K-tL}\frac{2}{1+\lceil\frac{ tL}{h} \rceil}$ if $K-tL>1$ is odd.
\end{itemize}
\hfill $\square$
\end{lemma}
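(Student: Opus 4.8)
My plan is to construct, for each $t\in\{0,1,\dots,\lfloor K/L\rfloor\}$, an explicit $K(K-1)$-division scheme achieving the corner point $\left(\tfrac{Nt}{K},R_{\text{SR}}\right)$, and then to take the lower convex envelope of these points together with $\left(\tfrac{N}{L},0\right)$ by the usual memory-sharing argument; the last point is reached at $t=K/L$ (when $L\mid K$), where every user retrieves all levels of every file from its $L$ neighbours and the server stays silent. Fix $t$ and set $r:=K-tL\ge 0$. Split every file $W_n$ into $K(K-1)$ packets indexed by a \emph{level} $\ell\in[K]$ and a \emph{tag} $b\in[K-1]$, each level being a $1/K$ fraction of the file; the tags will later furnish the several parallel multicasts needed in the delivery phase.

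\textbf{Placement.} Let cache-node $C_k$ store exactly the packets whose level lies in $\mathcal{L}_k:=\{\text{Mod}(k-1+mL,K)+1:\ m\in[0:t-1]\}$; since $tL\le K$ this set has $t$ distinct elements, so $\mathcal{Z}_{C_k}$ occupies memory ratio $t/K$, i.e. $M=\tfrac{Nt}{K}$. The essential feature is that the set of levels retrievable by user $U_k$, namely $\mathcal{A}_k:=\bigcup_{l=0}^{L-1}\mathcal{L}_{\text{Mod}(k+l,K)}=\{\text{Mod}(k-1+s,K)+1:\ s\in[0:tL-1]\}$, is a cyclic interval of length exactly $tL$ (no wrap-around collision occurs since $tL\le K$). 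Hence the scheme has the maximum local caching gain, and the \emph{residual} level set $\mathcal{R}_k:=[K]\setminus\mathcal{A}_k$ is a cyclic interval of length $r$. After the demand vector $\mathbf{d}$ is revealed, user $U_k$ needs precisely the $r(K-1)$ packets of $W_{d_{U_k}}$ whose level lies in $\mathcal{R}_k$, and these are cached at none of its $L$ neighbours.

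\textbf{Delivery.} This is where the real work lies. I would organise the multicasts by a shift parameter $h$. For each $h\in\left\{\left\lceil\tfrac{r+2}{2}\right\rceil,\dots,r\right\}$ one forms a family of XORs, each taken over missing packets of users spaced by $h$ around the ring, $U_k,U_{\text{Mod}(k+h,K)},U_{\text{Mod}(k+2h,K)},\dots$, with the packet-level of each participant chosen inside its own residual interval and inside the retrievable intervals of all the other participants; this is exactly the PDA condition $\mathrm{C3}$, and a counting argument on the cyclic intervals $\mathcal{R}_\bullet$ and $\mathcal{A}_\bullet$ shows such a message can be made to serve $1+\left\lceil\tfrac{tL}{h}\right\rceil$ users at once. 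The $K-1$ tags supply enough parallel instances so that every residual packet of every user is covered exactly once, and after normalising by $F=K(K-1)$ the messages indexed by a given $h$ contribute load $\tfrac{2}{1+\lceil tL/h\rceil}$, the factor $2$ coming from a two-fold symmetry of the cyclic construction. Summing over $h$ yields the even-$r$ formula. When $r$ is odd the middle shift $h=\tfrac{r+1}{2}$ has no symmetric partner, so it contributes only $\tfrac{1}{1+\lceil tL/h\rceil}=\tfrac{1}{\lceil 2tL/(r+1)\rceil+1}$, which gives the odd-$r$ formula; and when $r=1$ each user misses a single level, i.e. $K-1$ packets, deliverable by $K-1$ XORs each useful to all $K$ users, for load $\tfrac{K-1}{K(K-1)}=\tfrac1K$.

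The main obstacle is the delivery step and its exact bookkeeping: one has to (i) pin down the grouping of users and tags so that every residual packet of every user appears in exactly one multicast; (ii) verify the decodability condition $\mathrm{C3}$ for each multicast using only the cyclic-interval structure of the $\mathcal{R}_k$'s and $\mathcal{A}_k$'s; and (iii) perform the count precisely enough to obtain the gains $1+\lceil tL/h\rceil$ and, in particular, the exceptional middle term in the odd case and the boundary value $\tfrac1K$ when $r=1$. Granting a correct such delivery, the argument of Lemma~\ref{le-Fundamental} gives correctness and load $R_{\text{SR}}$, and memory sharing completes the proof.
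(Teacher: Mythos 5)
This statement is not proved in the paper at all: Lemma~\ref{lamm-3} is quoted verbatim from \cite{SR} and is used only as a benchmark for the comparison in~\eqref{eq:comparison to SR}, so there is no ``paper proof'' to match your argument against. Judged on its own terms, your proposal has a genuine gap. The placement you describe (levels spaced by $L$ so that each user's retrievable set $\mathcal{A}_k$ is a cyclic interval of length $tL$ and the residual set $\mathcal{R}_k$ a cyclic interval of length $r=K-tL$) is correct and is indeed the placement underlying the $F=K(K-1)$ scheme of \cite{SR}; your $r=1$ case also checks out, since distinct users have distinct single residual levels. But the delivery phase for general $r$ --- which is the entire technical content of the result --- is asserted rather than constructed. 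You state that a message built from users spaced by $h$ ``can be made to serve $1+\lceil tL/h\rceil$ users,'' that the $K-1$ tags ``supply enough parallel instances so that every residual packet is covered exactly once,'' and that a ``two-fold symmetry'' yields the factor $2$, and then you explicitly list (i)--(iii) as unresolved obstacles before concluding ``granting a correct such delivery.'' None of these is routine: verifying the analogue of condition C$3$ for a chain of users at spacing $h$ requires showing that each participant's chosen residual level lies inside every other participant's length-$tL$ interval, the exact-once covering is precisely why the subpacketization must be $K(K-1)$ rather than $K$, and the exceptional middle term in the odd-$r$ case depends on how the unpaired shift $h=\frac{r+1}{2}$ interacts with the tags. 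Your counting is dimensionally consistent (each shift $h$ delivering two of the $r$ residual levels to every user does sum to the stated load), but consistency of the bookkeeping is not a construction. As written, the proposal is a plausible outline of the scheme in \cite{SR}, not a proof of the lemma.
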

When $M=\frac{2 N}{K}$ (i.e., $t=\frac{K M}{N}=2$), the authors in~\cite{SPE} proposed a caching scheme with load $\frac{K-t L}{g}$ where $g>t+1 =3$,  which is strictly lower than the load in~\eqref{eq:division load}.
When $M=\frac{N}{K}$, the authors in \cite{MARB} proposed a caching scheme with  linear subpacketization, but achieving a higher load than the above schemes when $L < \frac{K}{2}$.
\section{Main Results}
\label{sec-main-result}
\subsection{Main results and performance analysis}
 In this section, a novel multiaccess coded caching scheme can be obtained by a non-trivial transformation from an appropriate PDA satisfying some properties. 
By applying the novel transformation to the MN PDA, we obtain the following theorem, whose detailed proof could be found in Section \ref{sec-proof of Theorem 1}.
\begin{theorem}\rm
\label{th-1}
For the $(K,L,M,N)$    multiaccess  coded caching problem,
the lower convex envelope of the following memory-load tradeoff corner points are achievable,
\begin{align}
(M, R_{1})=  \left(\frac{N t}{K}, \frac{K-tL}{t+1} \right), \ \forall t  \in  \left[0 :  \left\lfloor
\frac{K}{L} \right\rfloor  \right] , \label{eq:load R1}
\end{align}
and $(M,R_1)=\left( \frac{N}{L}, 0\right)$.  The subpacketization is $F=K{K-t(L-1)\choose t}$ when $M=\frac{N t}{K}$, for $t\in \left[0 :   \left\lfloor
\frac{K}{L} \right\rfloor  \right]$.
\hfill $\square$
\end{theorem}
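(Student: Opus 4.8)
The plan is to realize Theorem~\ref{th-1} by a direct three-step transformation of the MN PDA into a multiaccess placement and a matching delivery scheme, mirroring the general transformation announced in Section~\ref{sec-proof of Theorem 1} but specialized to the MN PDA so that the load collapses to $\frac{K-tL}{t+1}$. First I would fix the memory point $M=\frac{Nt}{K}$ for some $t\in\left[0:\left\lfloor\frac{K}{L}\right\rfloor\right]$ and start from the $\left(K,\binom{K}{t},\binom{K-1}{t-1},\binom{K}{t+1}\right)$ MN PDA of Lemma~\ref{le-MN}. The first step is to ``thin'' the placement so that any $L$ consecutive cache-nodes store disjoint content: instead of indexing subfiles by all $\mathcal{T}\in\binom{[K]}{t}$, I would keep only those $t$-subsets that are ``$L$-separated'' in the cyclic sense, i.e.\ no two elements of $\mathcal{T}$ lie within cyclic distance $L-1$. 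Since there are $\frac{K}{t}\binom{K-t(L-1)}{t}$ such subsets (a standard cyclic-composition count), splitting each file into $F=K\binom{K-t(L-1)}{t}$ packets and assigning the packets to these $L$-separated subsets gives each cache-node a $t/K$ fraction and, crucially, guarantees the maximum local caching gain: user $U_k$, seeing cache-nodes $C_k,\dots,C_{\mathrm{Mod}(k+L-1,K)}$, retrieves exactly the packets whose index subset meets $\{k,\dots,k+L-1\}$, which by $L$-separation is at most one element, so $U_k$ misses precisely the packets indexed by $L$-separated $t$-subsets disjoint from its window — a $\frac{K-tL}{K}\cdot(\dots)$ fraction after the dust settles.

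The second step is to build the delivery. For each missing packet at user $U_k$ — indexed by an $L$-separated $t$-set $\mathcal{T}$ disjoint from the length-$L$ window $W_k$ at $k$ — I want to find $t$ other users whose windows each contain exactly one element of $\mathcal{T}$ and who are themselves missing the ``complementary'' packet, so that a single XOR of $t+1$ requested packets serves them all, exactly as in the MN delivery. The natural grouping is: two users $U_k,U_{k'}$ are ``equivalent'' if their windows are nested in the same coset modulo the structure induced by $L$, and within each such class the MN PDA's C3 property transfers verbatim. Counting the XORs: there are $K$ user-windows, each contributes $\binom{K-t(L-1)}{t}\cdot\frac{K-tL}{\dots}$ missing packets, each XOR covers $t+1$ users, and the total divides down to $S=\frac{K-tL}{t+1}\cdot F$ transmissions, giving load $R_1=\frac{K-tL}{t+1}$. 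I would verify correctness by checking that the transformed array still satisfies the PDA conditions C1--C3 restricted to the $L$-separated rows (C1 and C2 are immediate from the count; C3 is inherited because deleting rows of a PDA preserves C3), so that Lemma~\ref{le-Fundamental}-style decoding applies to each user.

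The endpoint $M=N/L$, $R_1=0$ is handled separately and trivially: with $t=\lfloor K/L\rfloor$ ideally one partitions cache-nodes into $L$-separated blocks so that each user's $L$-window already covers the whole library, but in general I would just cite that $L$ copies of a $K/L$-node full-coverage placement give zero load (this is the degenerate corner and needs only a covering argument, not a PDA). The lower convex envelope claim is then automatic by standard memory-sharing between the listed corner points.

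\textbf{The main obstacle} I anticipate is the combinatorial bookkeeping in Step~2: making precise the equivalence relation on users that plays the role of the ``groups'' in Hachem et al.'s $L\mid K$ construction, but now \emph{without} assuming $L\mid K$, and proving that every missing packet of every user gets covered exactly once by a valid XOR of size $t+1$. The subtlety is that when $L\nmid K$ the cyclic windows $W_1,\dots,W_K$ do not partition into clean groups, so the pairing of a user with the $t$ ``partners'' sharing its XOR must be defined through the $L$-separated index sets themselves rather than through a group index; I expect the cleanest route is to show the transformed array is itself a genuine $\left(K,\,K\binom{K-t(L-1)}{t},\,*,\,\frac{K-tL}{t+1}\cdot F\right)$ PDA and then invoke Lemma~\ref{le-Fundamental}, pushing all the delivery correctness into verifying C3 for the thinned array — which, because C3 is preserved under row deletion and column relabeling, reduces to checking that the relabeling of integer entries forced by the thinning does not merge two integers that violate the $2\times2$ pattern. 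That verification, plus the exact count $\frac{K}{t}\binom{K-t(L-1)}{t}=\binom{K-t(L-1)}{t}+\cdots$ of $L$-separated $t$-subsets, is where I'd spend most of the proof's effort.
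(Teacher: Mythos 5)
Your placement is essentially the paper's: restricting to cyclically $L$-separated $t$-subsets is exactly the subfile division in~\eqref{eq:our subfile division}, and your target counts ($F=K\binom{K-t(L-1)}{t}$, each packet retrievable by exactly $tL$ users, hence $F(K-tL)$ missing packet-instances and load $\frac{K-tL}{t+1}$ provided every XOR serves exactly $t+1$ users) are all consistent with Theorem~\ref{th-1}. (A minor slip: the number of $L$-separated $t$-subsets is $\frac{K}{t}\binom{K-t(L-1)-1}{t-1}=\frac{K}{K-t(L-1)}\binom{K-t(L-1)}{t}$, not $\frac{K}{t}\binom{K-t(L-1)}{t}$; this does not affect the subpacketization claim.)

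The genuine gap is in the delivery. You propose to obtain correctness by arguing that the thinned array is still a PDA because ``C3 is preserved under row deletion and column relabeling.'' But the transformation is not a row deletion. Once you replace the MN star pattern (user $k$ has a star in row $\mathcal{T}$ iff $k\in\mathcal{T}$, i.e.\ $t$ stars per row) by the multiaccess retrieval pattern ($k$ has a star iff $\mathcal{T}$ meets the window $\{k,\ldots,\mathrm{Mod}(k+L-1,K)\}$, i.e.\ $tL$ stars per row), each row has only $K-tL$ null cells left, while the corresponding MN row carries $K-t$ integers. You must therefore decide which integers survive, into which null cells they go, and then re-verify condition C3 of Definition~\ref{def-PDA} against the \emph{new} star pattern from scratch --- it is not inherited from the original array. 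This is precisely where the paper does its work: it starts from a \emph{smaller} $K'$-column MN PDA with $K'=K-t(L-1)$ (not the full $K$-column one), expands its stars by the shift $\mathcal{A}_j[h]\mapsto\mathcal{A}_j[h]+h(L-1)$, fills the $K'-t=K-tL$ nulls per row via the order-preserving bijection $\psi_j$, proves that the resulting array satisfies C3 by establishing condition C5 for the MN PDA (Lemma~\ref{lemma-integer-PDA}), and adds a $K$-round cyclic-shift layer to equalize the cache-node loads. Your counting argument shows the load would be right \emph{if} a conflict-free assignment of integers exists; the existence proof, which is the heart of the theorem, is missing, and the row-deletion inheritance you lean on does not supply it. The degenerate corner $(M,R_1)=(N/L,0)$ and the memory-sharing step are fine.
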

Note that   the scheme in Theorem~\ref{th-1} achieves the same load as the scheme in \cite{JHNS} but for any parameters $K$ and $L$.
As the existing schemes in Lemmas~\ref{le-JHNS}-\ref{lamm-3},
the non-trivial corner points of the proposed scheme are at the memory sizes
$M=\frac{N t}{K}$ where $t \in \left\{ 1,\ldots,  \left\lfloor
\frac{K}{L} \right\rfloor  \right\}$.

We then compare the achieved load  by the proposed scheme in Theorem~\ref{th-1} with the existing schemes in Lemmas~\ref{le-JHNS}-\ref{lamm-3}.
\begin{itemize}
\item Comparison to Lemma~\ref{le-JHNS}.
\begin{enumerate}
\item When $L$ divides $K$, the proposed scheme in Theorem~\ref{th-1} achieves the same load as Lemma~\ref{le-JHNS}, i.e., $R_1=R_{\text{HKD}}$.
\item When $L$ does not divide $K$, we have $\frac{R_{\text{HKD}}}{R_1}=\frac{K-t}{K-Lt}$ for $M=\frac{N t}{K}$   where $t \in \left\{0,1, \ldots,  \left\lfloor
\frac{K}{2L} \right\rfloor  \right\} $. In addition, for $\frac{N}{K} \left\lfloor
\frac{K}{2L} \right\rfloor  \leq M <\frac{N}{L}$, we have
    \begin{align}
    \label{eq:comparison to HKD}
    \frac{R_{\text{HKD}}}{R_1}> \frac{K-\lfloor\frac{K}{2L}\rfloor}{K-L\lfloor\frac{K}{2L}\rfloor}
    \end{align}
 which will be proved in Appendix~\ref{sec:proof of HKD}.
\end{enumerate}
\item  Comparison to Lemma~\ref{lamm-2}. For the non-trivial corner points with $t=\frac{K M}{N} \in \left\{ 1,\ldots,  \left\lfloor
\frac{K}{L} \right\rfloor  \right\}$, it will be proved in Appendix~\ref{sec:proof of RK} that
\begin{align}
 R_{1} <  R_{\text{RK}} , \ \text{if }  K >  (t+1)L. \label{eq:comparison to RK}
\end{align}
\item  Comparison to Lemma~\ref{lamm-3}. Due to the high complexity of the closed-form of the  load in Lemma~\ref{lamm-3}, we cannot provide the exact comparison between the proposed scheme and the scheme in Lemma~\ref{lamm-3}. Instead, in Appendix~\ref{sec:proof of SR} we will show that for the non-trivial corner points with $t=\frac{K M}{N} \in \left\{ 1,\ldots,  \left\lfloor
\frac{K}{L} \right\rfloor  \right\}$, we have
\begin{align}
 R_{1}< R_{\text{SR}} , \ \text{if }  K \gg L.\label{eq:comparison to SR}
\end{align}
\end{itemize}
Note that the designed placement of the proposed scheme in Theorem~\ref{th-1} is identical to the placements in~\cite{JHNS,RK} in the sense of subfile division. More precisely, for $M=\frac{N t}{K}$ where $t\in [\left\lfloor
\frac{K}{L} \right\rfloor ]$,  each file $W_n$, $n\in [N]$ is divided into $\binom{K-tL+t-1}{t-1}\frac{K}{t}$ non-overlapping and equal-length subfiles, where
\begin{align}
W_n=\{W_{n,\mathcal{T}}: \mathcal{T} \subseteq [K], |\mathcal{T}|=t, \text{Mod}(j_1-j_2,K)\geq L, \ j_1,j_2 \in \mathcal{T} \text{ and } j_1 \neq j_2  \};\label{eq:our subfile division}
\end{align}in words, each subfile is cached by $t$ cache-nodes and any two cache-nodes within distance $L-1$ do not cache any common packets.
The optimal load under the   cache placement with subfile division in~\eqref{eq:our subfile division} is denoted by $R^{\star}_1$. Under this placement, we propose a converse bound on the load,   and prove that   when $K \gg tL$ the proposed delivery scheme in  Theorem~\ref{th-1} is approximately optimal, where the detailed proof could be found in Appendix~\ref{sec:converse proof}.
\begin{theorem}\rm
\label{thm:converse}
 For the $(K,L,M,N)$ multiaccess coded caching problem   with  $M=\frac{N t}{K}$ where $t\in [\left\lfloor
\frac{K}{L} \right\rfloor ]$, it holds that
\begin{subequations}
 \begin{align}
& R^{\star}_1 \geq \frac{\binom{K-L-X}{t-1}+(K-L-1)\binom{K-L-X+1}{t} - \binom{K-L-X}{t+1} }{K \binom{K-X}{t-1}}, \label{eq:converse bound}\\
&{\text where } \ X:= t L-t+1. \label{eq:def of X}
 \end{align}
 When $K \gg t L$, we have
 \begin{align}
 \frac{ R_{1} }{R^{\star}_1 } \approx 1. \label{eq:approximate opt}
 \end{align}
 \end{subequations}
 \hfill $\square$
\end{theorem}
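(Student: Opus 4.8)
The plan is to fix the uncoded placement prescribed by~\eqref{eq:our subfile division}, turn the remaining delivery problem into an index‑coding instance, and lower bound $R^{\star}_1$ by (the size of a maximum acyclic induced subgraph of) the associated side‑information digraph — equivalently, by the genie‑aided cut‑set chain used in \cite{WTP,RK}. Write $F_0=\tfrac{K}{t}\binom{K-tL+t-1}{t-1}$ for the number of subfiles of each file under~\eqref{eq:our subfile division}. First I would take a demand with $K$ distinct files (legitimate since $N\ge K$) and record the key placement fact: the user $U_k$, which accesses $\mathcal A_k=\{\mathrm{Mod}(k,K),\dots,\mathrm{Mod}(k+L-1,K)\}$, can retrieve $W_{n,\mathcal T}$ iff $\mathcal T\cap\mathcal A_k\neq\emptyset$; moreover, because every pair of elements of $\mathcal T$ is at cyclic distance $\ge L$ while $|\mathcal A_k|=L$, each $\mathcal A_k$ meets $\mathcal T$ in at most one node. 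Hence for any ordering $\pi=(\pi_1,\dots,\pi_K)$ of the users the cut‑set chain gives $R^{\star}_1 F_0\ \ge\ \sum_{i=1}^{K} n^{\pi}_i$, where $n^{\pi}_i=\bigl|\{\mathcal T:|\mathcal T|=t,\ \text{cyclic gaps}\ge L,\ \mathcal T\cap(\mathcal A_{\pi_1}\cup\cdots\cup\mathcal A_{\pi_i})=\emptyset\}\bigr|$.

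The core step is to choose $\pi$ (or a symmetric family of orderings) making $\sum_i n^{\pi}_i$ both large and expressible in closed form. For the natural ordering $\pi=(1,2,\dots,K)$ the union $\mathcal A_{\pi_1}\cup\cdots\cup\mathcal A_{\pi_i}$ is the arc $\{1,\dots,i+L-1\}$; deleting it leaves a genuine path on $K-i-L+1$ vertices — no surviving pair wraps closely enough to violate the cyclic‑gap constraint — so the standard ``gaps‑to‑slots'' bijection yields $n^{\pi}_i=\binom{\,K-i-L+1-(t-1)(L-1)\,}{t}$, and a hockey‑stick summation collapses $\sum_i n^{\pi}_i$ to $\binom{K-tL+t}{t+1}$. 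Dividing by $F_0$ already produces a converse of the shape of~\eqref{eq:converse bound} with the correct leading order. To recover the precise expression in~\eqref{eq:converse bound} — the main term $(K-L-1)\binom{K-L-X+1}{t}$ with $X=tL-t+1$ together with the corrections $\pm\binom{K-L-X}{t\mp1}$ — I would instead average the cut‑set bound over relabelings/cyclic rotations of a suitably structured ordering, which makes every subfile contribute symmetrically and is consistent with the denominator $K\binom{K-X}{t-1}$, and then count precisely, by inclusion–exclusion between path‑constrained and cycle‑constrained $t$‑subsets, the ``boundary'' subfiles $\mathcal T$ that straddle the deleted arc; those boundary terms are exactly what generate the two correction binomials.

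Finally, \eqref{eq:approximate opt} is a one‑line asymptotic check: as $K/(tL)\to\infty$ the degree‑$(t{+}1)$‑in‑$K$ contributions dominate the numerator of~\eqref{eq:converse bound} and combine to coefficient $\tfrac{t}{(t+1)!}$, against coefficient $\tfrac1{(t-1)!}$ in the denominator, so the right‑hand side equals $\tfrac{K}{t+1}\bigl(1-o(1)\bigr)$; since $R_1=\tfrac{K-tL}{t+1}$, we obtain $R_1/R^{\star}_1=\tfrac{K}{K-tL+t}\bigl(1+o(1)\bigr)\to 1$. (Already the simpler natural‑ordering bound above gives the explicit inequality $R_1/R^{\star}_1\le \tfrac{K}{K-tL+t}$, which suffices for~\eqref{eq:approximate opt}.)

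The main obstacle is the core step. The index‑coding reduction, the crude greedy‑ordering bound, and the asymptotics are routine; what is delicate is pinning down an acyclic subgraph (or a rotation‑averaged ordering) whose size admits the exact closed form of~\eqref{eq:converse bound}, and carrying out the wrap‑around boundary bookkeeping without slippage. One must also verify the regime of validity — e.g.\ $K\ge(t+1)L$, so that the relevant binomials are nonnegative and no deleted arc wraps around — and confirm that the genie/acyclic‑subgraph converse transfers verbatim to the multiaccess model upon replacing ``cached by $U_k$'' with ``retrievable by $U_k$''.
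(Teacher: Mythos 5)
Your proposal is correct and follows essentially the same route as the paper: fix the placement in~\eqref{eq:our subfile division}, take the demand $(1,\ldots,K)$, apply the genie-aided/acyclic-subgraph converse of \cite{WTP,YMA} with the natural user ordering, and for each $k$ count the subfiles of $W_k$ supported on the arc $\{k+L,\ldots,K\}$ (the paper's $\mathcal{S}_{K-k-L+1}$). The one place where you hedge --- the ``core step'' of recovering the exact right-hand side of~\eqref{eq:converse bound} via rotation averaging and inclusion--exclusion --- is actually a non-issue. Your path count $n^{\pi}_i=\binom{K-i-L+1-(t-1)(L-1)}{t}$ is exact for these arcs (their length is at most $K-L$, so no surviving pair can violate the wrap-around constraint), the hockey-stick identity gives $\sum_i n^{\pi}_i=\binom{K-tL+t}{t+1}$, and the resulting bound $R^{\star}_1\geq\frac{(K-tL)(K-tL+t)}{K(t+1)}$ already \emph{implies}~\eqref{eq:converse bound}: the paper's numerator equals $t\binom{K-L-X+1}{t+1}+tL\binom{K-L-X+1}{t}$, and $L$ applications of Pascal's rule give $t\binom{K-X+1}{t+1}\geq t\binom{K-L-X+1}{t+1}+tL\binom{K-L-X+1}{t}$. (The reason your expression is not identical to the paper's is that the paper evaluates $|\mathcal{S}_q|$ with the cyclic formula $\frac{q}{t}\binom{q-X}{t-1}$, which for $t$-subsets of an arc of a $K$-cycle is an undercount --- e.g., $K=20$, $L=3$, $t=2$, $q=17$ gives $102$ versus the true $105$ --- so the paper's stated bound is valid but slightly looser than yours.) Consequently your exact ratio $R_1/R^{\star}_1\leq K/(K-tL+t)$ yields~\eqref{eq:approximate opt} in one line, whereas the paper needs a chain of binomial estimates and an approximation; you may simply drop the averaging/inclusion--exclusion step from your plan.
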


It will be explained in Section~\ref{sub:illustrated example} and Section~\ref{sec-proof of Theorem 1}, the main novelty of the proposed scheme in Theorem~\ref{th-1} is a smart transformation of the MN PDA for the multiaccess caching problem.  In Section~\ref{subsect-delivery-user} we will point out any PDA
satisfying the conditions C$4$ and C$5$ in Proposition \ref{pro-fundament} can be used to generate the scheme for the multiaccess caching problem, e.g., the PDAs in~\cite{YCTC,SZG,CJYT,YTCC}. In one word, based on the original PDA, we design a multiacess coded caching scheme with maximal local caching gain (i.e., any  $L$ neighbouring cache-nodes do not cache any same packet such that each user can retrieve $LMF$ packets from its connected cache-nodes) and the same coded caching gain as the original PDA.
{\bf As a result, our proposed transformation approach has advantages both in terms of load and in terms of subpacketization.
}
For instance,
to apply the transformation approach into the PDA in~\cite{YCTC}, we obtain the following multiaccess coded caching scheme  with a lower subpacketization, whose proof is in Appendix \ref{sec:proof of Partition theorem}.
\begin{theorem}\rm
\label{th-3}
For any positive integers $m$, $q\geq 2$, there exists a $(K=m(q+L-1),L,M,N)$ multiaccess coded caching scheme with the memory ratio $\frac{M}{N}=\frac{1}{q+L-1}$, subpacketization $F=mq^{m-1}(q+L-1)$ and transmission load $R_2=q-1$.\hfill $\square$
\end{theorem}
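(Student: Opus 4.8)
The plan is to exhibit an explicit $(K,F,Z,S)$ PDA that satisfies conditions C$4$ and C$5$ of Proposition~\ref{pro-fundament}, and then invoke the transformation machinery of Section~\ref{sec-proof of Theorem 1}. The natural candidate is the Yan--Cheng--Tang--Chen PDA from~\cite{YCTC}, built on $q$-ary vectors: for parameters $m\ge 1$ and $q\ge 2$ one gets a $\big(q^{m}? , \ldots\big)$-type construction; more precisely, the relevant member here is the one with $K'=m q$ ``columns'', $F'=q^{m}$ (or $q^{m-1}$, up to the normalization used in the paper), memory ratio $1/q$, and load $q-1$. First I would recall that construction precisely, index the rows by vectors $\mathbf{b}\in[q]^{m}$ and the columns by pairs $(i,v)\in[m]\times[q]$, and place a $*$ in entry $(\mathbf{b},(i,v))$ exactly when $b_i=v$, and otherwise a symbol encoding $(\mathbf{b},i)$ with $b_i$ replaced by $v$. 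This is known to be a PDA realizing memory ratio $1/q$ and load $q-1$.

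Next I would verify conditions C$4$ and C$5$ for this PDA. The transformation in Section~\ref{sec-proof of Theorem 1} needs the column index set to carry a structure that can be ``spread'' cyclically over $K=m(q+L-1)$ cache-nodes with the maximal-local-caching-gain property: any $L$ consecutive cache-nodes must store disjoint packets. I expect C$4$ to amount to a combinatorial regularity statement (each integer appears a fixed number $g$ of times, giving coded caching gain $g$, here $g=?$ determined so that the load is $q-1$), and C$5$ to be the key structural hypothesis that lets one partition/rotate the columns so the $*$-pattern of any $L$ neighbours is pairwise disjoint. For the $q$-ary PDA this should follow because two columns $(i,v)$ and $(i',v')$ share a $*$-row only when $i=i'$, so columns with the \emph{same} first coordinate $i$ are mutually ``$*$-disjoint'' — and there are $q$ of them, which is exactly why the construction tolerates neighbourhoods only up to a size governed by $q$. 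Grouping the $mq$ columns into $m$ blocks of $q$ and then padding/cyclically interleaving each block to length $q+L-1$ is what produces the factor $(q+L-1)$ in both $K$ and $F$; I would check that the padding preserves the PDA properties (the padded rows/columns just add more $*$'s in a controlled way, never creating a forbidden $2\times2$ pattern).

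Then I would apply Proposition~\ref{pro-fundament} (the three-step transformation) verbatim: the resulting multiaccess scheme has $K=m(q+L-1)$ cache-nodes, subpacketization $F=mq^{m-1}(q+L-1)$ (the original $F'=q^{m-1}$ scaled by the block count $m$ and the cyclic length $q+L-1$, exactly as in Theorem~\ref{th-1}'s $F=K\binom{K-t(L-1)}{t}$ specialized appropriately), memory ratio $M/N=1/(q+L-1)$, maximal local caching gain, and the same coded caching gain as the original PDA, hence load $R_2=q-1$. I would close by noting the $q\ge 2$ hypothesis is needed for the original PDA to be non-degenerate and for $R_2=q-1\ge 1$.

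The main obstacle I anticipate is the verification of condition C$5$ for the chosen PDA together with the bookkeeping of the cyclic padding — i.e., showing rigorously that after spreading the $mq$ ``logical'' columns onto $m(q+L-1)$ physical cache-nodes in wrap-around order, every window of $L$ consecutive cache-nodes consists of columns that are pairwise $*$-disjoint. Everything downstream (the load $q-1$, the subpacketization formula, the memory ratio) is then a mechanical consequence of Proposition~\ref{pro-fundament}, but getting the indexing of C$5$ exactly right, and confirming the padded array still satisfies C$1$--C$3$, is where the real work lies.
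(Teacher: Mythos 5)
Your overall strategy is the same as the paper's: take the $q$-ary PDA of \cite{YCTC} (the paper uses the equivalent, simpler description from \cite[Construction 1]{CWZW} with $t=1$, giving an $(mq,q^{m-1},q^{m-2},(q-1)q^{m-1})$ PDA), verify C$4$ and C$5$, and invoke Proposition~\ref{pro-fundament}, from which $K=mq+m(L-1)$, $F=KF'=m(q+L-1)q^{m-1}$, $\frac{M}{N}=\frac{K'Z}{KF'}=\frac{1}{q+L-1}$ and $R=\frac{S}{F'}=q-1$ all follow mechanically. So the skeleton is right.

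However, there is a genuine gap: the one nontrivial step, the verification of C$5$, is not carried out, and the justification you sketch for it targets the wrong property. You describe C$5$ as ``the key structural hypothesis that lets one partition/rotate the columns so the $*$-pattern of any $L$ neighbours is pairwise disjoint,'' and argue it via the fact that columns $(\delta,b)$ and $(\delta,b')$ with the same block index never share a star row. But the disjointness of any $L$ neighbouring cache-nodes is \emph{automatic} for every PDA satisfying C$4$ --- it follows from the shift $\mathcal{A}_j[h]\mapsto\mathcal{A}_j[h]+h(L-1)$ in \eqref{eq-caching-index}, as shown in \eqref{eq:key placement} --- and requires nothing special about the $q$-ary construction. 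What C$5$ actually demands is that the filled array $\mathbf{Q}^{(1)}$ inherit condition C$3$ of Definition~\ref{def-PDA}: whenever $p_{j_1,k_1'}=p_{j_2,k_2'}=s$, the \emph{shifted} column indices $k_1'+(i_1-1)(L-1)$ and $k_2'+(i_2-1)(L-1)$ (where $i_\ell$ is the rank of $k_\ell'$ in $\mathcal{A}_{j_\ell}\cup\{k_\ell'\}$) must land inside $\mathcal{U}^{(1)}_{j_2}$ and $\mathcal{U}^{(1)}_{j_1}$ respectively, so that the cross entries of $\mathbf{Q}^{(1)}$ are stars and each user can still decode. The paper's proof does exactly this bookkeeping: it computes $i_1=\delta_1+1$ or $i_1=\delta_1$ according to whether $b_1>f_{\delta_1}$ or $b_1<f_{\delta_1}$, uses the PDA property to deduce $b_1=f'_{\delta_1}$ and $b_2=f_{\delta_2}$, and checks that the shifted index falls in the interval $\{(\delta_1-1)L+(\delta_1-1)(q-1)+f'_{\delta_1},\ldots,\delta_1 L+(\delta_1-1)(q-1)+f'_{\delta_1}-1\}\subseteq\mathcal{U}^{(1)}_{\mathbf{f}'}$. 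None of this is implied by the block-disjointness observation alone. Two smaller inaccuracies reinforce the impression that the mechanism is not yet pinned down: C$4$ is the condition that each \emph{row} has exactly $t$ stars (not a regularity condition on how often each integer appears), and Proposition~\ref{pro-fundament} does not ``pad'' the PDA with extra columns whose C$1$--C$3$ properties must be re-checked --- the $mq$ logical columns are mapped into $m(q+L-1)$ physical positions purely through the rank-dependent shift, and $\mathbf{P}$ itself is never altered.
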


We conclude this subsection with some numerical evaluations to compare the proposed schemes in Theorems~\ref{th-1} and \ref{th-3}, the derived converse in Theorem \ref{thm:converse} and the existing schemes in Lemmas~\ref{le-JHNS}-\ref{lamm-3}.
In Fig.~\ref{Fig-Compare} and Fig.~\ref{Fig-Compare2}, we consider the multiaccess  caching problem with $K=N=20$, $L=3,4$.
When $L=3$,
it can be seen that the proposed scheme performs the best when  $0\leq \frac{M}{N} \leq  \frac{K/L-1}{K}=\frac{20/3-1}{20}=\frac{1}{3}-\frac{1}{20}\approx 0.28$ by the condition $K>(t+1)L$ in \eqref{eq:comparison to RK}.
When $L=4$,
it can be seen that the proposed scheme performs the best when $0\leq \frac{M}{N}\leq 0.217$. 

\begin{figure}[http!]
\centering
\includegraphics[width=4in]{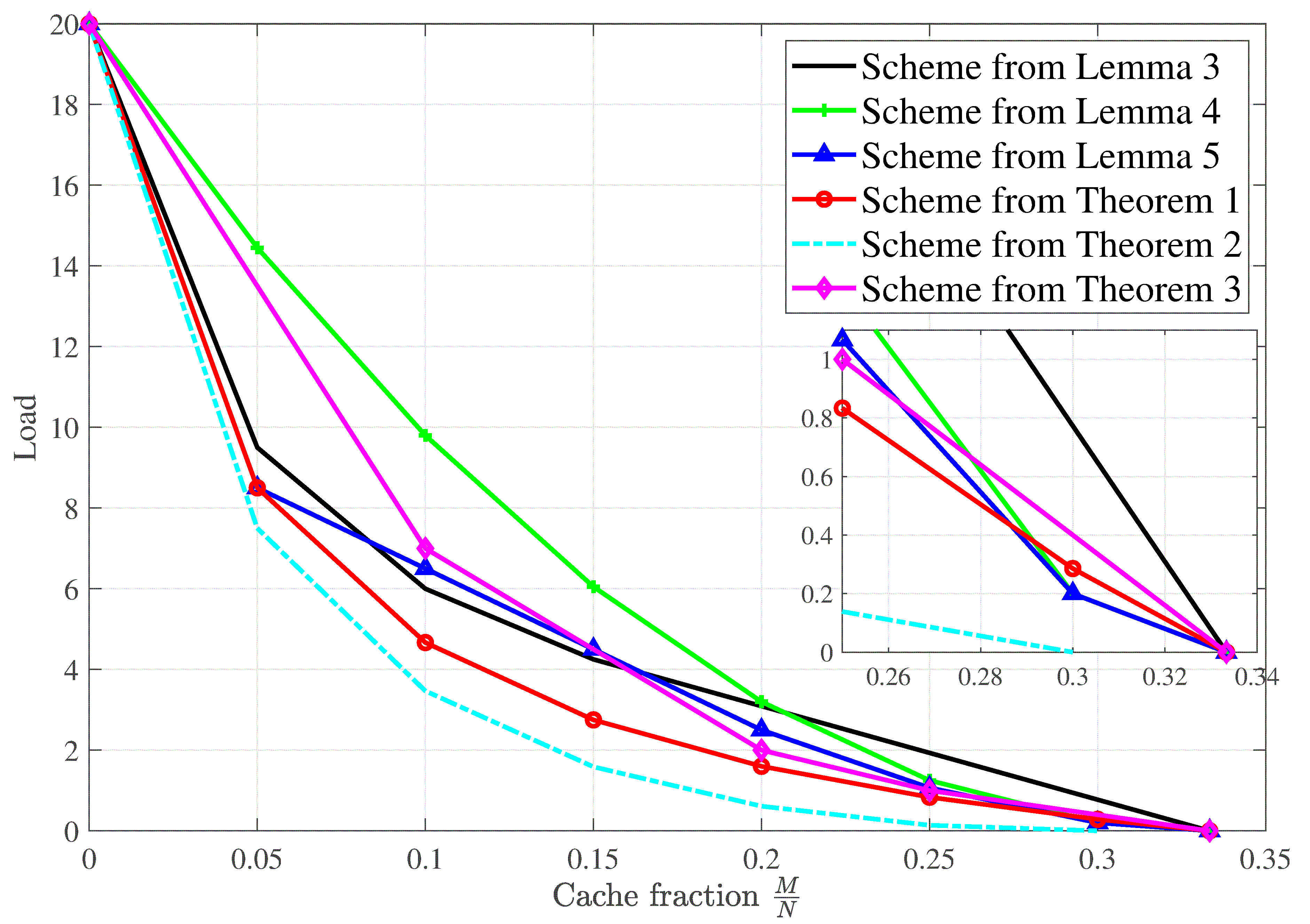}
\vskip 0.2cm
\caption{Comparison of the caching schemes  for the    multiaccess  coded caching problem with $N=K=20$ and $L=3$.}\label{Fig-Compare}
\end{figure}

\begin{figure}[http!]
\centering
\includegraphics[width=4in]{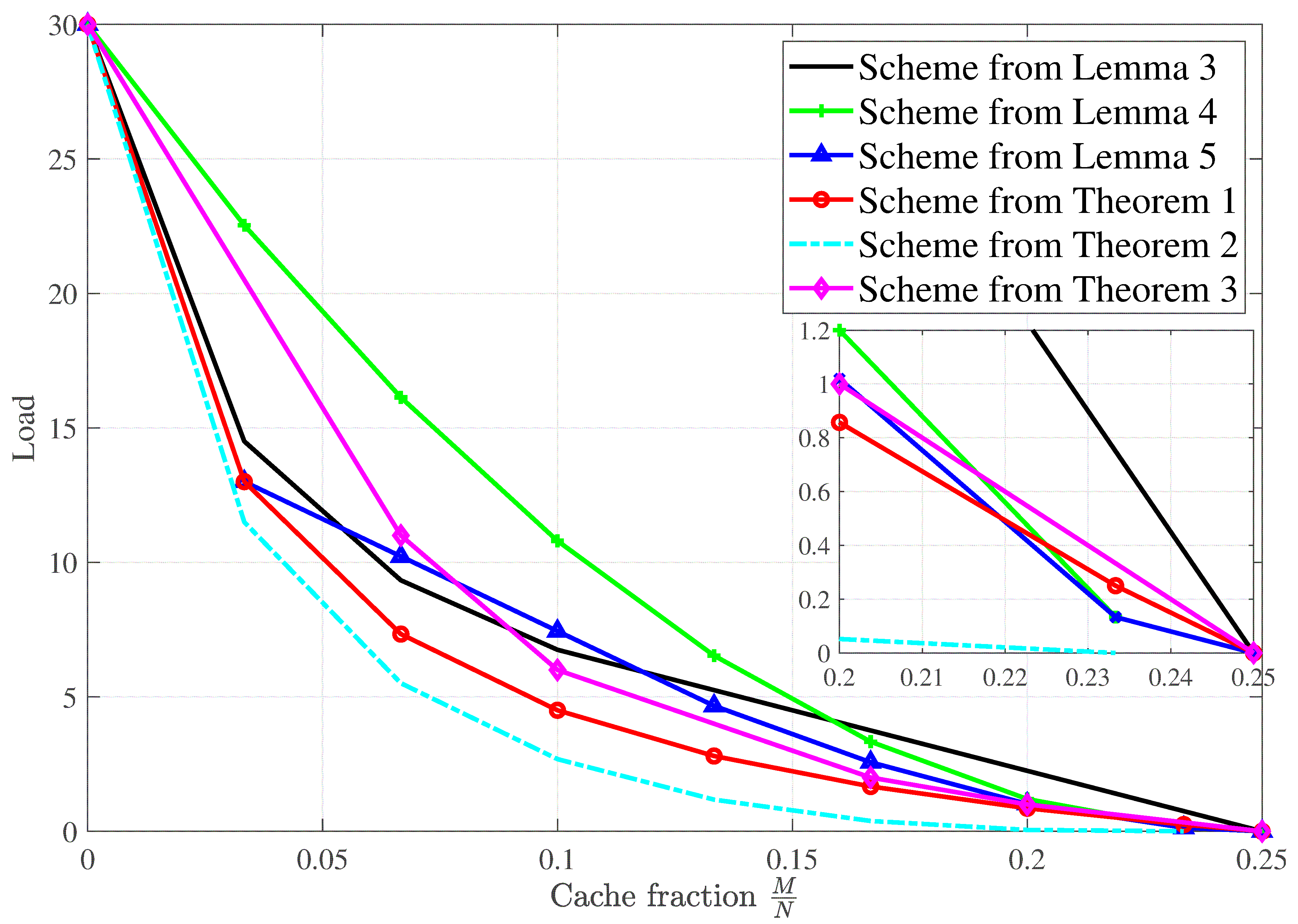}
\vskip 0.2cm
\caption{Comparison of the caching schemes  for the    multiaccess  coded caching problem with $N=K=30$ and $L=4$.}\label{Fig-Compare2}
\end{figure}
\subsection{Sketch of the proposed scheme in Theorem~\ref{th-1}}
\label{sub:illustrated example}
Let us consider the $(K,L,M,N)=(8,3,2,8)$ multiaccess  caching problem where user $U_k$ can access the cache-nodes $C_{k}$, $C_{\text{Mod}(k+1,8)}$ and  $C_{\text{Mod}(k+2,8)}$ for each $k\in [8]$. We aim to propose a multiaccess coded caching scheme based on the $(K',F' ,Z ,S )=(4,6,3,4)$ MN PDA, i.e.,  $\mathbf{P}$ in \eqref{eq-PDA-6-4}. It will be clear in Section~\ref{sec-proof of Theorem 1} that we choose the MN PDA where $K'=K-\frac{KM}{N}(L-1)$ and $K'\frac{Z}{F'}=K\frac{M}{N}$. Generally speaking,  the main  ingredients of the proposed scheme contain two key points:
\begin{itemize}
\item According to the multiaccess  caching model, in order to fully use the cache-nodes, we design a cache placement such that any two cache-nodes connected to some common user(s) should not cache the same packets. Since each user can access $L$ cache-nodes, the local caching gain of the proposed scheme is
\begin{align*}
g_{\text{local}}= 1-\frac{L M}{N}= \frac{1}{4}.
\end{align*}
\item The  structure of the proposed  placement and delivery phases is based on the MN PDA, such that the coded caching gain is the same as the    $(4,6,3,4)$ MN PDA, which is equal to
\begin{align*}
 g_{\text{coded}}= \frac{K' Z }{F' }+1= \frac{K M}{N}+1=3.
\end{align*}
\end{itemize}
Thus the   load of the proposed scheme is
 \begin{align*}
 K \frac{g_{\text{local}}}{ g_{\text{coded}}} = \frac{2}{3},
\end{align*}
which coincides with~\eqref{eq:load R1}. Then we introduce the more details on the construction. We divide each file into $K=8$ parts with equal length, $W_n= \left(W^{(g)}_n: g\in [8] \right)$, and divide the caching procedure into $8$ rounds, where in the $g^{\text{th}}$ round we only consider the $g^{\text{th}}$ part of each file. We should point out that in order to ensure all the cache-nodes caching the same amount of contents, we have to divide the caching procedure into $K$ rounds.
Furthermore, it will be clarified later that the caching procedures for different rounds are totally symmetric, and thus we only need to design  the caching scheme for the first round (i.e., for the first subfiles of the $N$ files).

Let us consider the first round to illustrate the main idea.
We further divide the first part of each file (i.e., $W^{(1)}_n$ for each $n\in [8]$) into $F'=6$  non-overlapping and equal-length packets.
  We then construct     three arrays $\mathbf{C}^{(1)}$, $\mathbf{U}^{(1)}$, and $\mathbf{Q}^{(1)}$ via $\mathbf{P}$, defined as follows.
\begin{definition}
\label{defn:three arrays}
For each $g\in [K]$,
\begin{itemize}
\item An $F' \times K$ node-placement array $\mathbf{C}^{(g)}$ consists of star and null, where $F'$ and $K$ represent the subpacketization  of each part and the number of cache-nodes,  respectively. The entry  located at the position $(j,k)$ in  $\mathbf{C}^{(g)}$ is star if and only if the $k^{\text{th}}$ cache-node   caches  the $j^{\text{th}}$ packet of each $W^{(g)}_n$ where $n\in [N]$. 
\item An $F' \times K$ user-retrieve array $\mathbf{U}^{(g)}$ consists of star and null, where $F'$ and $K$ represent the subpacketization of each part and the number of users, respectively. The entry at the position $(j,k)$ in  $\mathbf{U}^{(g)}$
is star if and only if the $k^{\text{th}}$ user can retrieve   the $j^{\text{th}}$ packet  of each $W^{(g)}_n$ where $n\in [N]$.
\item An $F' \times K$ user-delivery array $\mathbf{Q}^{(g)}$ consists of $\{*\}\cup[S]$, where $F'$, $K$ and the stars in $\mathbf{Q}^{(g)}$ have the same meaning as $F'$, $K$ of $\mathbf{U}^{(g)}$ and the stars in $\mathbf{U}^{(g)}$, respectively. Each integer represents a multicast message, and $S$ represents the total number of multicast messages transmitted in the $g^{\text{th}}$ round of the delivery phase.
\end{itemize}
\hfill $\square$
\end{definition}
The constructing flow diagram from  $\mathbf{P}$ to $\mathbf{Q}^{(1)}$ (listed in Fig.~\ref{Fig-construct}) contains the following three steps:
 \begin{figure}
\centering
\includegraphics[width=7in]{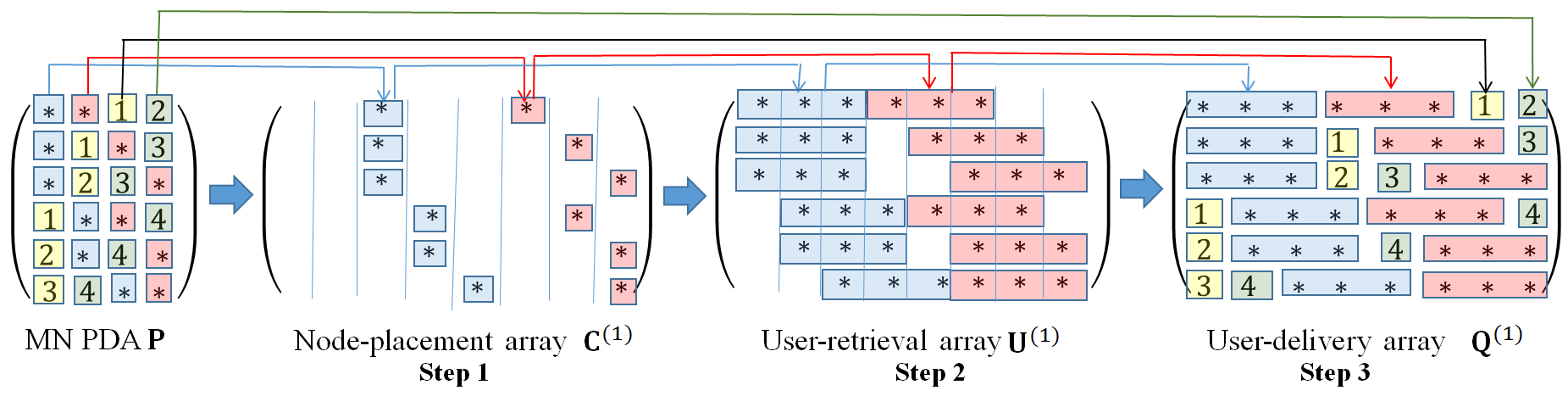}
\vskip 0.2cm
\caption{The flow diagram of constructing $\mathbf{C}^{(1)}$, $\mathbf{U}^{(1)}$  and $\mathbf{Q}^{(1)}$ via $(4,6,3,4)$ MN PDA}\label{Fig-construct}
\end{figure}
  \begin{itemize}
  \item {\bf Step 1}. In the first step, we construct the node-placement array $\mathbf{C}^{(1)}$ from  $\mathbf{P}$. More precisely,
  let us focus on each row $j\in[6]$  of $\mathbf{P}$. It can be seen that row $j$ contains two stars, which are assumed to be located at columns $k_1$ and $k_2$, respectively.
  We let row $j$  of $\mathbf{C}^{(1)}$ also contain two stars, where the first star is located at column $k_1+(L-1)=k_1+2$ and the second star is located at column $k_2+2(L-1)=k_2+4$.
    For example, as the blue line showed in Step 1, the   $*$ at position $(1,1)$ of $\mathbf{P}$ is put at position $(1,1+2)=(1,3)$ of $\mathbf{C}^{(1)}$, and
 the   $*$ at position $(1,2)$ of $\mathbf{P}$ is put at position $(1,2+4)=(1,6)$ of $\mathbf{C}^{(1)}$. {\bf By this construction, any  $L$  neighbouring cache-nodes do not cache any common packets.}
  \item {\bf Step 2}. In the second step, we construct the user-retrieve array $\mathbf{U}^{(1)}$ from $\mathbf{C}^{(1)}$. More precisely, since each user can access $L$ neighbouring cache-nodes in a cyclic wrap-around fashion, we put the stars in $\mathbf{U}^{(1)}$ according to position of the stars in $\mathbf{C}^{(1)}$.  In other words, if the entry at the position $(j,k)$  of $\mathbf{C}^{(1)}$ is star, then the entry at the position $(j,k_1)$ of $\mathbf{U}^{(1)}$ is set to be star where $0\leq k-k_1\leq L-1=2$.
    \item {\bf Step 3}. In the third step, we construct the user-delivery array $\mathbf{Q}^{(1)}$ from $\mathbf{U}^{(1)}$. First, we let $\mathbf{Q}^{(1)}=\mathbf{U}^{(1)}$. Recall that the null entries in  the $k^{\text{th}}$ column represent the required  packets in $W^{(1)}_{d_{U_k}}$ which can not be retrieved by user $k$ from its connected cache-nodes.
 For example, the entry at the position $(1,7)$ is null, because user $U_7$ requires the first packet of $W^{(1)}_{d_{U_7}}$ which can not be retrieved   from  its connected cache-nodes $C_7$, $C_8$ and $C_1$. 
 We then fill the null entries in $\mathbf{Q}^{(1)}$ by the integers in $\mathbf{P}$ following the delivery strategy of the MN PDA.
 Let us focus on the   row  $j\in [6]$ of $\mathbf{P}$ and  $\mathbf{Q}^{(1)}$.  Notice that   row $j$ of $\mathbf{P}$  contains two integers, and that row  $j$  of  $\mathbf{Q}^{(1)}$ also contains two nulls. Thus we set the first null to be the first integer and the second null to be the second integer.
  \end{itemize}
  {\bf After determining $\mathbf{Q}^{(1)}$, it can be seen  that $\mathbf{Q}^{(1)}$  satisfies the condition C$3$ of Definition \ref{def-PDA}.}
  Hence, we use the delivery strategy in Line 9 of Algorithm \ref{alg:PDA}. For example,  assume that the request vector is ${\bf d}=(1,2,\ldots,8)$. In the first transmission which corresponds to the integer `$1$' in $\mathbf{Q}^{(1)}$, the server sends the XOR of  the fourth packet of $W^{(1)}_{1}$  (denoted by  $W^{(1)}_{1,4}$),  the second packet of $W^{(1)}_{4}$   (denoted by $W^{(1)}_{4,2}$),  and the first packet of $W^{(1)}_{7}$ (denoted by $W^{(1)}_{7,1}$), i.e., $W^{(1)}_{1,4}\bigoplus W^{(1)}_{4,2}\bigoplus W^{(1)}_{7,1}$.

Finally for each $g\in[8]$,  the arrays $\mathbf{C}^{(g)}$ and  $\mathbf{Q}^{(g)}$ can be obtained by  cyclically  right-shifting $\mathbf{C}^{(1)}$ and $\mathbf{Q}^{(1)}$ by $g-1$ positions, respectively. After obtaining $\mathbf{C}^{(g)}$ and $\mathbf{Q}^{(g)}$, the placement and delivery phases in the $g^{\text{th}}$ round can be   done as above.

Let
$
\mathbf{C}=\left[
              \mathbf{C}^{(1)};
              \mathbf{C}^{(2)};
              \cdots;
              \mathbf{C}^{(8)}
            \right]
$
be the total placement array of the cache-nodes.
By the construction, each column of $\mathbf{C}$ is a concatenation of all the columns of $\mathbf{C}^{(1)}$; thus
each column of $\mathbf{C}$ has exactly $12$ stars. In addition, the array $\mathbf{C}$ contains $6\times 8=48$ rows.
Hence, the needed memory size is
$
M=  \frac{12N}{48}=2,
$
satisfying the memory size constraint.
The  total subpacketization of the proposed scheme is $48$, which is equal to the number of rows in $\mathbf{C}$.

 Besides the MN PDA,   we can also apply the proposed transformation approach to
 the $(K',F' ,Z ,S )=(4,2,1,2)$ PDA in  \cite[Theorem 4 ($m=1$ and $q=2$)]{YCTC} or \cite[Theorem 3 ($m=1$ and $q=2$, $z=1$)]{CJYT}  to  obtain a multiaccess coded caching scheme with the subpacketization $F' \times K=16$ and load $\frac{S}{F'}=1$;
to     the $(K',F' ,Z ,S )=(4,4,2,4)$ PDA in  \cite[Theorem 18 ($m=2$ and $q=2$, $t=1$)]{SZG},to  obtain a multiaccess coded caching scheme  with  the  subpacketization $F' \times K=32$ and load $\frac{S}{F'}=1$.

\section{Proofs  of Theorem \ref{th-1} and Theorem~\ref{th-3}}
\label{sec-proof of Theorem 1}
 Let us consider the $(K,L,M,N)$ multiaccess  coded caching   problem,
where $t=\frac{KM}{N} \in \left\{0,1, \ldots,  \left\lfloor
\frac{K}{L} \right\rfloor  \right\}$. In other words, the whole library is totally cached $t$ times in the system.
Define that $K'=K-t(L-1)$.\footnote{\label{footnote:theorem1}   From Lemmas \ref{lamm-2} and \ref{lamm-3}, when $K\leq tL+1$ the scheme with minimum load is obtained. So we only need to consider the case $K>tL+1$. Then the assumption $K'>0$ always holds when $K>tL+1$. } For this multiaccess  caching problem, we search
a $(K',F',Z,S)$ PDA $\mathbf{P}=(p_{j,k})_{j\in[F'], k\in [K']}$ for the shared-link caching system  where the whole library is also totally cached $t$ times,  i.e.,
$\frac{K'Z}{ F'} = \frac{KM}{N} =t$. We define
\begin{align}
 \mathcal{A}_{j}=\{k\ :\ p_{j,k}=*, k\in [K']\}, \ \forall j\in [F'] \label{eq-PDA-star-set}
\end{align}
as the column label set of $\mathbf{P}$ where the entries in $j^{\text{th}}$ row are stars. We have $|\mathcal{A}_{j}|=t$ because each packet is cached exactly $t$ times.  So PDA $\mathbf{P}$ should satisfy
\begin{itemize}
  \item C$4$. Each row of $\mathbf{P}$ has exactly $t=\frac{K'Z}{F'}$ stars.
  \end{itemize}

For the $(K,L,M,N)$ multiaccess  coded caching   problem,
we divide each file $W_{n}$ where $n\in [N]$ into $K$ parts with equal length,  $W_{n}=\left(W^{(1)}_{n},\ldots,W^{(K)}_n \right)$. Denote the set of the $g^{th}$ parts  by $\mathcal{W}^{(g)}=\left\{W^{(g)}_{1},  \ldots, W^{(g)}_{N} \right\}$, for each $g\in [K]$.
 As shown in the sketch of the proof   in Section~\ref{sub:illustrated example},
we divide the whole caching procedure into $K$ separate rounds, where in the $g^{th}$ round we only deal with $\mathcal{W}^{(g)}$.
In the $g^{\text{th}}$ round where $g\in [K]$, our construction  contains three steps: the generations for the node-placement array   $\mathbf{C}^{(g)}$,   the user-retrieve array $\mathbf{U}^{(g)}$, and the user-delivery array $\mathbf{Q}^{(g)}$,   where the definitions of these three arrays are given in Definition~\ref{defn:three arrays}.


\subsection{Placement strategy for cache-nodes: Generation of $\mathbf{C}^{(g)}$ for $g\in [K]$}
\label{subsec-caching-nodes}
The main objective in this step is that each $L$ neighbouring cache-nodes do not cache any common packet.
Let us first consider the case where $g=1$.
 The detailed placement is as follows, which is based on a  $(K',F',Z,S)$ PDA $\mathbf{P}=(p_{j,k})_{j\in [F'],k\in[K']}$ for the shared-link caching system  with $K'=K-t(L-1)$ and $\frac{K' Z}{ F'} = \frac{KM}{N} $.

We divide
 each part $W^{(1)}_{n}$ where $n\in [N]$ into $F'$ equal-length packets,  $W^{(1)}_{n}=(W^{(1)}_{n,1}, \ldots,W^{(1)}_{n,F'})$. Then each cache-node $C_k$ where $k\in [K]$  caches the following set of packets (recall that $\text{Mod}(b,a)\in \{1,\ldots,a \}$),
\begin{align}
\label{eq-caching-node}
& \mathcal{Z}^{(1)}_{C_k}=\{W^{(1)}_{n,j}\ :\ k\in \mathcal{C}^{(1)}_{j}, j\in [F'], n\in [N] \}, \\
&\text{where }  \mathcal{C}^{(1)}_{j}=
\{\mathcal{A}_{j}[h]+h(L-1)    :  h\in[|\mathcal{A}_{j}|]\}
.\label{eq-caching-index}
\end{align}
Notice that $\mathcal{C}^{(1)}_{j}$ represents the set of cache-nodes   which cache the $j^{\text{th}}$ packet of each part in $\mathcal{W}^{(1)}$.  As the meaning of stars in PDA, we   use the following $F'\times K$ node-placement array $\mathbf{C}^{(1)}=(c^{(1)}_{j,k})_{j\in [F'],k\in [K]}$, where
\begin{align}
\label{eq-array-node-caching}
c^{(1)}_{j, k}=\left\{
\begin{array}{cc}
* & \hbox{if}\ \  k\in \mathcal{C}^{(1)}_{j} \\
null & \hbox{otherwise}
\end{array},
\right.
\end{align}
and each entry at the position $(j,k)$ of $\mathbf{C}^{(1)}$ is star if and only if the packets $W^{(1)}_{n,j}$ for all $n\in [N]$ are cached by cache-node $C_k$.   
 By the above construction, for any two distinct integers $h$, $h'\in [t]$ and for any $j\in [F']$, $g\in [K]$, the following inequality
\begin{align}
|\mathcal{C}^{(1)}_j[h]-\mathcal{C}^{(1)}_j[h']|=|\mathcal{A}_j[h]-\mathcal{A}_j[h']+(h-h')(L-1)|\geq L \label{eq:key placement}
\end{align}
holds. Thus any $L$ neighbouring cache-nodes do not cache any common packet.

After determining $\mathbf{C}^{(1)}$, we can obtain $\mathbf{C}^{(g)}$ where $g\in [K]$ by simply  cyclically  right-shifting $\mathbf{C}^{(1)}$ by $g-1$ positions.
  Then we let
\begin{align*}
\mathbf{C}=\left[
              \mathbf{C}^{(1)};
              \mathbf{C}^{(2)};
              \cdots;
              \mathbf{C}^{(K)}
            \right]
\end{align*}
to represent the cached contents of the cache-nodes. 
Each column of $\mathbf{C}$ is the concatenation of all the columns of $\mathbf{C}^{(1)}$. In addition, by the constraint,  the total number of stars in $(K',F',Z,S)$ PDA equals the total number of stars in $\mathbf{C}^{(1)}$. Hence,  the number of stars in each column of $\mathbf{C}$ is $ZK'$. The total number of packets of each file equals the number of rows of $\mathbf{C}$, i.e., $KF'$. So the needed memory  size of each cache-node is
\begin{align*}
\frac{ZK'}{KF'}N=\frac{Z}{F'}\cdot \frac{K'}{K}N=M,
\end{align*}
satisfying the memory size constraint.

\begin{example}\rm
\label{exam-caching-node}
Let us return to the example in Section~\ref{sub:illustrated example} with
  $K=N=8$ and $L=3$, where the caching procedure is divided into $8$ rounds.

In the first round, we consider  $\mathcal{W}^{(1)}=\{W^{(1)}_1,\ldots, W^{(1)}_8 \}$.   Each part $W^{(1)}_n$ where $n \in [8]$ is divided  into $6$ packets, i.e., $W^{(1)}_{n}=\left(W^{(1)}_{n,1},W^{(1)}_{n,2},\ldots,W^{(1)}_{n,6}\right)$.
   By \eqref{eq-PDA-star-set} we have
\begin{align}
\label{eq-A_j}
\mathcal{A}_1=\{1,2\},\ \ \mathcal{A}_2=\{1,3\},\mathcal{A}_3=\{1,4\},\mathcal{A}_4=\{2,3\},\mathcal{A}_5=\{2,4\},\mathcal{A}_6=\{3,4\}.
\end{align}
By \eqref{eq-caching-index} we have
\begin{equation*}
\mathcal{C}^{(1)}_{1}=\{3,6\},\ \ \mathcal{C}^{(1)}_{2}=\{3,7\},\ \ \mathcal{C}^{(1)}_{3}=\{3,8\},\ \
\mathcal{C}^{(1)}_{4}=\{4,7\},\ \ \mathcal{C}^{(1)}_{5}=\{4,8\},\ \ \mathcal{C}^{(1)}_{6}=\{5,8\}.
\end{equation*}
Then each cache-node $C_k$, $k\in [8]$, caches the packets from $\mathcal{W}^{(1)}$ as follows by \eqref{eq-caching-node}.
\begin{align*}
\mathcal{Z}^{(1)}_{C_1}&=\mathcal{Z}^{(1)}_{C_2}=\emptyset; \\
\mathcal{Z}^{(1)}_{C_3}&=\{W^{(1)}_{n,1},\ W^{(1)}_{n,2},\ W^{(1)}_{n,3}\ :\  n\in [8] \}; \\
\mathcal{Z}^{(1)}_{C_4}&=\{W^{(1)}_{n,4},\ W^{(1)}_{n,5}\ :\  n\in [8] \}; \\
\mathcal{Z}^{(1)}_{C_5}&=\{W^{(1)}_{n,6}\ :\  n\in [8] \};\\
\mathcal{Z}^{(1)}_{C_6}&=\{W^{(1)}_{n,1}\ :\  n\in [8] \}; \\
\mathcal{Z}^{(1)}_{C_7}&=\{W^{(1)}_{n,2},\ W^{(1)}_{n,4}\ :\  n\in [8] \}; \\
\mathcal{Z}^{(1)}_{C_8}&=\{W^{(1)}_{n,3},\ W^{(1)}_{n,5},\ W^{(1)}_{n,6}\ :\  n\in [8] \}.
\end{align*}
By \eqref{eq-array-node-caching} the above packets cached by cache-nodes can be represented by the first array $\mathbf{C}^{(1)}$ in Table \ref{tab-caching-1-2-nodes}, which is exactly the array $\mathbf{C}^{(1)}$ in Fig. \ref{Fig-construct}.
To get the array $\mathbf{C}^{(2)}$, we right-shift   $\mathbf{C}^{(1)}$ by one position.
\begin{table}[!htbp]
\center
\caption{Node-placement arrays $\mathbf{C}^{(1)}$ and $\mathbf{C}^{(2)}$.
\label{tab-caching-1-2-nodes}}
\begin{tabular}{|c|cccccccc|}
&\multicolumn{8}{|c|}{Node-placement array $\mathbf{C}^{(1)}$ for $\mathcal{W}^{(1)}$} \\ \hline
$n\in[8]$         &  $C_1$&$C_2$	&$C_3$	&$C_4$	&$C_5$	&$C_6$	&$C_7$  &	$C_8$	\\ \hline
$W^{(1)}_{n,1}$	&		&		&	$*$	&		&		&	$*$	&		&		\\
$W^{(1)}_{n,2}$	&		&		&	$*$	&		&		&		&	$*$	&		\\
$W^{(1)}_{n,3}$	&		&		&	$*$	&		&		&		&		&	$*$	\\
$W^{(1)}_{n,4}$	&		&		&		&	$*$	&		&		&	$*$	&		\\
$W^{(1)}_{n,5}$	&		&		&		&	$*$	&		&		&		&	$*$	\\
$W^{(1)}_{n,6}$	&		&		&		&		&	$*$	&		&		&	$*$	\\ \hline
\end{tabular}\ \ \ \ \ \ \
\begin{tabular}{|c|cccccccc|}
&\multicolumn{8}{|c|}{Node-placement array $\mathbf{C}^{(2)}$ for $\mathcal{W}^{(2)}$} \\ \hline
$n\in[8]$         &  $C_1$&$C_2$	&$C_3$	&$C_4$	&$C_5$	&$C_6$	&$C_7$  &	$C_8$	\\ \hline
$W^{(2)}_{n,1}$	&		&		&		&	$*$	&		&		&	$*$	&		\\
$W^{(2)}_{n,2}$	&		&		&		&	$*$	&		&		&		&	$*$	\\
$W^{(2)}_{n,3}$	&	$*$	&		&		&	$*$	&		&		&		&		\\
$W^{(2)}_{n,4}$	&		&		&		&		&	$*$	&		&		&	$*$	\\
$W^{(2)}_{n,5}$	&	$*$	&		&		&		&	$*$	&		&		&		\\
$W^{(2)}_{n,6}$	&	$*$	&		&		&		&		&	$*$	&		&		\\ \hline
\end{tabular}
\end{table} 
The total number of stars in $\mathbf{C}^{(1)}$ is $12$, which is the total number of stars in the $(4,6,3,4)$ PDA $\mathbf{P}$ in \eqref{eq-PDA-6-4}. Then the number of stars in each column of $\mathbf{C}$ is $12$. Since $\mathbf{C}$ has $6\times 8=48$ rows, the memory ratio of each cache-node is $\frac{M}{N}=\frac{12}{48}=\frac{1}{4}$.
\hfill $\square$
\end{example}

\subsection{Packets retrievable to users: Generation of $\mathbf{U}^{(g)}$ for $g\in [K]$}
\label{subsect-packet-user}
Let us also start with $g=1$.
 Since each user is connected to $L$ neighbouring cache-nodes in a cyclic wrap-around fashion, and can retrieve   the cached contents in those cache-nodes. Hence,  each user $U_k$ where $k \in [K]$ can retrieve the packet $W^{(1)}_{n,j}$ for all $n\in [N]$ if and only if $k$ is an element of the following set
\begin{align}
\mathcal{U}^{(1)}_{j}&=\bigcup\limits_{h\in[t]}
\{\mathcal{C}^{(1)}_j[h]-(L-1),
\ldots,\mathcal{C}^{(1)}_j[h]\}
 \nonumber \\
&=\bigcup\limits_{h\in[t]}
\{\mathcal{A}_j[h]+(h-1)(L-1),
\ldots,\mathcal{A}_j[h]+h(L-1)\}
.
\label{eq-user-index}
\end{align}
Notice that  $\mathcal{U}^{(1)}_{j}$ is the set of users   who can retrieve the $j^{\text{th}}$ packet of each part in $\mathcal{W}^{(1)}$.  Then user $U_k$ can retrieve the following packets of $\mathcal{W}^{(1)}$,
\begin{equation}\label{eq-packet-index-user}
\mathcal{Z}^{(1)}_{U_k}=\{W^{(1)}_{n,j}\ :\ k\in \mathcal{U}^{(1)}_{j}, n\in [N] \}.
\end{equation}

From~\eqref{eq:key placement}, we showed that any neighbouring cache-nodes do not cache any common packet. Recall that each packet is stored by $t$ cache-nodes and each cache-node is connected to  $ L$ users. Hence, we have $|\mathcal{U}^{(1)}_{j}|=tL$ for each $j\in [F']$, which means that each  packet is retrievable by $tL $ users.

From \eqref{eq-packet-index-user}, we can  generate the $F'\times K$  user-retrieve  array $\mathbf{U}^{(1)}=(u^{(1)}_{j,k})_{j\in[F'],k\in [K]}$,   where
\begin{align}
\label{eq-array-user-caching}
u^{(1)}_{j, k}=\left\{
\begin{array}{cc}
* & \hbox{if}\ \ k\in \mathcal{U}^{(1)}_{j} \\
null & \hbox{otherwise}
\end{array}.
\right.
\end{align}
In other words, each entry at position $(j,k)$ of $\mathbf{U}^{(1)}$   is star if and only if the packets $W^{(1)}_{n,j}$ for all $n\in [N]$ can be retrieved by user $U_k$.

\begin{remark}\rm
\label{remark-2}
The number of null entries in $j^{th}$ row of $\mathbf{U}^{(1)}$ equals the number of integer entries in the $j^{th}$ row of PDA $\mathbf{P}$ since the number of null entries in each $j^{th}$ row is $K-tL=K-t(L-1)-t=K'-t$.
\hfill $\square$
\end{remark}

After determining $\mathbf{U}^{(1)}$, we can obtain $\mathbf{U}^{(g)}$ where $g\in [K]$ by simply  cyclically  right-shifting $\mathbf{U}^{(1)}$ by $g-1$ positions.

\begin{example}
\label{exam-packets-user-array}
\rm
Let us return to the example in Section~\ref{sub:illustrated example}. From \eqref{eq-user-index}, we have
\begin{align*}
\mathcal{U}^{(1)}_{1}=\{1,2,3,4,5,6\},\ \ \mathcal{U}^{(1)}_{2}=\{1,2,3,5,6,7\},\ \
\mathcal{U}^{(1)}_{3}=\{1,2,3,6,7,8\},\\
\mathcal{U}^{(1)}_{4}=\{2,3,4,5,6,7\},\ \
\mathcal{U}^{(1)}_{5}=\{2,3,4,6,7,8\},\ \ \mathcal{U}^{(1)}_{6}=\{3,4,5,6,7,8\}.
\end{align*}
From \eqref{eq-packet-index-user}, the users can retrieve  the packets from $\mathcal{W}^{(1)}$ as follows,
\begin{align*}
\mathcal{Z}^{(1)}_{U_1}&=\{W^{(1)}_{n,1},\ W^{(1)}_{n,2},\ W^{(1)}_{n,3}\ :\  n\in [8] \}; \\
\mathcal{Z}^{(1)}_{U_2}&=\{W^{(1)}_{n,1},\ W^{(1)}_{n,2},\ W^{(1)}_{n,3},\ W^{(1)}_{n,4},\ W^{(1)}_{n,5}\ :\  n\in [8] \}; \\
\mathcal{Z}^{(1)}_{U_3}&=\{W^{(1)}_{n,1},\ W^{(1)}_{n,2},\ W^{(1)}_{n,3},\ W^{(1)}_{n,4},\ W^{(1)}_{n,5},\ W^{(1)}_{n,6}\ :\  n\in [8] \}; \\
\mathcal{Z}^{(1)}_{U_4}&=\{W^{(1)}_{n,1},\ W^{(1)}_{n,4},\ W^{(1)}_{n,5},\ W^{(1)}_{n,6}\ :\  n\in [8] \}; \\
\mathcal{Z}^{(1)}_{U_5}&=\{W^{(1)}_{n,1},\ W^{(1)}_{n,2},\ W^{(1)}_{n,4},\ W^{(1)}_{n,6}\ :\  n\in [8] \}; \\
\mathcal{Z}^{(1)}_{U_6}&=\{W^{(1)}_{n,1},\ W^{(1)}_{n,2},\ W^{(1)}_{n,3},\ W^{(1)}_{n,4},\ W^{(1)}_{n,5},\ W^{(1)}_{n,6}\ :\  n\in [8] \}; \\
\mathcal{Z}^{(1)}_{U_7}&=\{W^{(1)}_{n,2},\ W^{(1)}_{n,3},\ W^{(1)}_{n,4},\ W^{(1)}_{n,5},\ W^{(1)}_{n,6}\ :\  n\in [8] \}; \\
\mathcal{Z}^{(1)}_{U_8}&=\{W^{(1)}_{n,3},\ W^{(1)}_{n,5},\ W^{(1)}_{n,6}\ :\  n\in [8] \}.
\end{align*}
Hence, we can generate the user-retrieve array $\mathbf{U}^{(1)}$ in Table \ref{tab-caching-1-2-users}, which is   exactly the array $\mathbf{U}^{(1)}$ in Fig. \ref{Fig-construct}.
To get the array $\mathbf{U}^{(2)}$, we right-shift   $\mathbf{U}^{(1)}$ by one position.
It can be also seen that the number of null entries in each row of $\mathbf{U}^{(1)}$ equals the number of integer entries in each row of the $(4,6,3,4)$ PDA $\mathbf{P}$.
\begin{table}[!htbp]
\center
\caption{User-retrieve arrays $\mathbf{U}^{(1)}$ and $\mathbf{U}^{(2)}$.\label{tab-caching-1-2-users}}
\begin{tabular}{|c|cccccccc|}
&\multicolumn{8}{|c|}{User-retrieve array $\mathbf{U}^{(1)}$ for $\mathcal{W}^{(1)}$} \\ \hline
$n\in[8]$         &  $U_1$&$U_2$	&$U_3$	&$U_4$	&$U_5$	&$U_6$	&$U_7$  &	$U_8$	\\ \hline
$W^{(1)}_{n,1}$	&	$*$	&	$*$	&	$*$	&	$*$	&	$*$	&	$*$	&		&		\\
$W^{(1)}_{n,2}$	&	$*$	&	$*$	&	$*$	&		&	$*$	&	$*$	&	$*$	&		\\
$W^{(1)}_{n,3}$	&	$*$	&	$*$	&	$*$	&		&		&	$*$	&	$*$	&	$*$	\\
$W^{(1)}_{n,4}$	&		&	$*$	&	$*$	&	$*$	&	$*$	&	$*$	&	$*$	&		\\
$W^{(1)}_{n,5}$	&		&	$*$	&	$*$	&	$*$	&		&	$*$	&	$*$	&	$*$	\\
$W^{(1)}_{n,6}$	&		&		&	$*$	&	$*$	&	$*$	&	$*$	&	$*$	&	$*$	\\ \hline
\end{tabular}\ \ \ \ \ \ \
\begin{tabular}{|c|cccccccc|}
&\multicolumn{8}{|c|}{User-retrieve array $\mathbf{U}^{(2)}$ for $\mathcal{W}^{(2)}$} \\ \hline
$n\in[8]$         &  $U_1$&$U_2$	&$U_3$	&$U_4$	&$U_5$	&$U_6$	&$U_7$  &	$U_8$	\\ \hline
$W^{(2)}_{n,1}$	&		&	$*$	&	$*$	&	$*$	&	$*$	&	$*$	&	$*$	&		\\
$W^{(2)}_{n,2}$	&		&	$*$	&	$*$	&	$*$	&		&	$*$	&	$*$	&	$*$	\\
$W^{(2)}_{n,3}$	&	$*$	&	$*$	&	$*$	&	$*$	&		&		&	$*$	&	$*$	\\
$W^{(2)}_{n,4}$	&		&		&	$*$	&	$*$	&	$*$	&	$*$	&	$*$	&	$*$	\\
$W^{(2)}_{n,5}$	&	$*$	&		&	$*$	&	$*$	&	$*$	&		&	$*$	&	$*$	\\
$W^{(2)}_{n,6}$	&	$*$	&		&		&	$*$	&	$*$	&	$*$	&	$*$	&	$*$	\\ \hline
\end{tabular}
\end{table}
\hfill $\square$
\end{example}

\subsection{Delivery strategy: Generation of $\mathbf{Q}^{(g)}$ for $g\in [K]$}
\label{subsect-delivery-user}
We also start with $g=1$. By first letting $\mathbf{Q}^{(1)}=\mathbf{U}^{(1)}$, the next step is to fill the nulls in $\mathbf{Q}^{(1)}$ by some integers.
  From Remark \ref{remark-2}, for each $j\in [F']$ we can get a new array $\mathbf{Q}^{(1)}$ as follows. We
replace  the $h^{th}$ (where $h\in [K'-t]$)  null entry in the $j^{th}$ row of $\mathbf{U}^{(1)}$ by  the $h^{th}$  integer entry in the $j^{th}$ row of $\mathbf{P}$.
  Specifically, for each $j\in [F']$, define
\begin{align}
\label{eq-integer-PDA}
\overline{\mathcal{A}}_j=\{k\ :\ p_{j,k}\in [S],k\in[K']\},
\end{align}
i.e., the column label set of $\mathbf{P}$ where the entries in $j^{th}$ row are integers, and
\begin{align}
\label{eq-ineteger-array}
\overline{\mathcal{U}}^{(1)}_j=\{k\ :\ p_{j,k}=null,k\in[K]\},
\end{align}
i.e., the column label set of $\mathbf{U}^{(1)}$ where the entries in $j^{th}$ row are nulls. From Remark \ref{remark-2}, we have
$$
|\overline{\mathcal{A}}_j|=|\overline{\mathcal{U}}_j|=K'-t.
$$
 We then define a one-to-one mapping $\psi_{j}$ from $\overline{\mathcal{U}}^{(1)}_j$ to $\overline{\mathcal{A}}_j$:
\begin{align*}
\psi_{j}(\overline{\mathcal{U}}^{(1)}_j [h])=\overline{\mathcal{A}}_j[h], \ \forall h\in [K'-t], j\in [F'].
\end{align*}
  Then the entry of the new array $\mathbf{Q}^{(1)}=(q^{(1)}_{j,k})_{j\in[F'],k\in[K]}$ can be written as follows.
\begin{align}
\label{eq-array-user-PDA}
q^{(1)}_{j, k}=\left\{
\begin{array}{cc}
s & \hbox{if}\ \ k\in\overline{\mathcal{U}^{(1)}_j}   \text{ and }  p_{j,\psi_{j}(k)}=s; \\
* & \hbox{otherwise.}
\end{array}
\right.
\end{align}
Hence, the alphabet set of the resulting $\mathbf{Q}^{(1)}$ consists of $[S]$ and symbol star.

After determining $\mathbf{Q}^{(1)}$, the delivery procedure in the first round is as follows.
  From Line 9 of Algorithm~\ref{alg:PDA}, for each integer $s\in [S]$ and  $g=1$, the server sends the following multicast message
  \begin{align}
  \bigoplus_{q^{(1)}_{j,k}=s,j\in[F'],k\in [K]}W^{(1)}_{d_{U_k},j},
  \end{align}
if $\mathbf{Q}^{(1)}$ defined in \eqref{eq-array-user-PDA} satisfies the condition C$3$ in Definition \ref{def-PDA}, such that each user can directly decode its required packet from each multicast message transmitted by the server.

Now let us discuss the conditions on $\mathbf{P}$ to guarantee that the $\mathbf{Q}^{(1)}$ defined in \eqref{eq-array-user-PDA} satisfies the condition C$3$ in Definition \ref{def-PDA}. For any integer $s\in [S]$, assume that there exist two different entries, say, $q^{(1)}_{j_1,k_1}=q^{(1)}_{j_2,k_2}=s$. Since the integers in each row of the $(K',F',Z,S)$ PDA $\mathbf{P}$ are different, then all the integers in each row of $\mathbf{Q}^{(1)}$ must be different by \eqref{eq-array-user-PDA}. Then we have $j_1\neq j_2$. From \eqref{eq-ineteger-array} there exist two unique integers $h_1\in [K'-t]$ and $h_2\in [K'-t]$ satisfying $k_1=\overline{\mathcal{U}}^{(1)}_{j_1}[h_1]$ and $k_2=\overline{\mathcal{U}}^{(1)}_{j_2}[h_2]$. So we have $\psi_{j_1}(k_1)=\overline{\mathcal{A}}_{j_1}[h_1]$ and $\psi_{j_2}(k_2)=\overline{\mathcal{A}}_{j_2}[h_2]$. Let $k'_1=\psi_{j_1}(k_1)$ and $k'_2=\psi_{j_2}(k_2)$. From the definition of mapping $\psi_{j_1}$ and $\psi_{j_2}$, we have $p_{j_1,k'_1}=p_{j_2,k'_2}=s$. Let $(\mathcal{A}_{j_1}\bigcup \{k'_1\})[i_1]=k'_1$ and $(\mathcal{A}_{j_2}\bigcup \{k'_2\})[i_2]=k'_2$ for some integers $i_1$ and $i_2$. By \eqref{eq-user-index}, \eqref{eq-array-user-caching} and \eqref{eq-array-user-PDA}, we have $k_1=k'_1+(i_1-1)(L-1)$ and $k_2=k'_2+(i_2-1)(L-1)$. By \eqref{eq-array-user-caching}, $q^{(1)}_{j_1,k_2}=q^{(1)}_{j_2,k_1}=*$ if and only if $k_1\in \mathcal{U}^{(1)}_{j_2}$ and $k_2\in \mathcal{U}^{(1)}_{j_1}$. So
$\mathbf{Q}^{(1)}$ defined in \eqref{eq-array-user-PDA} satisfies the condition C$3$ if and only if $\mathbf{P}$ satisfies the following condition.
\begin{itemize}
\item C$5$. For any two distinct entries $p_{j_1,k'_1}$ and $p_{j_2,k'_2}$, if $p_{j_1,k'_1}=p_{j_2,k'_2}\in[S]$, we have
    \begin{eqnarray*}
  k'_1+(i_1-1)(L-1)\in \mathcal{U}^{(1)}_{j_2}\ \ \text{and}\ \ k'_2+(i_2-1)(L-1)\in \mathcal{U}^{(1)}_{j_1}
    \end{eqnarray*} hold where $k'_1=(\mathcal{A}_{j_1}\bigcup\{k'_1\})[i_1]$ and $k'_2=(\mathcal{A}_{j_2}\bigcup\{k'_2\})[i_2]$ for some integers $i_1$, $i_2\in[t+1]$.
\end{itemize}

Since the server totally sends $SK$ multicast messages of packets in all rounds and each file is divided into $KF'$ packets, the achieved load is
 $R=\frac{SK}{KF'}=\frac{S}{K}$. So the following result can be obtained.
 \begin{proposition}\rm
\label{pro-fundament}
 If a $(K',F',Z,S)$ PDA $\mathbf{P}$ satisfies C$4$ and C$5$, there exists a $(K=K'+t(L-1),L,M,N)$ multiaccess coded caching scheme with the memory ratio $\frac{M}{N}=\frac{K' Z}{KF'}$, subpacketization $F=KF'$ and load $R=\frac{S}{F'}$.
 \hfill $\square$
\end{proposition}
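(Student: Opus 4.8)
The plan is to package the three-step construction of Section~\ref{sec-proof of Theorem 1} into a self-contained argument that checks the four deliverables: valid placement (memory constraint), valid retrieval, correct decoding, and the load count. First I would fix a $(K',F',Z,S)$ PDA $\mathbf{P}$ satisfying C$4$ and C$5$, set $K=K'+t(L-1)$ with $t=K'Z/F'$, and split each file into $K$ parts, each part into $F'$ packets, so that the total subpacketization is $F=KF'$. For the $g^{\text{th}}$ round I would define $\mathbf{C}^{(g)}$ as the cyclic right-shift of $\mathbf{C}^{(1)}$ by $g-1$, where $\mathbf{C}^{(1)}$ is built from $\mathbf{P}$ via the index sets $\mathcal{C}^{(1)}_j=\{\mathcal{A}_j[h]+h(L-1):h\in[t]\}$ of~\eqref{eq-caching-index}. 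The memory check is the easy part: each column of $\mathbf{C}$ is a concatenation of the columns of $\mathbf{C}^{(1)}$, whose total number of stars equals that of $\mathbf{P}$, namely $ZK'$, so the per-node memory ratio is $\frac{ZK'}{KF'}$, as claimed; and inequality~\eqref{eq:key placement} (which uses only $|\mathcal{A}_j[h]-\mathcal{A}_j[h']|\geq 1$) shows any $L$ neighbouring cache-nodes share no packet, hence $|\mathcal{U}^{(1)}_j|=tL$ for every $j$, which in turn (via C$4$ and $K=K'+t(L-1)$) forces the null-count per row of $\mathbf{U}^{(1)}$ to equal the integer-count per row of $\mathbf{P}$ — this is Remark~\ref{remark-2} and it is exactly what makes the one-to-one maps $\psi_j$ well defined.

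The substantive step is verifying that $\mathbf{Q}^{(1)}$ defined by~\eqref{eq-array-user-PDA} is a genuine PDA, i.e.\ satisfies C$1$--C$3$. C$1$ holds because $\mathbf{Q}^{(1)}$ has the same star pattern as $\mathbf{U}^{(1)}$, whose columns have a fixed number of stars; C$2$ holds since every integer of $[S]$ already appears in $\mathbf{P}$ and the maps $\psi_j$ only relocate integers within their rows. For C$3$, suppose $q^{(1)}_{j_1,k_1}=q^{(1)}_{j_2,k_2}=s\in[S]$ with $(j_1,k_1)\neq(j_2,k_2)$. Distinctness of integers within a row of $\mathbf{P}$ propagates to $\mathbf{Q}^{(1)}$, so $j_1\neq j_2$; pulling back through $\psi_{j_1},\psi_{j_2}$ gives columns $k'_1,k'_2$ of $\mathbf{P}$ with $p_{j_1,k'_1}=p_{j_2,k'_2}=s$, and C$3$ for $\mathbf{P}$ gives $p_{j_1,k'_2}=p_{j_2,k'_1}=*$, i.e.\ $k'_2\in\mathcal{A}_{j_1}$, $k'_1\in\mathcal{A}_{j_2}$; translating positions via $k_i=k'_i+(i_i-1)(L-1)$ and invoking C$5$ yields $k_1\in\mathcal{U}^{(1)}_{j_2}$ and $k_2\in\mathcal{U}^{(1)}_{j_1}$, which by~\eqref{eq-array-user-PDA} means $q^{(1)}_{j_1,k_2}=q^{(1)}_{j_2,k_1}=*$, establishing C$3$b; and $k_1\neq k_2$ follows because a common column would force $k'_1=k'_2$ and $i_1=i_2$, contradicting $(j_1,k_1)\neq(j_2,k_2)$ after one more look at the row indices. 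I expect this bookkeeping — keeping straight the four column indices $k_1,k_2,k'_1,k'_2$, the shift amounts $i_1,i_2$, and which of C$3$/C$5$ is being invoked where — to be the main obstacle; everything else is routine.

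Once $\mathbf{Q}^{(1)}$ is a PDA, decoding is immediate: for each $s\in[S]$ the server sends $\bigoplus_{q^{(1)}_{j,k}=s}W^{(1)}_{d_{U_k},j}$, and C$3$ guarantees that each participating user $U_k$ has retrieved (from its $L$ connected cache-nodes) every other packet in the XOR, since a star in column $k$ of $\mathbf{Q}^{(1)}$ is a star in column $k$ of $\mathbf{U}^{(1)}$, which by~\eqref{eq-packet-index-user} means the packet is in $\mathcal{Z}^{(1)}_{U_k}$. By the cyclic symmetry, $\mathbf{Q}^{(g)}$ is the shift of $\mathbf{Q}^{(1)}$ by $g-1$ and the round-$g$ delivery works verbatim on $\mathcal{W}^{(g)}$, so all $K$ rounds succeed for any demand $\mathbf{d}$. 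Finally, the server sends $S$ messages per round over $K$ rounds, each message consisting of one packet ($1/F=1/(KF')$ of a file) per contributing user, so the worst-case load is $R=\frac{SK}{KF'}=\frac{S}{F'}$, completing the proof.
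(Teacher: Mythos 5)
Your proposal is correct and follows essentially the same route as the paper: build $\mathbf{C}^{(1)}$ from $\mathbf{P}$ via $\mathcal{C}^{(1)}_j$, verify the memory ratio and the $L$-neighbour disjointness through~\eqref{eq:key placement}, use Remark~\ref{remark-2} to define the bijections $\psi_j$, and reduce condition C$3$ for $\mathbf{Q}^{(1)}$ to C$5$ for $\mathbf{P}$, with the load count $SK/(KF')=S/F'$. The only inaccuracy is your side remark that $\mathbf{Q}^{(1)}$ satisfies C$1$ because the columns of $\mathbf{U}^{(1)}$ have a fixed number of stars --- they do not (see Example~\ref{exam-packets-user-array}) --- but this is immaterial, since decoding uses only C$3$ and the memory constraint is verified through $\mathbf{C}$.
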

From Proposition \ref{pro-fundament}, we can use the proposed transformation approach to extend a PDA satisfying C$4$ and C$5$ to the multiaccess caching problem. Let us then consider the MN PDA, and we have the following lemma whose proof can be found in Appendix~\ref{sec:proof of MN lemma}.
\begin{lemma}
\label{lemma-integer-PDA}
\rm
 The MN PDA satisfies C$4$ and C$5$.
\hfill $\square$
\end{lemma}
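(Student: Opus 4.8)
The plan is to verify the two conditions C$4$ and C$5$ directly from the explicit description of the MN PDA in~\eqref{Eqn_Def_AN}. Condition C$4$ is immediate: in the MN PDA the rows are indexed by the $t$-subsets $\mathcal{T}\in\binom{[K']}{t}$, and the entry $p_{\mathcal{T},k}$ equals $*$ precisely when $k\in\mathcal{T}$; hence every row contains exactly $t=\tfrac{K'Z}{F'}$ stars. So the only real work is C$5$. First I would set up notation: fix an integer $s\in[S]$ appearing in two distinct cells $p_{\mathcal{T}_1,k'_1}=p_{\mathcal{T}_2,k'_2}=s$. By the defining property of the MN PDA (property C$3$ of Definition~\ref{def-PDA}, which the MN PDA satisfies by Lemma~\ref{le-MN}), we have $\mathcal{T}_1\cup\{k'_1\}=\mathcal{T}_2\cup\{k'_2\}=:\mathcal{S}$, a common $(t+1)$-subset of $[K']$ with $\phi(\mathcal{S})=s$; moreover $k'_1\notin\mathcal{T}_1$, $k'_2\notin\mathcal{T}_2$, $k'_1\neq k'_2$, and $\mathcal{T}_1=\mathcal{S}\setminus\{k'_1\}$, $\mathcal{T}_2=\mathcal{S}\setminus\{k'_2\}$. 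The indices $i_1,i_2\in[t+1]$ in the statement are then just the ranks of $k'_1$ and $k'_2$ inside $\mathcal{S}$ (since $\mathcal{A}_{\mathcal{T}_1}\cup\{k'_1\}=\mathcal{S}$).

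Next I would translate the membership claim $k'_1+(i_1-1)(L-1)\in\mathcal{U}^{(1)}_{\mathcal{T}_2}$ into a statement purely about $\mathcal{S}$. Recall from~\eqref{eq-user-index} that $\mathcal{U}^{(1)}_{\mathcal{T}_2}=\bigcup_{h\in[t]}\{\mathcal{T}_2[h]+(h-1)(L-1),\ldots,\mathcal{T}_2[h]+h(L-1)\}$, i.e.\ $\mathcal{U}^{(1)}_{\mathcal{T}_2}$ is a union of $t$ length-$L$ blocks, the $h$-th block starting at $\mathcal{T}_2[h]+(h-1)(L-1)$. Writing $\mathcal{S}$ in increasing order, element $\mathcal{S}[r]$ shifted by $(r-1)(L-1)$ lands inside $\mathcal{U}^{(1)}_{\mathcal{T}_1}$ (resp.\ $\mathcal{U}^{(1)}_{\mathcal{T}_2}$) iff the deletion of $k'_1$ (resp.\ $k'_2$) from $\mathcal{S}$ shifts element $\mathcal{S}[r]$ by at most one block-width; concretely, deleting an element of rank $i$ from $\mathcal{S}$ decreases the rank of every later element by one, so $\mathcal{S}[r]$ in $\mathcal{T}_2=\mathcal{S}\setminus\{k'_2\}$ has rank $r$ if $r<i_2$ and $r-1$ if $r>i_2$. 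Applying this with $r=i_1$ (so $\mathcal{S}[i_1]=k'_1$), the point $k'_1+(i_1-1)(L-1)$ equals $\mathcal{T}_2[h]+(i_1-1)(L-1)$ where $h=i_1$ if $i_1<i_2$ and $h=i_1-1$ if $i_1>i_2$; in the first case it is the left endpoint of the $i_1$-th block of $\mathcal{U}^{(1)}_{\mathcal{T}_2}$, and in the second case it is the right endpoint $\mathcal{T}_2[i_1-1]+(i_1-1)(L-1)$ of the $(i_1-1)$-th block — in both cases it lies in $\mathcal{U}^{(1)}_{\mathcal{T}_2}$. The symmetric argument with the roles of $1$ and $2$ swapped gives $k'_2+(i_2-1)(L-1)\in\mathcal{U}^{(1)}_{\mathcal{T}_1}$, which completes C$5$.

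The main obstacle is the bookkeeping in that last paragraph: one must be careful that the block endpoints computed after deleting $k'_2$ really coincide with $k'_1+(i_1-1)(L-1)$, and in particular that no off-by-one error arises from whether $i_1<i_2$ or $i_1>i_2$, and from the convention $\text{Mod}(b,a)\in\{1,\ldots,a\}$ once one unwraps the cyclic shift back in the full construction (here, working with $[K']$ and~\eqref{eq-user-index} directly, no wrap-around occurs because $\mathcal{S}\subseteq[K']$ and the shifted coordinates stay in $[K]$ by the inequality $K=K'+t(L-1)$). I would isolate the deletion-rank identity as a small combinatorial sublemma (deleting the rank-$i$ element of a $(t+1)$-set and tracking how ranks and the shifted coordinates move) and then apply it twice; everything else is routine.
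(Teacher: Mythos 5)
Your proposal is correct and follows essentially the same route as the paper's proof: both reduce C$5$ to the observation that $\mathcal{A}_{j_1}\cup\{k'_1\}=\mathcal{A}_{j_2}\cup\{k'_2\}=\mathcal{S}$, track how the rank of $k'_1$ in $\mathcal{S}$ changes after deleting $k'_2$ (splitting on whether $i_1<i_2$ or $i_1>i_2$, equivalently $k'_1<k'_2$ or $k'_1>k'_2$), and identify $k'_1+(i_1-1)(L-1)$ as the left or right endpoint of the corresponding length-$L$ block of $\mathcal{U}^{(1)}_{j_2}$. The rank-shift bookkeeping you flag as the main obstacle is exactly the computation the paper carries out, and your version of it is correct.
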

Let $\mathbf{P}$ be a $\left(K',{K'\choose t},{K'-1\choose t-1}, {K'\choose t+1} \right)$ MN PDA.
From Proposition \ref{pro-fundament}, we can get a multiaccess coded caching scheme for the $(K,L,M,N)$ multiaccess  caching problem, with the subpacketization $K{K'\choose t}=K{K-t(L-1)\choose t}$ and load $R_1=\frac{{K'\choose t+1}K}{K{K'\choose t}}=\frac{K'-t}{t+1}=\frac{K-tL}{t+1}$.
 Hence, we proved Theorem~\ref{th-1}.

\begin{example}
\label{exam-delivery-array}
\rm
Let us return to the example in Section~\ref{sub:illustrated example} again. From \eqref{eq-integer-PDA} and \eqref{eq-ineteger-array} we have
\begin{align*}
\overline{\mathcal{A}}_1=\{ 3,4\}, \ \ \ \overline{\mathcal{A}}_2=\{2,4\}, \ \ \ \overline{\mathcal{A}}_3=\{2,3\}, \ \ \ \overline{\mathcal{A}}_4=\{ 1,4\}, \ \ \ \overline{\mathcal{A}}_5=\{1,3\}, \ \ \ \overline{\mathcal{A}}_6=\{1,2\},
\end{align*}
and
\begin{align*}
\overline{\mathcal{U}}^{(1)}_1=\{7,8\}, \ \ \ \overline{\mathcal{U}}^{(1)}_2=\{4,8\}, \ \ \ \overline{\mathcal{U}}^{(1)}_3=\{4,5\}, \ \ \ \overline{\mathcal{U}}^{(1)}_4=\{1,8\}, \ \ \ \overline{\mathcal{U}}^{(1)}_5=\{1,5\}, \ \ \ \overline{\mathcal{U}}^{(1)}_6=\{1,2\},
\end{align*}
respectively. Then the following mappings can be obtained.
\begin{align*}
\begin{array}{cccccc}
  \psi_1(7)=3, & \psi_1(8)=4,& \psi_2(4)=2, & \psi_2(8)=4, & \psi_3(4)=2, & \psi_3(5)=3, \\
  \psi_4(1)=1, & \psi_4(8)=4, & \psi_5(1)=1, & \psi_5(5)=3, & \psi_6(1)=1, & \psi_6(2)=2.
\end{array}
\end{align*}
From \eqref{eq-array-user-PDA}, the user-delivery array  $\mathbf{Q}^{(1)}$  can be obtained in Table \ref{tab-delivery-1-2-users},
which is exactly the array $\mathbf{Q}^{(1)}$ in Fig. \ref{Fig-construct}.  To get the array $\mathbf{Q}^{(2)}$, we right-shift   $\mathbf{Q}^{(1)}$ by one position.
\begin{table}[!htbp]
\center
\caption{User-delivery arrays $\mathbf{Q}^{(1)}$ and $\mathbf{Q}^{(2)}$.\label{tab-delivery-1-2-users}}
\begin{tabular}{|c|cccccccc|}
&\multicolumn{8}{|c|}{User-delivery array $\mathbf{Q}^{(1)}$ for $\mathcal{W}^{(1)}$} \\ \hline
$n\in[8]$         &  $1$&$2$	&$3$	&$4$	&$5$	&$6$	&$7$  &	$8$	\\ \hline
$W^{(1)}_{n,1}$	&	$*$	&	$*$	&	$*$	&	$*$	&	$*$	&	$*$	&	$1$	&	$2$	\\
$W^{(1)}_{n,2}$	&	$*$	&	$*$	&	$*$	&	$1$	&	$*$	&	$*$	&	$*$	&	$3$	\\
$W^{(1)}_{n,3}$	&	$*$	&	$*$	&	$*$	&	$2$	&	$3$	&	$*$	&	$*$	&	$*$	\\
$W^{(1)}_{n,4}$	&	$1$	&	$*$	&	$*$	&	$*$	&	$*$	&	$*$	&	$*$	&	$4$	\\
$W^{(1)}_{n,5}$	&	$2$	&	$*$	&	$*$	&	$*$	&	$4$	&	$*$	&	$*$	&	$*$	\\
$W^{(1)}_{n,6}$	&	$3$	&	$4$	&	$*$	&	$*$	&	$*$	&	$*$	&	$*$	&	$*$	\\ \hline
\end{tabular}\ \ \ \ \ \ \
\begin{tabular}{|c|cccccccc|}
&\multicolumn{8}{|c|}{User-delivery array $\mathbf{Q}^{(2)}$ for $\mathcal{W}^{(2)}$} \\ \hline
$n\in[8]$         &  $1$&$2$	&$3$	&$4$	&$5$	&$6$	&$7$  &	$8$	\\ \hline
$W^{(2)}_{n,1}$	&	$2$	&	$*$	&	$*$	&	$*$	&	$*$	&	$*$	&	$*$	&	$1$	\\
$W^{(2)}_{n,2}$	&	$3$	&	$*$	&	$*$	&	$*$	&	$1$	&	$*$	&	$*$	&	$*$	\\
$W^{(2)}_{n,3}$	&	$*$	&	$*$	&	$*$	&	$*$	&	$2$	&	$3$	&	$*$	&	$*$	\\
$W^{(2)}_{n,4}$	&	$4$	&	$1$	&	$*$	&	$*$	&	$*$	&	$*$	&	$*$	&	$*$	\\
$W^{(2)}_{n,5}$	&	$*$	&	$2$	&	$*$	&	$*$	&	$*$	&	$4$	&	$*$	&	$*$	\\
$W^{(2)}_{n,6}$	&	$*$	&	$3$	&	$4$	&	$*$	&	$*$	&	$*$	&	$*$	&	$*$	\\ \hline
\end{tabular}
\end{table}
 It can be  checked that both of  $\mathbf{Q}^{(1)}$ and $\mathbf{Q}^{(2)}$ satisfy the condition C$3$ of Definition \ref{def-PDA}. 
Then the load is $R_1=\frac{4\times 8}{6\times 8}=\frac{2}{3}$.
\hfill $\square$
\end{example}

From Proposition \ref{pro-fundament}, we can use a PDA satisfying C$4$ and C$5$ with a low row number to obtain a multiaccess coded caching scheme with  a low subpacketization. In Appendix \ref{sec:proof of Partition theorem}, we show that the PDA in \cite{YCTC} satisfies  C$4$ and C$5$. Consequently Theorem~\ref{th-3} can be obtained. Similarly, we can also show that the PDAs in \cite{CJYT,SZG,YTCC} satisfy C$4$ and C$5$.

\section{Further Improved Transformation Approach}
\label{sec-further}
 In this section, we will show that based on the placement strategy proposed in Section~\ref{subsec-caching-nodes}, we can further reduce the transmission load of the scheme proposed in Section~\ref{subsect-delivery-user},  if some
additional conditions given in the following Proposition~\ref{pro-improve} are satisfied.

Given a $(K',F',Z,S)$ PDA, assume $t=|\mathcal{A}_1|=\cdots=|\mathcal{A}_F'|$. For each $h\in [t]$, let
$a_h=\min\{\mathcal{A}_j[h] : j\in[F']\}$ and $b_h=\max\{\mathcal{A}_j[h] : j\in[F']\}$.
\begin{proposition}\rm
\label{pro-improve}
Given a $(K',F',Z,S)$ PDA $\mathbf{P}$ satisfying C$4$ and C$5$, if $\lambda_h=a_h+L-b_h>0$ holds for each $h\in [t ]$, then for any positive integers $L$, $M$ and $N$, there exists a $(K=K'+t(L-1),L,M,N)$ multiaccess coded caching scheme with the memory ratio $\frac{M}{N}=\frac{K'Z}{KF'}$, subpacketization $F=KF'$ and transmission load $R=\frac{S}{F'}\cdot \frac{K-\sum_{h=1}^{t }\lambda_h}{K}$.
\hfill $\square$
\end{proposition}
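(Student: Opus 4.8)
The plan is to reuse the three-step construction ($\mathbf{C}^{(g)} \to \mathbf{U}^{(g)} \to \mathbf{Q}^{(g)}$) from Section~\ref{sec-proof of Theorem 1} unchanged in the placement, and to modify only the delivery phase by \emph{merging} some of the $SK$ multicast messages into fewer transmissions. Since the placement and the arrays $\mathbf{C}^{(g)},\mathbf{U}^{(g)},\mathbf{Q}^{(g)}$ are identical to those in Proposition~\ref{pro-fundament}, the memory ratio $\frac{M}{N}=\frac{K'Z}{KF'}$ and subpacketization $F=KF'$ carry over verbatim, and the scheme is still correct; the only thing to prove is that the new load equals $\frac{S}{F'}\cdot\frac{K-\sum_{h=1}^{t}\lambda_h}{K}$. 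First I would recall that the $K$ rounds are cyclic shifts of one another: the message transmitted for integer $s\in[S]$ in round $g$ is $\bigoplus_{q^{(1)}_{j,k}=s} W^{(g)}_{d_{U_{\mathrm{Mod}(k+g-1,K)}}, j}$, so altogether the delivery is indexed by pairs $(s,g)\in[S]\times[K]$, i.e.\ $SK$ messages over $KF'$ packets, giving load $S/F'$.

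The key observation I would exploit is that for a fixed integer $s\in[S]$, as $g$ ranges over $[K]$, the supports of the $K$ messages associated to $s$ are cyclic translates of one fixed support $\mathcal{B}_s=\{(j,k): q^{(1)}_{j,k}=s\}$ inside the cylinder $[F']\times \mathbb{Z}_K$. The hypothesis $\lambda_h=a_h+L-b_h>0$ for every $h\in[t]$ is exactly what guarantees that, for each $s$, the shifted supports $\mathcal{B}_s+g$ do \emph{not} overlap in a controlled way for $\sum_h\lambda_h$ of the shifts, so that $\sum_h \lambda_h$ of the $K$ messages $\{$message$(s,g):g\in[K]\}$ can be written as partial sums of the others and hence need not be transmitted: concretely, I expect each row index $h$ of the PDA to contribute a ``slack'' interval of length $\lambda_h$ of cache-node positions that are simultaneously retrievable by all users needing that row's content, and combining the $t$ independent slacks removes $\sum_h\lambda_h$ transmissions per value of $s$ is \emph{not} quite it --- rather, the merging happens globally and removes $\sum_h\lambda_h$ out of the $K$ rounds' worth of messages for each $s$. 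Thus the number of transmitted messages drops from $SK$ to $S(K-\sum_{h=1}^{t}\lambda_h)$, and dividing by $KF'$ packets gives the claimed load $\frac{S}{F'}\cdot\frac{K-\sum_{h=1}^{t}\lambda_h}{K}$.

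The steps, in order, are: (i) fix the placement/arrays exactly as in Section~\ref{subsec-caching-nodes}--\ref{subsect-delivery-user} and record that correctness and the memory/subpacketization claims are inherited; (ii) for each $s\in[S]$, describe the $K$ translates of $\mathcal{B}_s$ and identify, using $a_h,b_h$, which packets $W^{(g)}_{d_{U_k},j}$ appearing in message$(s,g)$ are already retrievable (as side information) by every user that message is meant to serve, across a window of $\lambda_h$ consecutive shifts; (iii) show that this lets one express $\sum_h\lambda_h$ of the $K$ messages tied to $s$ as linear combinations of the remaining $K-\sum_h\lambda_h$, so the server may skip them; (iv) verify that each skipped message can still be reconstructed by the users who needed it (decodability), and that no demand is left unserved; (v) count: $S(K-\sum_h\lambda_h)$ transmissions over $KF'$ packets.

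The main obstacle I anticipate is step (iii)--(iv): making precise, in terms of the column-index statistics $a_h,b_h$ of the generic PDA satisfying C4 and C5, exactly which multicast messages become redundant and writing down the explicit linear dependence among them --- essentially showing that the ``boundary'' rounds where a cache-node position falls outside $[b_h, a_h+L]$ produce messages whose every summand is already available to its intended recipients, so that a telescoping/prefix-sum relation among the $K$ cyclic shifts eliminates $\lambda_h$ of them per row block. Verifying that these eliminations for different $h\in[t]$ are independent (so the counts add to $\sum_h\lambda_h$ rather than merely $\max_h\lambda_h$), and that condition C5 still guarantees C3-type decodability for the surviving compressed messages, is the delicate bookkeeping I would spend the most care on; the rest is the same cyclic-shift argument already used for Theorem~\ref{th-1}.
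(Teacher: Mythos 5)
Your placement, the memory-ratio and subpacketization claims, and the final count $S\bigl(K-\sum_{h=1}^{t}\lambda_h\bigr)$ all agree with the paper, but the mechanism you propose in steps (iii)--(iv) is not the right one and would fail. You want to express $\sum_h\lambda_h$ of the messages as ``linear combinations of the remaining ones, so the server may skip them.'' The $KS$ multicast messages are XORs of pairwise different collections of packets and are linearly independent as functions of the library, so none of them is a linear combination of the others. More fundamentally, no multicast message can ever be skipped outright: by construction, every user served by message $(s,g)$ sits at a null of $\mathbf{U}^{(g)}$ in the corresponding row, i.e.\ it cannot retrieve its own requested packet appearing in that message, so dropping the message leaves that demand unserved. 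Your step (ii) also misidentifies where the slack lives: it is a set of \emph{users}, not of cache-node positions, and those users are served by \emph{no} message of that round rather than by redundant ones.

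The paper exploits the slack differently. For each $h\in[t]$, the intersection $\bigcap_{j\in[F']}\{\mathcal{A}_j[h]+(h-1)(L-1),\ldots,\mathcal{A}_j[h]+h(L-1)\}$ equals $\{b_h+(h-1)(L-1),\ldots,a_h+h(L-1)\}$, an interval of exactly $\lambda_h$ users, and these $t$ intervals are disjoint; hence $\sum_h\lambda_h$ users can retrieve \emph{every} packet of $\mathcal{W}^{(1)}$ and can therefore locally reconstruct all $S$ round-$1$ messages (their columns of $\mathbf{Q}^{(1)}$ are all stars). By cyclic symmetry, each user can reconstruct $\sum_h\lambda_h S$ of the $KS$ messages --- a \emph{different} subset for each user. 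The compression is then index-coding style: the server transmits $(K-\sum_h\lambda_h)S$ random (or MDS-coded) linear combinations of all $KS$ multicast messages, and each user, combining these with the $\sum_h\lambda_h S$ messages it already knows, solves for the rest. This cross-message coding step is the essential new ingredient of the proposition and is absent from your proposal.
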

\begin{proof}
 From Proposition \ref{pro-fundament}, we have a $(K=K'+t(L-1),L,M,N)$ multiaccess coded caching scheme with the memory ratio $\frac{M}{N}=\frac{K' Z}{KF'}$, subpacketization $F=KF'$ and load $R=\frac{S}{F'}$. Now let us consider the transmission load by further compressing the  multicast messages.

For any integer $j\in [F']$, we know that
$$\mathcal{U}^{(1)}_{j}=\bigcup_{h\in[t]}\left\{\mathcal{A}_{j}[h]+(h-1)(L-1), \ldots,\mathcal{A}_{j}[h]+h(L-1)\right\}$$
is the set of users in $[K]$ who can retrieve the packets $W^{(1)}_{n,j}$ for all $n\in[N]$. So for any two different positive integers $j_1$, $j_2\in[F']$, a user $U_k$, $k\in [K]$ can retrieve the packets $W^{(1)}_{n,j_1}$ and $W^{(1)}_{n,j_2}$ for all $n\in[N]$ if and only if
$k\in \mathcal{U}^{(1)}_{j_1}\bigcap \mathcal{U}^{(1)}_{j_2}$.
\begin{subequations}
For each integer $h\in[t]$,
\begin{align}
&\bigcap\limits_{j\in[F']}\{\mathcal{A}_{j}[h]+(h-1)(L-1), \ldots,\mathcal{A}_{j}[h]+h(L-1)\} \label{eq:first term intersection} \\
&= \{a_h+(h-1)(L-1), \ldots,a_h+h(L-1)\}\bigcap \{b_h+(h-1)(L-1), \ldots,b_h+h(L-1)\} \label{eq:subseteq second} \\
&= \{b_h+(h-1)(L-1), \ldots,a_h+h(L-1)\}, \label{eq:imply that}
\end{align}
\end{subequations}
always holds where~\eqref{eq:subseteq second} comes from that $a_h=\min\{\mathcal{A}_j[h] : j\in[F']\}$ and $b_h=\max\{\mathcal{A}_j[h] : j\in[F']\}$. From \eqref{eq:first term intersection} and \eqref{eq:imply that}, each user from $\{b_h+(h-1)(L-1), \ldots,a_h+h(L-1)\}$ can retrieve all the packets of the first part. From~\eqref{eq:imply that}, it can be seen that there are exactly
\begin{align*}
\sum_{h=1}^{t }|\{b_h+(h-1)(L-1), \ldots,a_h+h(L-1)\}|=\sum_{h=1}^{t }\lambda_h
\end{align*}users who can re-construct all the $S$ multicast messages for $\mathbf{Q}^{(1)}$ from their retrieved cache-nodes. By the symmetry, considering all $\mathbf{Q}^{(g)}$ where $g\in [K]$,
among all the $KS$ multicast messages in the delivery phase, each user can re-construct $\sum_{h=1}^{t }\lambda_h S$ multicast messages. Hence, we can transmit $(K-\sum_{h=1}^{t }\lambda_h)S$ random linear combinations of the $KS$ multicast messages. \footnote{\label{footnote:RLC} Instead of random linear combinations, we can also use the parity check matrix of Minnimum Distance Seperable (MDS) code or Cauchy matrix as in~\cite{Lint}, to encode the $KS$ multicast messages. In each of these matrices whose dimension is dimension $m_1\times m_2$ where $m_1\leq m_2$, every $m_1$ columns are linearly independent. }
Then the transmission load is
\begin{eqnarray*}
R=\frac{(K-\sum_{h=1}^{t }\lambda_h)S}{KF'}=\frac{S}{F'}\cdot\frac{K-\sum_{h=1}^{t }\lambda_h}{K}.
\end{eqnarray*}
Then the proof is completed.
\end{proof}
 \begin{example}\rm
\label{exam-reducing}
Let us   return to the example in Section~\ref{sub:illustrated example} with $K=N=8$ and $L=3$, where the caching procedure is divided into $8$ rounds. Given the $(4,6,3,4)$ PDA listed in Fig. \ref{Fig-construct}, from \eqref{eq-A_j} we have $a_1=1$, $b_1=3$, $a_2=2$, $b_2=4$ and then $\lambda_1=\lambda_2=1>0$. From Proposition \ref{pro-improve} there exists a $(K,L,M,N)=(8,3,2,8)$ multiaccess coded caching scheme with the subpacketization $F=KF'=48$ and transmission load $R=\frac{S}{F'}\cdot \frac{K-\sum_{h=1}^{t }\lambda_h}{K}=\frac{4}{6}\cdot\frac{8-2}{8}=\frac{1}{2}$.  This transmission load is lower than the transmission load  of the scheme in Section~\ref{sub:illustrated example} (i.e., $\frac{2}{3}$).
\end{example}

By \eqref{Eqn_Def_AN}, in MN PDA we have $|\mathcal{A}_{1}|=\ldots=\mathcal{A}_{{K'\choose t}}=t$ and $a_{h}=h$, $b_{h}=K'-(t-h)$ for each $h\in [t]$. For any positive integer $L$, if
\begin{align*}
t+1<K'<t+L
\end{align*}
which implies $tL+1<K<tL+L$ in Theorem \ref{th-1}, we have
\begin{align*}
\lambda_h=a_h+L-b_h=h+L-(K'-(t-h))=t+L-K'>0.
\end{align*}
 From Theorem \ref{th-1} and Proposition \ref{pro-improve}, the following result can be directly obtained.
 \begin{corollary}\rm
\label{cor-1}
For the $(K,L,M,N)$ centralized multiaccess coded caching problem, if $M=\frac{N t}{K}$ where $t\in \left\{0,1,\ldots, \left\lfloor
\frac{K}{L} \right\rfloor \right\}$ and $tL+1<K<tL+L$,
the following load is achievable,
\begin{align}
R_{3}= \frac{K-tL}{t+1} \frac{(t+1)(K-tL)}{K}=\frac{(K-tL)^2}{K}. \label{eq:R2}
\end{align}
\end{corollary}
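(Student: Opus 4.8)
The plan is to specialize Proposition~\ref{pro-improve} to the MN PDA; essentially all the machinery is already in place from Theorem~\ref{th-1} and Lemma~\ref{lemma-integer-PDA}, so the proof amounts to a verification of one inequality and an algebraic simplification. First I would take $\mathbf{P}$ to be the $\left(K',\binom{K'}{t},\binom{K'-1}{t-1},\binom{K'}{t+1}\right)$ MN PDA with $K'=K-t(L-1)$ and $t=\frac{KM}{N}$; since $\frac{Z}{F'}=\frac{t}{K'}$, the memory ratio $\frac{K'Z}{KF'}=\frac{t}{K}=\frac{M}{N}$ is exactly as required, and by Lemma~\ref{lemma-integer-PDA} this PDA satisfies C$4$ and C$5$, so Proposition~\ref{pro-fundament} already produces a multiaccess scheme. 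It remains only to check the extra hypothesis $\lambda_h=a_h+L-b_h>0$ for all $h\in[t]$ and then to compute the compressed load.

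For the bookkeeping: by the construction~\eqref{Eqn_Def_AN} the stars in each row of the MN PDA form an arbitrary $t$-subset of $[K']$, so the $h$-th smallest star coordinate ranges over $[h:K'-(t-h)]$, giving $a_h=h$, $b_h=K'-(t-h)$, and hence $\lambda_h=h+L-(K'-t+h)=t+L-K'$, which does not depend on $h$. Translating the hypothesis, $K'=K-tL+t$ makes $tL+1<K<tL+L$ equivalent to $t+1<K'<t+L$, and the upper bound $K'<t+L$ is precisely $\lambda_h>0$. Thus Proposition~\ref{pro-improve} applies with $\sum_{h=1}^{t}\lambda_h=t(t+L-K')$.

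Finally I would substitute into $R=\frac{S}{F'}\cdot\frac{K-\sum_h\lambda_h}{K}$. The first factor is $\frac{\binom{K'}{t+1}}{\binom{K'}{t}}=\frac{K'-t}{t+1}=\frac{K-tL}{t+1}$, using $K'-t=K-tL$; for the second factor I would plug $K'=K-tL+t$ into $K-t(t+L-K')$ and collect terms, obtaining $(t+1)(K-tL)$. Multiplying the two factors gives $R=\frac{K-tL}{t+1}\cdot\frac{(t+1)(K-tL)}{K}=\frac{(K-tL)^2}{K}$, which is~\eqref{eq:R2}. The only genuinely delicate point is the parameter bookkeeping around $K'$: one must confirm that $tL+1<K<tL+L$ is the exact condition under which all $t$ of the $\lambda_h$ are simultaneously positive, so that the full compression of Proposition~\ref{pro-improve} (rather than a partial one) is legitimate, and that the cancellation of $(t+1)$ in the final product is exact. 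Everything else is a direct appeal to the already established results.
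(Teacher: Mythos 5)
Your proposal is correct and follows essentially the same route as the paper: specialize Proposition~\ref{pro-improve} to the MN PDA via Lemma~\ref{lemma-integer-PDA}, compute $a_h=h$, $b_h=K'-(t-h)$ so that $\lambda_h=t+L-K'$, observe that $tL+1<K<tL+L$ is exactly $t+1<K'<t+L$ (making all $\lambda_h$ positive), and simplify $K-t(t+L-K')=(t+1)(K-tL)$ to obtain \eqref{eq:R2}. The parameter bookkeeping you flag as the delicate point is carried out identically in the paper.
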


We can check that the transmission load in Corollary \ref{cor-1} is exactly the transmission load in Lemma \ref{lamm-2}. This implies that when $tL+1<K<tL+L$, the scheme in \cite{RK} has lower transmission load. As shown in~\eqref{eq:comparison to RK}, when $(t+1)L<K$ the proposed scheme in Theorem \ref{th-1} has strictly lower transmission load than the scheme in \cite{RK}.

Now let us consider the scheme in Theorem \ref{th-3}. We can check that in the PDA  from Appendix \ref{sec:proof of Partition theorem},  $|\mathcal{A}_{1}|=\ldots=\mathcal{A}_{q^{m-1}}=m$ and $b_h-a_{h}=q-1$ for each $h\in [m]$. For any positive integer $L$, if
$q\leq L$ in Theorem \ref{th-3}, we have
\begin{align*}
\lambda_h=L-q+1>0.
\end{align*}
From Theorem \ref{th-3} and Proposition \ref{pro-improve}, we can obtain the following result.
 \begin{corollary}\rm
\label{cor-2}
For any positive integers $m$, $q\geq 2$ and $L$ with $L\geq q$, there exits an $(m(q+L-1),L,M,N)$   multiaccess coded caching scheme with $\frac{M}{N}=\frac{1}{q+L-1}$, the subpacketization $F=m(q+L-1)q^{m-1}$ and transmission load $R_{4}= \frac{2(q-1)^2}{q+L-1}$.
\end{corollary}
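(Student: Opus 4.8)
The plan is to instantiate Proposition~\ref{pro-improve} with the specific PDA underlying Theorem~\ref{th-3} and then simplify the resulting load expression. First I would recall from the proof of Theorem~\ref{th-3} in Appendix~\ref{sec:proof of Partition theorem} that the scheme there is built from a $(K',F',Z,S)$ PDA of~\cite{YCTC} satisfying conditions C$4$ and C$5$, with $K'=mq$, $F'=q^{m-1}$, $S=(q-1)q^{m-1}$, and $|\mathcal{A}_1|=\cdots=|\mathcal{A}_{q^{m-1}}|=t=m$. Consequently the multiaccess system produced via Proposition~\ref{pro-fundament} has $K=K'+t(L-1)=mq+m(L-1)=m(q+L-1)$ users, memory ratio $\frac{M}{N}=\frac{K'Z}{KF'}=\frac{1}{q+L-1}$, and subpacketization $F=KF'=m(q+L-1)q^{m-1}$, exactly as claimed in the statement.

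Next I would pin down the quantities $a_h=\min\{\mathcal{A}_j[h]:j\in[F']\}$ and $b_h=\max\{\mathcal{A}_j[h]:j\in[F']\}$ for this PDA. As recorded in the paragraph preceding the corollary, the $h$-th smallest star position in a row ranges over an interval of length $q$, so $b_h-a_h=q-1$ for every $h\in[m]$. Hence $\lambda_h=a_h+L-b_h=L-(b_h-a_h)=L-q+1$ for each $h\in[m]$, and the hypothesis $\lambda_h>0$ of Proposition~\ref{pro-improve} is equivalent to $L\ge q$, which is precisely the assumption of the corollary. Therefore Proposition~\ref{pro-improve} applies and yields a $(K=m(q+L-1),L,M,N)$ multiaccess scheme with the stated memory ratio and subpacketization and with transmission load
\begin{equation*}
R_{4}=\frac{S}{F'}\cdot\frac{K-\sum_{h=1}^{t}\lambda_h}{K}.
\end{equation*}

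Finally, I would substitute $\frac{S}{F'}=\frac{(q-1)q^{m-1}}{q^{m-1}}=q-1$, $t=m$, $\sum_{h=1}^{m}\lambda_h=m(L-q+1)$, and $K=m(q+L-1)$ to obtain
\begin{equation*}
R_{4}=(q-1)\cdot\frac{m(q+L-1)-m(L-q+1)}{m(q+L-1)}=(q-1)\cdot\frac{2(q-1)}{q+L-1}=\frac{2(q-1)^2}{q+L-1},
\end{equation*}
which is the desired expression. I do not anticipate any genuine obstacle: the only non-routine ingredient is the structural description of the~\cite{YCTC} PDA (its uniform row-star sets and the interval lengths $b_h-a_h=q-1$), which is already established in Appendix~\ref{sec:proof of Partition theorem}; once that is in hand, the corollary follows immediately from Proposition~\ref{pro-improve} together with elementary algebra.
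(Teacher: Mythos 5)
Your proposal is correct and follows essentially the same route as the paper: identify the parameters $K'=mq$, $F'=q^{m-1}$, $S=(q-1)q^{m-1}$, $t=m$ of the PDA underlying Theorem~\ref{th-3}, observe $b_h-a_h=q-1$ so that $\lambda_h=L-q+1>0$ exactly when $L\geq q$, and then apply Proposition~\ref{pro-improve} and simplify. The algebra leading to $R_4=\frac{2(q-1)^2}{q+L-1}$ is verified correctly.
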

\section{Conclusion}\label{sec-conclusion}
In this paper,  we consider the multiaccess coded caching problem and propose a novel transformation approach to extend any shared-link PDA satisfying two conditions to the considered problem. The resulting scheme has the maximum local caching gain and the same coded caching gain as  the original PDA. By applying  our transformation approach into the MN PDA, the delivery scheme was proved to be approximately optimal  when $K$ is sufficiently large, under the constraint of the used placement. Finally we also provided  an improved transformation approach to further reduced the load.

On-going works include the generalization of the proposed transformation approach to more general multiaccess topologies, such as the line multiaccess topology where users are connected to different numbers of  cache-nodes,   the planar  multiaccess topology, and the multiaccess coded caching problem with  distance-dependent retrieval costs.

\appendices
\section{Proof of Theorem~\ref{thm:converse}}
\label{sec:converse proof}
For any integer $q \in [tL:K]$, we define that
\begin{align}
\mathcal{S}_{q}= \{\mathcal{T} \subseteq [K-q+1:K] : |\mathcal{T}|=t, \text{Mod}(j_1-j_2,K)\geq L, \ j_1,j_2 \in \mathcal{T} \text{ and } j_1 \neq j_2  \}.\label{eq:def of Sq}
\end{align}
It can be computed that
\begin{align}
|\mathcal{S}_{q}|=\binom{q-X}{t-1}\frac{q}{t}, \label{eq:length of Sq}
\end{align}
 where $X:=t L-t+1$.
Thus under the placement in~\eqref{eq:our subfile division} for each $\mathcal{T} \in \mathcal{S}_{K}$ and $i\in [N]$, there is a subfile $W_{i,\mathcal{T}}$ of $W_i$ cached by the cache-nodes with indices in $\mathcal{T}$. Each file has $F$ packets and  each subfile has $\frac{F t}{\binom{K-tL+t-1}{t-1} K}$ packets.

Consider the demand vector $\mathbf{d}=(1,\ldots,K)$.  For each $k\in [K]$,
 each subfile in $\cup_{\mathcal{T} \in \mathcal{S}_{K-k-L+1}}W_{k, \mathcal{T}} $ is demanded by user $k$ and cannot be retrieved by the users in $[k]$.
By the converse bound in~\cite{WTP,YMA}, we have
\begin{subequations}
\begin{align}
&R^{\star}_1  F \geq \sum_{k\in [K]} \sum_{\mathcal{T} \in \mathcal{S}_{K-k-L+1}} |W_{q, \mathcal{T}}| \label{eq:acyclic converse}   \\
&  =  \frac{F t}{\binom{K-tL+t-1}{t-1} K} \sum_{k\in [K]}  \binom{K-k-L+1-X}{t-1}\frac{K-k-L+1}{t} \label{eq:acyclic converse step 2}\\
& = \frac{F }{\binom{K-tL+t-1}{t-1} K} \sum_{k\in [K]} (K-k-L+1 ) \binom{K-k-L+1-X}{t-1} \\
&= \frac{F }{\binom{K-tL+t-1}{t-1} K} \left( \binom{K-L-X}{t-1}+(K-L-1) \binom{K-L-X+1}{t} - \binom{K-L-X}{t+1} \right), \label{eq:from pascal}
\end{align}
  \end{subequations}
where~\eqref{eq:acyclic converse step 2} comes from~\eqref{eq:length of Sq} and~\eqref{eq:from pascal} comes from the Pascal's triangle. Hence, we prove~\eqref{eq:converse bound}.

Next, we focus on the case where $K \gg tL$. In this case, we have
\begin{subequations}
\begin{align}
& \frac{ R_{1} }{R^{\star}_1 } \leq \frac{(K-t L)\binom{K-tL+t-1}{t-1} K }{(t+1) \left( \binom{K-L-X}{t-1}+(K-L-1) \binom{K-L-X+1}{t} - \binom{K-L-X}{t+1} \right)} \\
&\leq \frac{(K-t L)\binom{K-tL+t-1}{t-1} K }{(t+1) \left(  (K-L-1) \binom{K-L-X+1}{t} - \binom{K-L-X}{t+1} \right)}\\
&\leq \frac{(K-t L)\binom{K-tL+t-1}{t-1} K }{(t+1) \left(    (K-L-1) \binom{K-L-X+1}{t} -   \frac{K-L-1}{t+1} \binom{K-L-X+1}{t} \right)}\\
& = \frac{(K-t L)\binom{K-tL+t-1}{t-1} K }{ t    (K-L-1) \binom{K-L-X+1}{t}  } \\
& = \frac{K   (K-t L) }{(K-L-1)  (K-L-X+1)}\cdot \frac{K-tL+t-1}{K-L-X}\cdot \frac{K-tL+t-2}{K-L-X-1}\cdots \frac{K-tL+1}{K-L-X-t+2}
\\
&\leq \frac{K   (K-t L) }{(K-L-1)  (K-L-X+1)} \left( \frac{K-tL+1}{K-L-X-t+2} \right)^{t-1} \\
&= \frac{K   (K-t L) }{(K-L-1)  (K-L-tL+t)} \left( 1 + \frac{L}{K-L-tL+1} \right)^{t-1} \\
&\approx \frac{K   (K-t L) }{(K-L-1)  (K-L-tL+t)}  \left( 1 + \frac{L(t-1)}{K-L-tL+1} \right) \label{eq:binom app}\\
&\approx 1,
\end{align}
  \end{subequations}
where~\eqref{eq:binom app} comes from the binomial approximation. Hence, we prove~\eqref{eq:approximate opt}.
\section{Proof of~\eqref{eq:comparison to HKD},~\eqref{eq:comparison to RK} and~\eqref{eq:comparison to SR}}
\label{sec:proof of cor1}
\subsection{Proof of~\eqref{eq:comparison to HKD}}
\label{sec:proof of HKD}
We focus on the case where $L $ does not divide $K$.
When $M_1=\frac{N}{K} \lfloor\frac{K}{2L}\rfloor$,  we have
\begin{align}
\frac{R_{\text{HKD}}}{R_1}&= \frac{\frac{K-\lfloor\frac{K}{2L}\rfloor}{\lfloor\frac{K}{2L}\rfloor+1}\cdot\frac{\frac{K}{L}-t}
{\frac{K}{L}-\lfloor\frac{K}{2L}\rfloor}}
{\frac{K-tL}{t+1}}
=\frac{K-\lfloor\frac{K}{2L}\rfloor}{K-L\lfloor\frac{K}{2L}\rfloor}
\cdot\frac{t+1}{\lfloor\frac{K}{2L}\rfloor+1} > \frac{K-\lfloor\frac{K}{2L}\rfloor}{K-L\lfloor\frac{K}{2L}\rfloor} .\label{eq:larger than 2}
\end{align}
Let us then focus on a memory size $M_1 \leq M$. The achieved load by the scheme in Lemma~\ref{lamm-3} is obtained by memory-sharing between the memory-load tradeoff points $\left( \frac{N}{K} \lfloor\frac{K}{2L}\rfloor, R_{\text{HKD}} \right)$ and $\left(\frac{N}{L},0 \right)$.
In addition,   the achieved load by the proposed scheme is no worse than the load obtained by   memory-sharing between the memory-load tradeoff points $\left(\frac{N}{K} \lfloor\frac{K}{2L} \rfloor, R_{1} \right)$ and $\left(\frac{N}{L},0 \right)$.
Additionally with~\eqref{eq:larger than 2}, we can prove that  with $M$, the multiplicative gap between the achieved loads by  Lemma~\ref{lamm-3} and  by the proposed scheme is larger than $\frac{K-\lfloor\frac{K}{2L}\rfloor}{K-L\lfloor\frac{K}{2L}\rfloor}$, which coincides with~\eqref{eq:comparison to HKD}.
\subsection{Proof of~\eqref{eq:comparison to RK}}
\label{sec:proof of RK}
 Let us focus on the case where $K>(t+1)L$.  From~\eqref{eq:lem2 load} and~\eqref{eq:load R1}, we have
\begin{align*}
\frac{R_1}{R_{\text{RK}}}&=\frac{K-tL}{t+1}\cdot \frac{K}{(K-tL)^2}
 =\frac{1}{(t+1)(1- \frac{tL}{K})}  < \frac{1}{(t+1)(1- \frac{t }{t+1})} =1.
\end{align*}
\subsection{Proof of~\eqref{eq:comparison to SR}}
\label{sec:proof of SR}
 From  Lemma \ref{lamm-3}, it can be seen that
\begin{align*}
R_{\text{SR}}\geq \sum_{h=\frac{K-tL+2}{2}}^{K-tL}\frac{2}{1+\lceil \frac{tL}{h}\rceil}.
\end{align*}
We focus on the non-trivial corner points at the memory sizes $M=\frac{N t}{K}$ where $t \in \left\{ 1,\ldots,  \left\lfloor
\frac{K}{L} \right\rfloor  \right\}$.
We first show that when $\frac{KM}{2N}\left(1-\frac{ML}{N}\right)\geq  1$, we have $R_1 < R_{\text{SR}}$. More precisely, we have
\begin{align}
& \sum_{h=\frac{K-tL+2}{2}}^{K-tL}\frac{2}{1+\lceil \frac{tL}{h}\rceil}\geq\sum_{h=\frac{K-tL+2}{2}}^{K-tL}\frac{2}{2+ \frac{tL}{h}} \nonumber \\
&= \overbrace{\underbrace{\frac{1}{2+ \frac{tL}{\frac{K-tL+2}{2}}}+\frac{1}{2+ \frac{tL}{\frac{K-tL+2}{2}}}}_{2 \text{ terms}}+\underbrace{\frac{1}{2+ \frac{tL}{\frac{K-tL+2}{2}+1}}+
\frac{1}{2+ \frac{tL}{\frac{K-tL+2}{2}+1}}}_{2 \text{ terms}}+\ldots+
\underbrace{\frac{1}{2+ \frac{tL}{K-tL}}+\frac{1}{2+ \frac{tL}{K-tL}}}_{2 \text{ terms}}}^{K-tL \text{ terms}} \nonumber \\
&> (K-tL)\frac{1}{2+ \frac{tL}{\frac{K-tL+2}{2}}}. \label{eq:step 1 R1 better than SR}
\end{align}
From~\eqref{eq:step 1 R1 better than SR}, it can be seen that
\begin{align*}
&\frac{R_{\text{SR}}}{R_{1}}>\frac{(K-tL)\frac{1}{2+ \frac{tL}{\frac{K-tL+2}{2}}}}{\frac{K-tL}{t+1}} =\frac{t+1}{2+ \frac{tL}{\frac{K-tL+2}{2}}}
=\frac{\frac{KM}{N}+1}{2}\cdot\frac{K-\frac{KML}{N}+2}{K+2}\\
&=\frac{\frac{KM}{N}+1}{2}\cdot\left(1-\frac{KML}{N(K+2)}\right)  >\frac{\frac{KM}{N}+1}{2}\cdot\left(1-\frac{ML}{N}\right) \\
&>\frac{KM}{2N}\left(1-\frac{ML}{N}\right) \geq 1.
\end{align*}
Hence, we showed that when   $\frac{KM}{2N}\left(1-\frac{ML}{N}\right)\geq  1$, it holds that $R_1 < R_{\text{SR}}$.

Let us then consider the regime where $\frac{KM}{2N}\left(1-\frac{ML}{N}\right)< 1$. In this case, we have either
$\frac{M}{N}>\frac{1}{2L}+\sqrt{\frac{1}{4L^2}-\frac{2}{KL}}$ or  $\frac{M}{N}<\frac{1}{2L}-\sqrt{\frac{1}{4L^2}-\frac{2}{KL}}$. When $K \gg L$, we have
\begin{align*}
&\frac{1}{2L}+\sqrt{\frac{1}{4L^2}-\frac{2}{KL}} \to \frac{1}{L}; \\
&\text{and } \frac{1}{2L}-\sqrt{\frac{1}{4L^2}-\frac{2}{KL}} \to 0.
\end{align*}
Recall that $M \leq \frac{N}{L}$. Hence, we can prove that when  $K \gg L$, the regime $\frac{KM}{2N}\left(1-\frac{ML}{N}\right)< 1$ does not exist.

In conclusion,  if $K \gg L$,  we can prove that
$R_1 <R_{\text{SR}}$ for $M=\frac{N t}{K}$ where $t \in \left\{ 1,\ldots,  \left\lfloor
\frac{K}{L} \right\rfloor  \right\}$.
\section{Proof of Lemma~\ref{lemma-integer-PDA}}
\label{sec:proof of MN lemma}
 We can easily check that the MN PDA $\mathbf{P}$ satisfies C$4$ by the construction of the MN PDA in~\eqref{Eqn_Def_AN}. We will then prove that C$5$ holds. For any integer $s\in [S]$, assume that there exist two different integers $j_1$, $j_2\in [F']$ and two different integers $k'_1$, $k'_2\in [K']$, satisfying $p_{j_1,k'_1}=p_{j_2,k'_2}=s$. From the construction of MN PDA in \eqref{Eqn_Def_AN} we have $\mathcal{S}=\mathcal{A}_{j_1}\cup\{k'_1\}=\mathcal{A}_{j_2}\cup\{k'_2\}$ and $k'_1\not\in\mathcal{A}_{j_1}$, $k'_2\not\in\mathcal{A}_{j_2}$. This implies that $k'_1\in \mathcal{A}_{j_2}$ and $k'_2\in \mathcal{A}_{j_1}$. Let
$$\mathcal{A}_{j_2}[h'_2]=k'_1,\ \  \mathcal{A}_{j_1}[h'_1]=k'_2,\ \ k'_1=\mathcal{S}[i_1],\ \ k'_2=\mathcal{S}[i_2],$$
for some positive integers $h'_1$, $h'_2\in[t]$ and  $i_1$, $i_2\in[t+1]$.
When $k'_2<k'_1$, we have $i_2=h'_1$ and $i_1=h'_2+1$. Then we have
\begin{eqnarray*}
&&k'_1+(i_1-1)(L-1)=\mathcal{A}_{j_2}[h'_2]+h'_2(L-1)\\
&\in&\{\mathcal{A}_{j_2}[h'_2]+(h'_2-1)(L-1),\ldots,\mathcal{A}_{j_2}[h'_2]+h'_2(L-1)\} \\
&\subseteq&\mathcal{U}^{(1)}_{j_2};
\end{eqnarray*}
\begin{eqnarray*}&& k'_2+(i_2-1)(L-1)=\mathcal{A}_{j_1}[h'_1]+(h'_1-1)(L-1)\\%
&\in&\{\mathcal{A}_{j_1}[h'_1]+(h'_1-1)(L-1),\ldots,\mathcal{A}_{j_1}[h'_1]+h'_1(L-1)\} \\
&\subseteq&\mathcal{U}^{(1)}_{j_1}.
\end{eqnarray*} So C$5$ holds. Similarly when $k'_2>k'_1$, we can also show $k'_1+(i_1-1)(L-1)\in\mathcal{U}^{(1)}_{j_2}$ and $k'_2+(i_2-1)(L-1)\in\mathcal{U}^{(1)}_{j_1}$. Then the proof is completed.

\section{Proof of Theorem~\ref{th-3}}
\label{sec:proof of Partition theorem}
By setting  $t=1$ in~\cite[Construction 1]{CWZW}, we have the following shared-link PDA construction.
\begin{construction}\rm(\cite[Construction 1]{CWZW})
\label{construction}
For any positive integers $m$ and $q\geq 2$, let
\begin{eqnarray}
\label{eq-generator-s=m-1}
\begin{split}
\mathcal{F'}&=\left\{\left(f_{1},f_{2},\ldots,f_{m-1}, f_m\right)\ :\ f_1,f_2,\ldots,f_{m-1}\in [q],f_m=\hbox{Mod}\left(\sum_{i=1}^{m-1} f_i,q\right)\right\},\\
\mathcal{K}'&=\{(\delta,b)\ :\ \delta\in [m], b\in [q]\}.
\end{split}\end{eqnarray}
Then an  $|\mathcal{F'}|\times |\mathcal{K}'|$ array $\mathbf{P}=(p_{{\bf f},(\delta,b)})$, ${\bf f}\in \mathcal{F'}$, $(\delta,b)\in \mathcal{K}'$, can be defined in the following way
\begin{eqnarray}
\label{eq-constr-PDA}
p_{{\bf f},(\delta,b)}=\left\{
\begin{array}{ll}
({\bf e},n_{\bf e}) & \textrm{if}~f_{\delta}\neq b\\
* & \textrm{otherwise}
\end{array}
\right.
\end{eqnarray}
where ${\bf e}=(e_1,e_{2},\ldots,e_{m})\in[q]^m$ such that \begin{eqnarray}
\label{eq-putting integer}
e_i=\left\{
\begin{array}{ll}
b & \textrm{if}\ i=\delta\\[0.2cm]
f_i & \textrm{otherwise} \end{array}
\right.
\end{eqnarray} and $n_{\bf e}$ is the occurrence order of vector ${\bf e}$ occurs in column $(\delta,b)$.
\hfill $\square$
\end{construction}
The following PDA proposed in~\cite[Theorem 6]{CWZW}  can be directly obtained from the above construction.
\begin{lemma}\rm(\cite[Theorem 6]{CWZW})
\label{lem-property-integer}
The array $\mathbf{P}$ generated by Construction \ref{construction} is an $(mq,q^{m-1},q^{m-2},(q-1)q^{m-1})$ PDA. Furthermore,
\begin{itemize}
\item Each row has exactly $m$ stars;%
\item For each row ${\bf f}\in \mathcal{F'}$ and each $\delta \in[m]$, there is exactly one star and $q-1$ vector entries in the columns $(\delta,b)$ for all $b\in[q]$.
\end{itemize}
\hfill $\square$
\end{lemma}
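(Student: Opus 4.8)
The plan is to read off every claim directly from the explicit entry rule~\eqref{eq-constr-PDA}, exploiting the fact that $\mathcal{F}'$ is precisely the set of vectors $\mathbf{e}\in[q]^m$ obeying the single ``parity'' equation $e_m=\mathrm{Mod}(\sum_{i=1}^{m-1}e_i,q)$, so that $[q]^m\setminus\mathcal{F}'$ has $q^m-q^{m-1}=(q-1)q^{m-1}$ elements. The two size parameters are immediate: the number of columns is $K=|\mathcal{K}'|=mq$ and the number of rows is $F=|\mathcal{F}'|=q^{m-1}$. The real work is to count the stars per column (to get $Z$), to count the distinct symbols (to get $S$ and property C2), and to verify property C3 of Definition~\ref{def-PDA}.

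First I would dispatch the two ``Furthermore'' items together with condition C1. Fix a row $\mathbf{f}$: by~\eqref{eq-constr-PDA} the entry in column $(\delta,b)$ is a star exactly when $f_\delta=b$. Hence among the $q$ columns $(\delta,1),\dots,(\delta,q)$ with a fixed $\delta\in[m]$ there is precisely one star (at $b=f_\delta$) and $q-1$ vector entries, which is the second bullet; summing over the $m$ values of $\delta$ gives exactly $m$ stars in each row, the first bullet. For C1 I instead fix a column $(\delta,b)$ and count rows $\mathbf{f}\in\mathcal{F}'$ with $f_\delta=b$: if $\delta\in[m-1]$, fixing $f_\delta=b$ leaves $m-2$ free coordinates and determines $f_m$, giving $q^{m-2}$; if $\delta=m$, the relation $\mathrm{Mod}(\sum_{i=1}^{m-1}f_i,q)=b$ is one equation on $(f_1,\dots,f_{m-1})$, again leaving $q^{m-2}$ solutions. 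Thus every column carries exactly $Z=q^{m-2}$ stars.

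The heart of the argument is the symbol count and C3, and the clean statement I would establish is the dichotomy: a vector $\mathbf{e}\in[q]^m$ occurs as a non-star entry \emph{if and only if} $\mathbf{e}\notin\mathcal{F}'$, in which case it occurs exactly once in each of the $m$ columns $(\delta,e_\delta)$, $\delta\in[m]$, and nowhere else. Indeed $\mathbf{e}$ can appear in a column $(\delta,b)$ only if $b=e_\delta$, in a row $\mathbf{f}$ agreeing with $\mathbf{e}$ off coordinate $\delta$ with $f_\delta\ne e_\delta$; imposing $\mathbf{f}\in\mathcal{F}'$ pins down a unique candidate row, and a short check of the parity equation shows this candidate genuinely satisfies $f_\delta\ne e_\delta$ exactly when $\mathbf{e}\notin\mathcal{F}'$ (whereas $\mathbf{e}\in\mathcal{F}'$ forces $f_\delta=e_\delta$, i.e.\ a star). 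Since each occurring vector then appears once per relevant column, the occurrence index $n_{\mathbf{e}}$ is always $1$, so distinct symbols are in bijection with $[q]^m\setminus\mathcal{F}'$, yielding $S=(q-1)q^{m-1}$ and C2 simultaneously. For C3 I take the $m$ cells sharing a common symbol $\mathbf{e}$, namely $\bigl(\mathbf{f}^{(\delta)},(\delta,e_\delta)\bigr)$ for $\delta\in[m]$: their columns differ because the $\delta$'s differ; their rows differ because $\mathbf{f}^{(\delta_1)}$ and $\mathbf{f}^{(\delta_2)}$ disagree at coordinate $\delta_2$ (one equals $e_{\delta_2}$, the other does not); and each cross cell $\bigl(\mathbf{f}^{(\delta_1)},(\delta_2,e_{\delta_2})\bigr)$ is a star since $\mathbf{f}^{(\delta_1)}$ agrees with $\mathbf{e}$ at coordinate $\delta_2$. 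This is exactly the $2\times2$ pattern demanded by C3.

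The main obstacle I anticipate is precisely the symbol-counting step: correctly handling the parity constraint defining $\mathcal{F}'$, separating the case $\delta=m$ from $\delta\in[m-1]$, and proving the dichotomy ``$\mathbf{e}$ appears $\iff\mathbf{e}\notin\mathcal{F}'$.'' Once this dichotomy and the one-occurrence-per-column fact are in hand, the value of $S$, property C2, and the structure underpinning C3 all follow at once, so the rest of the verification is routine bookkeeping.
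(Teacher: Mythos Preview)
Your argument is correct and complete. Note, however, that the paper does not supply its own proof of this lemma: it is quoted verbatim from \cite[Theorem~6]{CWZW} and closed with $\square$, so there is no in-paper proof to compare against. Your direct verification from the entry rule~\eqref{eq-constr-PDA}---in particular the dichotomy that a vector $\mathbf{e}\in[q]^m$ appears as a non-star entry iff $\mathbf{e}\notin\mathcal{F}'$, and then exactly once in each column $(\delta,e_\delta)$, forcing $n_{\mathbf{e}}\equiv 1$---is exactly the kind of self-contained argument one would give, and it cleanly delivers $Z$, $S$, C1--C3, and both bullet points at once.
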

It is worth noting that the above $\mathbf{P}$ was first constructed in~\cite[Construction A]{YCTC}. However the rule of defining integer entries in Construction 1 in \cite{CWZW} is much simpler than that of Construction A in \cite{YCTC}. In order to prove our claim easier, we use the notions in \cite{CWZW}.

From the first property in Lemma \ref{lem-property-integer}, $\mathbf{P}$ satisfies C$4$ in Proposition \ref{pro-fundament}. Since each vector $(\delta,b)$, $\delta\in [m]$ and $b\in[q]$, can be uniquely represented by an integer $k=(\delta-1)q+b$. So for convenience of our proof, we will not distinguish to vector $(\delta,b)$ and the integer $(\delta-1)q+b$ in the following. According to \eqref{eq-constr-PDA}, the column label set of $\mathbf{P}$, where the entries in ${\bf f}^{\text{th}}$ row, ${\bf f}\in\mathcal{F}$, are stars, defined in \eqref{eq-PDA-star-set} can be written as
\begin{eqnarray*}
\mathcal{A}_{{\bf f}}&=&\{(\delta-1) q+f_{\delta}:\ \delta\in [m]\}.
\end{eqnarray*}
Then the set $\mathcal{U}^{(1)}_{{\bf f}}$ defined in \eqref{eq-user-index} can be written as follows.
\begin{eqnarray}
\label{eq-partition-U}
\mathcal{U}^{(1)}_{{\bf f}}=\{(\delta -1)L +(\delta-1)(q-1)+f_{\delta},\ldots,\delta L +(\delta-1)(q-1)+f_{\delta}-1:\ \delta\in [m]\}.
\end{eqnarray}

Now let us consider C$5$ in Proposition \ref{pro-fundament}. For any two different vectors ${\bf f}$, ${\bf f}'\in \mathcal{F}'$ and two different integers $k'_1$, $k'_2\in \mathcal{K}'$, assume that $p_{{\bf f},k'_1}=p_{{\bf f}',k'_2}$ are not star entries. Let
$$k'_1=(\delta_1-1)q+b_1,\ \ k'_2=(\delta_2-1)q+b_2,\ \ k'_1=(\mathcal{A}_{{\bf f}}\cup \{k'_1\})[i_1],\ \ k'_2=(\mathcal{A}_{{\bf f}'}\cup \{k'_2\})[i_2],$$
for some integers $\delta_1$, $\delta_2\in[m]$ and $b_1$, $b_2\in[q]$ and $i_1$, $i_2\in [m+1]$. From the second property in Lemma \ref{lem-property-integer}, until $k'_1$ we have
\begin{eqnarray*}\label{eq-k'_1-star}
\left\{\begin{array}{cc}
i_1=\delta_1+1\ \ \ \ \ & \hbox{if } b_1>f_{\delta_1}\\[0.2cm]
i_1=\delta_1 \ \ \ \ \ \ \ \ \ \ &\hbox{if } b_1<f_{\delta_1}
\end{array}\right.
\end{eqnarray*}
star entries and then
\begin{eqnarray*}\label{eq-k_1-value}
k'_1+(i_1-1)(L-1)=\left\{\begin{array}{cc}
\delta_1 L+(\delta_1-1)(q-1)+b_1-1\ \ \ & \hbox{if } b_1>f_{\delta_1}\\[0.2cm]
(\delta_1-1)L+(\delta_1-1)(q-1)+b_1\ \ &\hbox{if } b_1<f_{\delta_1}
\end{array}.\right.
\end{eqnarray*}
Similarly we can get
\begin{eqnarray*}\label{eq-k_2-value}
k'_2+(i_2-1)(L-1)=\left\{\begin{array}{cc}
\delta_2L+(\delta_2-1)(q-1)+b_2-1\ \ \ & \hbox{if } b_2>f'_{\delta_2}\\[0.2cm]
(\delta_2-1)L+(\delta_2-1)(q-1)+b_2\ \ &\hbox{if } b_2<f_{\delta_2}
\end{array}.\right.
\end{eqnarray*}Since $\mathbf{P}$ is a PDA, $p_{{\bf f},k'_2}=p_{{\bf f}',k'_1}=*$ always holds by C$3$ in Definition \ref{def-PDA}. Then $b_1=f'_{\delta_1}$ and $b_2=f_{\delta_2}$ can be derived by \eqref{eq-constr-PDA}. By \eqref{eq-partition-U} we have
\begin{eqnarray*}
&&k'_1+(i_1-1)(L-1)\\
&\in&\{(\delta_1 -1)L+(\delta_1 -1)(q-1)+b ,\delta_1 L+(\delta_1 -1)(q-1)+b -1\}\\
&=&\{(\delta_1 -1)L+(\delta_1 -1)(q-1)+f'_{\delta_1 },\delta_1 L+(\delta_1 -1)(q-1)+f'_{\delta_1 }-1\}\\
&\subseteq&\{(\delta_1-1)L +(\delta_1-1)(q-1)+f'_{\delta_1},\ldots,\delta_1  L +(\delta_1 -1)(q-1)+f'_{\delta_1}-1\}\subseteq \mathcal{U}^{(1)}_{{\bf f}'};
\end{eqnarray*}
\begin{eqnarray*}
&&k'_2+(i_2-1)(L-1)\\
&\in&\{(\delta_2-1)L+(\delta_2-1)(q-1)+b_2,\delta_2L+(\delta_2-1)(q-1)+b_2-1\}\\
&=&\{(\delta_2-1)L+(\delta_2-1)(q-1)+f_{\delta_2},\delta_2L+(\delta_2-1)(q-1)+f_{\delta_2}-1\}\\
&\subseteq&\{(\delta_2 -1)L +(\delta_2-1)(q-1)+f_{\delta_2},\ldots,\delta_2 L +(\delta_2-1)(q-1)+f_{\delta_2}-1\}\subseteq \mathcal{U}^{(1)}_{{\bf f} }.
\end{eqnarray*}So C$5$ holds. From Proposition \ref{pro-fundament}, our claim can be obtained. Then the proof is completed.

\end{document}